\documentclass[format=acmsmall, review=false, screen=true]{acmart}
\acmJournal{PACMPL}



\usepackage{booktabs}   

\usepackage{hyperref}
\usepackage{amsmath,amssymb,latexsym,amstext}
\usepackage{array}
\usepackage{url}
\usepackage{comment}
\usepackage{listings}
\usepackage{showexpl} 

\usepackage{enumitem}
\usepackage{booktabs}

\usepackage{stmaryrd}
\usepackage{xcolor}
\usepackage{mathtools}
\usepackage{apptools}
\usepackage{chngcntr}
\usepackage{scalerel}
\usepackage{xspace}
\usepackage{mathpartir}
\usepackage{graphicx}
\usepackage{caption}
\usepackage{csvsimple}
\usepackage{pdfpages}
\usepackage{tikz}
\usepackage{anyfontsize}
\usepackage{listings}

\def\withcolor{}

\ifdefined\withcolor
	\definecolor{haskellblue}{rgb}{0.0, 0.0, 1.0}
	\definecolor{gray_ulisses}{gray}{0.55}
	\definecolor{castanho_ulisses}{rgb}{0.0,0.4,0.0}
	\definecolor{preto_ulisses}{rgb}{0.41,0.20,0.04}
	\definecolor{green_ulisses}{rgb}{0.8,0.0,0.8}
\else
	\definecolor{haskellblue}{gray}{0.1}
	\definecolor{haskellred}{gray}{0.1}
	\definecolor{gray_ulisses}{gray}{0.1}
	\definecolor{castanho_ulisses}{gray}{0.1}
	\definecolor{preto_ulisses}{gray}{0.1}
	\definecolor{green_ulisses}{gray}{0.1}
\fi

\def\incodesize{\normalsize}
\def\codesize{\normalsize}

\lstdefinelanguage{HaskellUlisses} {
	basicstyle=\ttfamily\codesize,
	sensitive=true,
	mathescape=true,
	morecomment=[l][\color{gray_ulisses}\ttfamily\codesize]{--},
	morestring=[b]",
	stringstyle=\color{haskellred},
	showstringspaces=false,
	numberstyle=\codesize,
	numberblanklines=true,
	showspaces=false,
	breaklines=true,
	showtabs=false,
	emph=
	{[1]
		FilePath,IOError,abs,acos,acosh,all,and,any,appendFile,approxRational,asTypeOf,asin,
		asinh,atan,atan2,atanh,basicIORun,break,catch,ceiling,chr,compare,concat,concatMap,
		const,cos,cosh,curry,cycle,decodeFloat,denominator,digitToInt,div,divMod,drop,
		dropWhile,either,elem,encodeFloat,enumFrom,enumFromThen,enumFromThenTo,enumFromTo,
		error,even,exp,exponent,fail,filter,flip,floatDigits,floatRadix,floatRange,floor,
		fmap,foldl,foldl1,foldr,foldr1,fromDouble,fromEnum,fromInt,fromInteger,
		fromRational,fst,gcd,getChar,getContents,getLine,hd,tl,head,id,inRange,index,init,intToDigit,
		interact,ioError,isAlpha,isAlphaNum,isAscii,isControl,isDenormalized,isDigit,isHexDigit,
		isIEEE,isInfinite,isLower,isNaN,isNegativeZero,isOctDigit,isPrint,isSpace,isUpper,iterate,
		last,lcm,length,lex,lexDigits,lexLitChar,lines,log,logBase,lookup,map,mapM,mapM_,max,
		maxBound,maximum,maybe,min,minBound,minimum,mod,negate,not,notElem,numerator,odd,
		or,pi,pred,primExitWith,print,product,properFraction,putChar,putStr,putStrLn,quot,
		quotRem,range,rangeSize,read,readDec,readFile,readFloat,readHex,readIO,readInt,readList,readLitChar,
		readLn,readOct,readParen,readSigned,reads,readsPrec,realToFrac,recip,rem,repeat,replicate,
		reverse,round,scaleFloat,scanl,scanl1,scanr,scanr1,seq,sequence,sequence_,show,showChar,showInt,
		showList,showLitChar,showParen,showSigned,showString,shows,showsPrec,significand,signum,sin,
		sinh,snd,span,splitAt,sqrt,subtract,succ,sum,tail,take,takeWhile,tan,tanh,threadToIOResult,toEnum,
		toInt,toInteger,toLower,toRational,toUpper,truncate,uncurry,undefined,unlines,until,unwords,unzip,
		unzip3,userError,words,writeFile,zip,zip3,zipWith,zipWith3,listArray,doParse,for,initTo,
        maxEvens,create,get,set,initialize,idVec,fastFib,fibMemo,
        insert,union,split,size,fromList,initUpto,trim,quickSort,insertSort,append,upperCase,
        copy, group, doDownLoop, mapAccumR, peekByteOff,
        pokeByteOff,spanByte,
        good, bad, foo, explode,
        fib, ack,
        tLen,
        memcpy,writeChar,unsafeWrite,unsafeFreeze,
        singleton,
				ex1,ex2,ex3,ex4,incr,inc,dec,compose
	},
	emphstyle={[1]\color{haskellblue}},
	emph=
	{[2]
		Bool,Char,Double,Either,Float,IO,Integer,Int,Maybe,Ordering,Rational,Ratio,ReadS,ShowS,String,
		Word8,Nat,NonZero,Nat64,Text,ByteString,ByteStringSZ,ByteStringN,
    Ptr,ForeignPtr,CSize
    InPacket,Tree,Prop,TreeEq,TreeLt,Vec,
    NullTerm,IncrList,DecrList,UniqList,BST,MinHeap,MaxHeap,
    PtrN,ByteStringN,ByteStringEq,VO,ByteStringsEq,ByteStringNE,
		List,Even
	},
	emphstyle={[2]\color{castanho_ulisses}},
	emph=
	{[3]
		case,class,data,deriving,do,else,if,return,def,import,in,infixl,infixr,instance,let,
		requires,ensures,assume,val
		module,measure,predicate,of,primitive,then,refinement,type,where,lazy
	},
	emphstyle={[3]\color{preto_ulisses}\textbf},
	emph=
	{[4]
		quot,rem,div,mod,elem,notElem,seq
	},
	emphstyle={[4]\color{castanho_ulisses}\textbf},
	emph=
	{[5]
		PS,Tip,Node,EQ,False,GT,Just,LT,Left,Nothing,Right,True,Show,Eq,Ord,Num,
		Cons,Nil
	},
	emphstyle={[5]\color{green_ulisses}}
}



\lstnewenvironment{fscode}
{\lstset{language=HaskellUlisses,basicstyle=\ttfamily\small}}
{}

\lstnewenvironment{code}
{\lstset{language=HaskellUlisses}}
{}

\lstnewenvironment{codebox}
{\lstset{language=HaskellUlisses,frame=tlbr,basicstyle=\ttfamily\codesize}}
{}

\lstMakeShortInline[language=HaskellUlisses,basicstyle=\ttfamily\incodesize]@

\usepackage{commands}


\usepackage[english]{babel}


\makeatletter\if@ACM@journal\makeatother
\acmJournal{PACMPL}
\acmVolume{1}
\acmNumber{1}
\acmArticle{26}
\acmYear{2017}
\acmMonth{9}
\acmDOI{10.1145/3110270}
\acmPrice{}
\startPage{1}
\else\makeatother
\acmConference[PL'17]{ACM SIGPLAN Conference on Programming Languages}{January 01--03, 2017}{New York, NY, USA}
\acmYear{2017}
\acmISBN{978-x-xxxx-xxxx-x/YY/MM}
\acmDOI{10.1145/3110270}
\startPage{1}
\fi

\bibliographystyle{ACM-Reference-Format}
\citestyle{acmauthoryear}   

\begin{document}

\title[]{Local Refinement Typing} 


\acmBadgeR[http://icfp17.sigplan.org/track/icfp-2017-Artifacts]{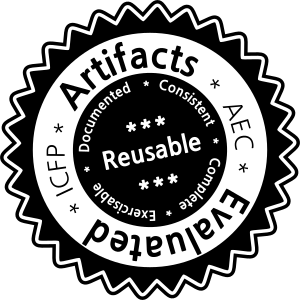}

\author{Benjamin Cosman}
\affiliation{\institution{UC San Diego} \country{USA}}
\email{blcosman@cs.ucsd.edu}

\author{Ranjit Jhala}
\affiliation{\institution{UC San Diego} \country{USA}}
\email{jhala@cs.ucsd.edu}




\begin{abstract}
We introduce the \toolname algorithm for local
refinement type inference, yielding a new SMT-based
method for verifying programs with polymorphic data types
and higher-order functions.
\toolname is \emph{concise} as the programmer need
only write signatures for (externally exported)
top-level functions and places with cyclic (recursive)
dependencies, after which \toolname can \emph{predictably}
synthesize the \emph{most precise} refinement types
for all intermediate terms (expressible in the
decidable refinement logic), thereby checking
the program without false alarms.
We have implemented \toolname and evaluated it on the
benchmarks from the \toolfont{LiquidHaskell} suite
totalling about 12KLOC.
\toolname checks an existing safety benchmark
suite using about half as many templates
as previously required and nearly $2\times$ faster.
In a new set of theorem proving benchmarks
\toolname is both $10-50 \times$ faster and,
by synthesizing the most precise types,
avoids false alarms to make verification possible.
\end{abstract}

%

\keywords{Refinement Types, Program Logics, SMT, Verification}  

\setcopyright{rightsretained}


\maketitle

\section{Introduction}\label{sec:intro}

Refinement types are a generalization of
Floyd-Hoare logics to the higher-order
setting.
Assertions are generalized to types by
constraining basic types so that their
inhabitants satisfy predicates drawn from
SMT-decidable refinement logics.
Assertion checking can then be generalized
to a form of subtyping, where subtyping for
basic types reduces to checking implications
between their refinement predicates~\cite{Constable86,Rushby98}.
Thus, thanks to the SMT revolution~\cite{SMTLIB2},
refinement types have emerged as a modular and
programmer-extensible means of expressing and
verifying properties of polymorphic, higher order
programs in languages like
{ML}~\cite{pfenningxi98,Dunfield07,LiquidPLDI08},
{Haskell}~\cite{LiquidICFP14},
{Racket}~\cite{Kent16},
{$F^\sharp$}~\cite{GordonRefinement09}, and
{TypeScript}~\cite{Vekris16}.

\mypara{The Problem}
Unfortunately, the expressiveness and
extensibility offered by predicate
refinements comes at a price.
Existing refinement type systems are either
\emph{not concise},
\ie require many type annotations
(not just for top-level functions),
or
\emph{not complete},
\ie programs fail to type check because
the checker cannot synthesize suitable
types for intermediate terms,
or
\emph{not terminating},
\ie the checker can diverge while trying
to synthesize suitable types for intermediate
terms in an iterative counterexample-guided manner.
%
%
We show  that the presence of logical predicates
relating \emph{multiple} program variables renders
classical approaches like unification or set-constraint
based subtyping -- even when used in
a \emph{bidirectional}~\cite{pierce-turner} or
\emph{local} fashion~\cite{odersky-local-01}
-- not up to the task of inference.
The problem is especially acute for programs
that make ubiquitous use of polymorphic datatypes
and higher-order combinators.
Consequently, to use refinement types, the programmer
must accept one of several distasteful choices: litter
her code with type annotations (everywhere), or eschew
polymorphic combinators (\S~\ref{sec:overview}).

\mypara{\toolname: Local Refinement Typing}
In this paper, we introduce the
\toolname algorithm that permits
local refinement type inference.
\toolname is \emph{concise} as
the programmer need only write
signatures for (externally exported)
top-level functions, and places with
cyclic dependencies on refinements,
either explicit (\eg recursive functions) or
implicit (\eg calls to inductive
library functions like \kw{fold}).
\toolname then synthesizes the most
\emph{precise} refinement types for
all intermediate terms that are
expressible in the refinement logic,
and hence checks the specified signatures.
Finally, \toolname ensures that
verification remains \emph{decidable}
by using the framework of
Liquid Typing~\cite{LiquidPLDI08}
to synthesize refinements from
user-defined templates (``qualifiers'')
in the presence of cyclic dependencies.
\toolname reconciles concision, precision
and decidability with three contributions.

\begin{enumerate}

\item \textbf{From Programs to NNF Horn Constraints}
Our first insight is that we can phrase the
problem of refinement type checking as that
of checking the satisfiability of (nested)
Horn Clause Constraints in Negation Normal
Form (NNF).
The NNF constraints are \emph{implications}
between refinement variables denoting the unknown,
to-be-synthesized refinements; a satisfying assignment
for these constraints yields refinement types for
all intermediate terms, that satisfy the subtyping
obligations that must hold for the program to
meet its specification.
Crucially, our novel NNF constraint formulation
retains the \emph{scoping structure} of bindings
that makes precise constraint solving
practical~(\S~\ref{sec:constraints}).

\item \textbf{Solving Constraints via Elimination}
Our second insight is that we can find
a satisfying assignment for the NNF
constraint, by systematically computing
the \emph{most precise} assignment for
each refinement variable, and using it
to \emph{eliminate} the variable from
the constraints.
Unfortunately, we find that a direct
application of the elimination algorithm,
inspired by the classical notion of ``unfolding''
from the Logic Programming literature
\cite{burstall-darlington-77,TamakiSato84},
leads to an exponential blowup even for
simple programs that arise in practice.
We show that such a blowup cannot be avoided
\emph{in general} as the problem of refinement
type checking is \exptime-hard.
Fortunately, we show how to exploit the
scoping structure present in NNF constraints
to get compact solutions for real-world
programs~(\S~\ref{sec:algorithm}).

\item\textbf{Refinement Typing: Faster and without False Alarms}
We have implemented \toolname and evaluated its
effectiveness on all the benchmarks from the
\toolfont{LiquidHaskell} suite~\cite{LiquidICFP14}.
We show that in \emph{existing} safety benchmarks
where qualifiers \emph{were} supplied or extracted
from top-level type signatures, \toolname is able
to synthesize the types about $2 \times$ faster
and requires only about half the qualifiers (needed
\eg to synthesize types for non-top-level recursive
functions).
Further, in a \emph{new} set of theorem proving
benchmarks which make heavy use of polymorphic
higher-order functions and data types, global
template-based inference of \toolfont{LiquidHaskell}
is both prohibitively slow \emph{and} unpredictable,
failing with false alarms.
%
%
In contrast, \toolname is more than $10 \times$ faster and,
by synthesizing the most precise types,
makes verification possible (\S~\ref{sec:eval}).

\end{enumerate}

\section{Overview}\label{sec:overview}

We start with a high-level overview illustrating
how refinement types can be used to verify programs,
the problems that refinements pose for local inference,
and our techniques for addressing them.
We have picked obviously contrived examples
that distill the problem down to its essence:
for real-world programs we refer the reader
to \cite{realworldliquid,fstar}.

\mypara{Refinement Types} We can precisely describe
\emph{subsets} of values corresponding to a type
by composing basic types with \emph{logical predicates}
that are satisfied by values inhabiting the type.
For example, \kw{Nat} describes the subset of \kw{Int}
comprising (non-negative) natural numbers:
\begin{code}
  type Nat = {v:Int | 0 <= v}
\end{code}

\mypara{Verification Conditions}
A refinement type checker can use the above signature to
check:
\begin{code}
  abs :: Int -> Nat
  abs n | 0 <= n    = n
        | otherwise = 0 - n
\end{code}
by generating a \emph{verification condition} (VC)~\cite{NelsonOppen}:
\begin{alignat*}{4}
\csbind{n}{} & 0 \leq n      & \ \Rightarrow\ & \csbind{\vv}{\vv = n}     & \ \Rightarrow 0 \leq \vv \\
    \wedge\  & 0 \not \leq n & \ \Rightarrow\ & \csbind{\vv}{\vv = 0 - n} & \ \Rightarrow 0 \leq \vv
\end{alignat*}
whose \emph{validity} can be checked by an SMT
solver~\cite{NelsonOppen,SMTLIB2} to ensure that
the program meets the given specification.

\subsection{The Problem}

For simple, first-order programs like \kw{abs},
VC generation is analogous to the classical
Floyd-Hoare approach used \eg by
\textsc{EscJava}~\cite{ESCJava}.
Namely, we have constraints \emph{assuming}
the path conditions (and pre-conditions),
and \emph{asserting} the post-conditions.
Next, we illustrate how VC generation
and checking is problematic in the presence
of \emph{local variables}, \emph{collections},
and \emph{polymorphic}, \emph{higher order}
functions.
We do so with a series of small but idiomatic
examples comprising ``straight-line'' code
(no explicit looping or recursion), decorated
with top-level specifications as required for
local type inference. We show how the presence
of the above features necessitates the synthesis
of intermediate refinement types, which is beyond
the scope of existing approaches.

\mypara{Library Functions}
Our examples use a library of
functions with types as shown
in Fig.~\ref{fig:lib}.

\begin{figure}[t]
  \centering
\begin{codebox}
inc  :: x:Int -> {v:Int | v = x + 1}   -- increment by one
dec  :: x:Int -> {v:Int | v = x - 1}   -- decrement by one
[]   :: List a                         -- list : "nil"
(:)  :: a -> $\tlist{\kw{a}}$ -> $\tlist{\kw{a}}$                      -- list : "cons"
last :: List a -> a                    -- last element of a list
map  :: (a -> b) -> $\tlist{\kw{a}}$ -> $\tlist{\kw{b}}$               -- mapping over a list
(.)  :: (b -> c) -> (a -> b) -> a -> c -- function composition
\end{codebox}
\caption{Type specifications for library functions}
\label{fig:lib}
\end{figure}

\mypara{Example 1: Local Variables}
In \kw{ex1} in Figure~\ref{fig:ex1},
we illustrate the problem posed
by local binders that lack (refinement)
type signatures.
Here, we need to check that \kw{inc y}
is a \kw{Nat} \ie non-negative, assuming
the input \kw{x} is.
To do so, we must synthesize a
sufficiently strong type for the
\emph{local} binder \kw{y} which says
that its value is an \emph{almost}-\kw{Nat},
\ie that \kw{y} has the type
$\reftpv{\kw{y}}{\tint}{0 \leq \kw{y} + 1}$.

\mypara{Example 2: Collections}
In \kw{ex2} in Figure~\ref{fig:ex2}
we show how the idiom of creating
intermediate collections of values
complicates refinement type checking.
Here, we use a \kw{Nat} to create a \emph{list}
\kw{ys} whose \kw{last} element is extracted
and incremented, yielding a \kw{Nat}.
The challenge is to infer that the intermediate
collection \kw{ys} is a list of almost-\kw{Nat}s.

\mypara{Example 3: Higher Order Functions}
Function \kw{ex3} shows the idiom of composing
two functions with the \kw{(.)} operator whose
type is in Figure~\ref{fig:lib}.
To verify the type specification for \kw{ex3}
we need to synthesize suitable instantiations
for the type variables \kw{a}, \kw{b} and \kw{c}.
Unlike with classical types, unification is
insufficient, as while \kw{a} and \kw{c} may be
instantiated to \kw{Nat}, we need to infer
that \kw{b} must be instantiated with
almost-\kw{Nat}.

\mypara{Example 4: Iteration}
Finally \kw{ex4} shows how the higher-order
\kw{map} and \kw{(.)} are used to idiomatically
refactor transformations of collections
in a ``wholemeal'' style~\cite{Hinze09}.
This straight-line function (treating \kw{map} as
a library function, abstracted by
its type signature) vexes existing refinement
type systems~\cite{Knowles10,fstar} that cannot
infer that \kw{map dec} (resp. \kw{map inc})
transforms a \kw{Nat} list (resp. almost-\kw{Nat} list)
into an almost-\kw{Nat} list (resp. \kw{Nat} list).

\subsection{Existing Approaches}

Existing tools use one of three approaches,
each of which is stymied by programs like
the above.

\mypara{1. Existentials}
First, \cite{Knowles09,Kent16} shows how to use
existentials to hide the types of local
binders.
In \kw{ex1}, this yields
$$\kw{y} \ \dcolon\ \exbind{0 \leq \kw{t}}{}\
                       \reftpv{\vv}{\tint}{\vv = \kw{t} - 1}$$
As \kw{inc y} has the type $\reft{\vv}{\vv = \kw{y} + 1}$ we
get the VC:
$$
\csbind{\kw{y}}{(\exbind{\kw{t}}{0 \leq \kw{t}} \wedge \kw{y} = \kw{t} - 1)} \ \Rightarrow\
\csbind{\vv}{\vv = \kw{y} + 1} \ \Rightarrow 0 \leq \vv
$$
which is proved valid by an SMT solver.
\sage~\cite{Knowles10}, \fstar~\cite{fstar}
and \textsc{Racket}~\cite{Kent16} use
this method to check \kw{ex1}.
However, it is not known how to scale this method
beyond local variable hiding, \ie to account for
collections, lambdas, higher-order functions \etc.
Consequently, the above systems \emph{fail to check}
\kw{ex2}, \kw{ex3} and \kw{ex4}.

\mypara{2. Counterexample-Guided Refinement}
Second, as in \textsc{Mochi} \cite{TerauchiPOPL13}
we can use \emph{counterexample-guided abstraction-refinement}
(CEGAR) to iteratively compute stronger refinements
until the property is verified.
This approach has two drawbacks.
First, it is non-modular in that it requires
\emph{closed} programs -- \eg the source
of library functions like \kw{map} and \kw{(.)} --
in order to get counterexamples.
Second, more importantly, it is limited to
logical fragments like linear arithmetic
where Craig Interpolation based
``predicate discovery'' is predictable,
and is notoriously prone to divergence
otherwise~\cite{jhalamcmillan06}.
Consequently, the method has only been
applied to small but tricky programs
comprising tens of lines, but not scaled
to large real-world libraries spanning
thousands of lines, and which typically
require refinements over uninterpreted
function symbols or
modular arithmetic~\cite{realworldliquid}.

\mypara{3. Abstract Interpretation}
Finally, the framework of Liquid Typing~\cite{LiquidPLDI08}
shows how to synthesize suitable refinement types
via abstract interpretation. The key idea is to:
\begin{enumerate}
\item \emphbf{Represent} types
   as \emph{templates} that use
   $\kvar$ \emph{variables}
   to represent unknown refinements.
\item \emphbf{Generate} subtyping
   constraints over the templates,
   that reduce to implications
   between the $\kvar$ variables,
\item \emphbf{Solve} the implications
   over an \emph{abstract domain},
   \eg the lattice of formulas
   generated by conjunctions of
   user-supplied atomic predicates,
   to find an assignment for the
   $\kvar$-variables that satisfies
   the constraints.
\end{enumerate}

Unfortunately, this approach also has several limitations.
First, the user must supply atomic predicates
(or more generally, a suitable abstract domain)
which should be superfluous in ``straight line''
code as in \kw{ex1}, \kw{ex2}, \kw{ex3}, and \kw{ex4}.
Unfortunately, without the atomic predicate hints,
Liquid Typing fails to check any of the above examples.
Specifically, unless the user provides the qualifier
or template ${0 \leq \kw{v} + 1}$ (which does not appear
anywhere in the code or specifications), the system cannot
synthesise the almost-\kw{Nat} types for the various
intermediate terms in the above examples.
Consequently \toolfont{LiquidHaskell} (which
implements Liquid Typing) rejects safe
(and checkable) programs, leaving the user with the
unpleasant task of debugging false alarms.
Second, in predicate abstraction, the so-called
abstraction operator ``$\alpha$'' makes \emph{many}
expensive SMT queries, which can slow down verification,
or render it impossible when the specifications, and
hence predicates, are complicated~(\S~\ref{sec:eval}).

\begin{figure}[t!]
\centering
\begin{minipage}[t]{0.30\textwidth}
\begin{codebox}
ex1 :: Nat -> Nat
ex1 x =
  let y =
    let t = x
    in
      dec t
  in
    inc y
$\wwedge$
\end{codebox}
\end{minipage}
\hspace{0.1in}
\begin{minipage}[t]{0.45\textwidth}
\begin{codebox}
ex2 :: Nat -> Nat
ex2 x =
  let ys = let n  = dec x
               p  = inc x
               xs = n : []
           in p : xs
      y  = last ys
  in
    inc y

\end{codebox}
\end{minipage}
\caption{Examples: (\bkw{ex1}) Local variables, (\bkw{ex2}) Collections}
\label{fig:ex1}
\label{fig:ex2}
\end{figure}

\subsection{Our Solution: Refinement \toolname}

Next, we describe our \toolname algorithm for
\emph{local refinement typing} and show how it
allows us to automatically, predictably and
efficiently check refinement types in the
presence of variable hiding, collections,
lambdas and polymorphic, higher order functions.
The key insight is two-fold.
First, we present a novel reduction from
type inference to the checking the
satisfaction of a system of NNF Horn
Constraints (over refinement variables)
that preserve the scoping structure of
bindings in the original source program.
Second, we show how to check satisfaction
of the NNF constraints by exploiting the
scoping structure to compute the strongest
possible solutions for the refinement
variables.
The above steps yield a method
that is \emph{concise}, \ie requires
no intermediate signatures, and
\emph{complete}, \ie is guaranteed to
check a program whenever suitable
signatures exist, instead of failing
with false alarms when suitable
qualifiers are not provided.
Finally, the method \emph{terminates},
as the acyclic variables can be eliminated
(by replacing them with their most precise
solution), after which any remaining cyclic
variables can be solved via abstract
interpretation~\cite{LiquidPLDI08}.

\subsubsection*{Example 1: Local Variables}
First, let us see how \toolname synthesizes
types for local binders that lack (refinement)
type signatures.


\mypara{1. Templates}
\toolname starts by generating templates for
terms whose type must be synthesized.
From the input type of \kw{ex1} (Fig~\ref{fig:ex1}),
the fact that \kw{t} is bound to \kw{x}, and the output
type of \kw{dec} we have:
\begin{align*}
\kw{x} & \ \dcolon\ \reftpv{\vv}{\tint}{0 \leq \vv} \\
\kw{t}     & \ \dcolon\ \reftpv{\vv}{\tint}{\vv = \kw{x}} \\
\kw{dec t} & \ \dcolon\ \reftpv{\vv}{\tint}{\vv = \kw{t} - 1} \\
\intertext{The term \kw{dec t} is bound to \kw{y}
but as \kw{t} goes out of scope we assign \kw{y}
a template}
\kw{y} & \ \dcolon\ {\reftpv{\vv}{\tint}{\kva{\kvar}{\vv}}} \\
\intertext{with a fresh $\kvar$ denoting the (unknown)
refinement for the expression. That is, \kw{y} is a
$\tint$ value $\vv$ that satisfies $\kva{\kvar}{\vv}$.
We will constrain $\kvar$ to be a well-scoped super-type
of its body \kw{dec t}. Finally, the output of \kw{ex1}
gets assigned}
\kw{inc y} & \ \dcolon\ \reftpv{\vv}{\tint}{\vv = \kw{y} + 1}
\end{align*}

\mypara{2. Constraints}
Next, \toolname generates constraints between the various
refinements. At each application, the parameter must be a
subtype of the input type; dually, in each function definition,
the body must be a subtype of the output type, and at each let-bind
the body must be a subtype of the whole let-in expression.
For base types, subtyping is exactly implication between
the refinement formulas \cite{Constable86,Rushby98}.
For \kw{ex1} we get two constraints.
The first relates the body of the let-in with its template;
the second the body of \kw{ex1} with the specified output:
\begin{alignat}{2}
\csbind{\kw{x}}{0 \leq \kw{x}} \ \Rightarrow\
  &                                                     && \csbind{\vv}{\vv = \kw{x} - 1}  \ \Rightarrow\ \kva{\kvar}{\vv}
  \label{eq:ex1:1}\\
\wedge\ \
  & \csbind{\kw{y}}{\kva{\kvar}{\kw{y}}} \ \Rightarrow\ && \csbind{\vv}{\vv = \kw{y} + 1}\ \Rightarrow\ 0 \leq \vv
  \label{eq:ex1:2}
\end{alignat}

\toolname ensures that the constraints preserve the
\emph{scoping structure} of bindings.
Each subtyping (\ie implication) happens under a
context where the program variables in scope are
bound and required to satisfy the refinements from
their previously computed templates.
Further, shared binders are explicit in the NNF
constraint. For example, in the NNF constraint
above, the shared binder \kw{x} in the source
program is also shared across the two implications
\ref{eq:ex1:1} and \ref{eq:ex1:2}.
This sharing crucially allows \toolname to
compute the most precise solution.

\mypara{3. Solution}
To \emph{solve} the constraints  we need to find a
\emph{well-formed assignment} mapping each $\kvar$-variable
to a predicate over variables in scope wherever
the $\kvar$-variable  appears, such that the
formula obtained by replacing the $\kvar$ with
its assignment is \emph{valid}.
If no such interpretation exists, then the constraints
are \emph{unsatisfiable} and the program is ill-typed.

\toolname computes an interpretation called
the \emph{strongest refinements} using a
method inspired by the classical notion
of ``unfolding'' from the Logic Programming
literature~\cite{burstall-darlington-77,TamakiSato84}.
In essence, if we have implications of the form:
  $P_i \Rightarrow \kva{\kvar}{\vv} $
then we can assign $\kvar$ to the disjunction
  $\kvapp{\kvar}{x} \doteq \vee_i P_i$
after taking care to existentially quantify variables.
In our example, $\kvar$ is a unary refinement predicate,
and so we use (\ref{eq:ex1:1}) to obtain the assignment
\begin{align}
\kva{\kvar}{z} & \ \doteq\ \exbind{x}{0 \leq x \ \wedge (\exbind{\vv}{\vv = x - 1 \ \wedge\ \vv = z})} \label{eq:ex1:sol}
\end{align}
which simplifies to ${0 \leq z + 1}$.
The computed assignment above is simply the
(existentially quantified) \emph{hypotheses}
arising in the conjunction where $\kva{\kvar}{\cdot}$
is the goal.
The ``simplification'' done above and in the sequel
is purely for \emph{exposition}: to \emph{compute} it
we would have to resort to quantifier elimination
procedures which are prohibitively expensive in
practice, and hence avoided by \toolname.

It is easy to check that the above assignment renders
the constraint valid (post-substitution). In particular,
the solution (\ref{eq:ex1:sol}) precisely captures the
almost-\kw{Nat} property which allows an SMT solver to prove
(\ref{eq:ex1:2}) valid.
While the substituted VC contains existential
quantifiers, the SMT solver can handle the
resulting validity query easily as the
quantifiers appear \emph{negatively},
\ie in the antecedents of the
implication.
Consequently, they can be removed by a simple
variable renaming (``skolemization'') as the VC
$(\exists x. P(x)) \Rightarrow Q$
is equivalent to the VC
$P(z) \Rightarrow Q$,
where \kw{z} is a fresh variable name.

Thus, for local variables, \toolname synthesizes
the same intermediate types as the existentials-based
method of \cite{Knowles09,Kent16}, and, unlike Liquid Types,
is able to verify \kw{ex1} without requiring \emph{any}
predicate templates.
However, as we see next, \toolname generalizes
the existentials based approach to handle collections,
lambdas, and polymorphic, higher order functions.

\subsubsection*{Example 2: Collections} \hfill 
Next, let us see how \toolname precisely
synthesizes types for intermediate collections
without requiring user-defined qualifiers or
annotations.

\mypara{1. Templates}
%
%
From the output types of \kw{inc} and \kw{dec} we get
\begin{align*}
\kw{n}   & \ \dcolon\ \reftpv{\vv}{\tint}{\vv = \kw{x} - 1} \\
\kw{p}   & \ \dcolon\ \reftpv{\vv}{\tint}{\vv = \kw{x} + 1} \\
\intertext{The lists \kw{xs} and \kw{ys} are created by
invoking the nil and cons constructors (Fig~\ref{fig:lib}.)
In the two cases, we respectively instantiate the polymorphic
type variable \kw{a} (in the constructors' signatures from Figure~\ref{fig:lib})
with the (unknown) templates over fresh $\kvar$ variables
to get}
\kw{xs}  & \ \dcolon\ \tlist{\reftpv{\vv}{\tint}{\kva{\kvar_x}{\vv}}} \\
\kw{ys}  & \ \dcolon\ \tlist{\reftpv{\vv}{\tint}{\kva{\kvar_y}{\vv}}} \\
\intertext{as the output of \kw{last} has the same type its input list, we get:}
\kw{y}     & \ \dcolon\ \reftpv{\vv}{\tint}{\kva{\kvar_y}{\vv}} \\
\kw{inc y} & \ \dcolon\ \reftpv{\vv}{\tint}{\vv = \kw{y} + 1}
\end{align*}

\mypara{2. Constraints}
We get four implication constraints from \kw{ex2}.
\begin{alignat}{2}
& \csbind{\kw{x}}{0 \leq \kw{x}} \Rightarrow                                                          && \notag    \\
& \quad \wwedge\ \csbind{\kw{n}}{\kw{n} = \kw{x} - 1} \Rightarrow                                     && \notag    \\
& \quad \quad \quad    \csbind{\kw{p}}{\kw{p} = \kw{x} + 1} \Rightarrow                               && \notag    \\
& \quad \quad \quad    \quad \wedge\ \csbind{\vv}{\vv = \kw{n}}       \Rightarrow \kva{\kvar_x}{\vv} && \label{eq:ex2:1} \\
& \quad \quad \quad    \quad \wedge\ \csbind{\vv}{\vv = \kw{p}}       \Rightarrow \kva{\kvar_y}{\vv} && \label{eq:ex2:2} \\
& \quad \quad \quad    \quad \wedge\ \csbind{\vv}{\kva{\kvar_x}{\vv}} \Rightarrow \kva{\kvar_y}{\vv} && \label{eq:ex2:3} \\
& \quad \wedge\  \csbind{\kw{y}}{\kva{\kvar_y}{\kw{y}}}  \Rightarrow                                  && \notag    \\
& \quad \quad    \quad \csbind{\vv}{\vv = \kw{y} + 1}    \Rightarrow 0 \leq \vv                       && \label{eq:ex2:4}
\end{alignat}
As \kw{n} is passed into ``cons'' to get \kw{xs},
we get that $\reft{\vv}{\vv = \kw{n}}$ must be a subtype of
$\reft{\vv}{\kva{\kvar_x}{\vv}}$, yielding (\ref{eq:ex2:1}).
Similarly, as \kw{p} is ``cons''-ed to \kw{xs} to get \kw{ys},
we get that $\reft{\vv}{\vv = \kw{p}}$ and $\reft{\vv}{\kva{\kvar_x}{\vv}}$
must be subtypes of $\reft{\vv}{\kva{\kvar_y}{\vv}}$ yielding
(\ref{eq:ex2:2}) and (\ref{eq:ex2:3}) respectively.
Finally, the type of the return value \kw{y+1} must be a
subtype of \kw{Nat} yielding (\ref{eq:ex2:4}).
Each implication appears under a context
in which the variables in scope are bound to
satisfy their template's refinement.

\mypara{3. Solution}
Again, we compute the strongest refinements
as the (existentially quantified) disjunctions
of the hypotheses under which each $\kvar$ appears.
Thus, implication (\ref{eq:ex2:1}) yields:
\begin{alignat*}{3}
\kva{\kvar_x}{z}
  & \doteq\ && \exbind{\kw{x}}{0 \leq \kw{x}} \ \wedge  & \\
  &         && \quad \exbind{\kw{n}}{\kw{n} = \kw{x} - 1}\ \wedge &  \\
  &         && \quad \quad \exbind{\kw{p}}{\kw{p} = \kw{x} + 1}\ \wedge & \\
  &         && \quad \quad \quad \exbind{\vv}{\vv = n \ \wedge\ \vv = z} & \ \ldots \mbox{from}\ (\ref{eq:ex2:1}) \\
\intertext{which is essentially $0 \leq z + 1$, or almost-\kw{Nat}, and the disjunction
of (\ref{eq:ex2:2}, \ref{eq:ex2:3}) yields}
\kva{\kvar_y}{z}
  & \doteq\ && \exbind{\kw{x}}{0 \leq \kw{x}} \ \wedge  & \\
  &         && \quad \exbind{\kw{n}}{\kw{n} = \kw{x} - 1}\ \wedge &  \\
  &         && \quad \quad \exbind{\kw{p}}{\kw{p} = \kw{x} + 1}\ \wedge & \\
  &         && \quad \quad \quad \vee\ \exbind{\vv}{\vv = \kw{p} \ \wedge\ \vv = z}  & \ \ldots \mbox{from}\ (\ref{eq:ex2:2}) \\
  &         && \quad \quad \quad \vee\  \exbind{\vv}{\vv = \kw{n} \ \wedge\ \vv = z} & \ \ldots \mbox{from}\ (\ref{eq:ex2:3})
\end{alignat*}
which is also $0 \leq z + 1$. Substituting the strongest refinements
into (\ref{eq:ex2:4}) yields a valid formula, verifying \kw{ex2}.

\begin{figure}[t!]
\centering
\begin{minipage}[t]{0.45\textwidth}
\begin{codebox}
ex3 :: Nat -> Nat
ex3 = let fn = \a -> dec a
          fp = \b -> inc b
      in  fp . fn
\end{codebox}
\end{minipage}
\hspace{0.1in}
\begin{minipage}[t]{0.45\textwidth}
\begin{codebox}
ex4 :: List Nat -> List Nat
ex4 = let fn = \a -> dec a
          fp = \b -> inc b
      in map fp . map fn
\end{codebox}
\end{minipage}
\caption{Examples: (\bkw{ex3}) Higher-Order Composition,
                   (\bkw{ex4}) Higher-Order Iteration.}
\label{fig:ex3}
\label{fig:ex4}
\end{figure}

\subsubsection*{Example 3: Composition}
Next, we describe how \toolname synthesizes precise
types for inner lambda-terms like those bound to
\kw{fn} and \kw{fp}, and simultaneously determines
how the polymorphic type variables for the \kw{(.)}
combinator can be instantiated in order to verify \kw{ex3}

\mypara{1. Templates}
\toolname scales up to polymorphic higher-order operators
like ``compose'' \kw{(.)} by using the same approach as
for collections: create $\kvar$-variables for the unknown
instantiated refinements, and then find the strongest solution.
In \kw{ex3}, this process works by respectively instantiating
the \kw{a}, \kw{b} and \kw{c} in the signature for \kw{(.)}
(Fig~\ref{fig:lib}) with fresh templates
$\reft{\vv}{\kva{\kvar_a}{\vv}}$,
$\reft{\vv}{\kva{\kvar_b}{\vv}}$, and
$\reft{\vv}{\kva{\kvar_c}{\vv}}$  respectively.
Consequently, the arguments and output to \kw{.} at this instance
get the templates:
\begin{alignat}{2}
\kw{fn}      & \ \dcolon\ \reftpv{a}{\tint}{\kva{\kvar_a}{a}} & \ \rightarrow \reftpv{\vv}{\tint}{\kva{\kvar_b}{\vv}} \label{eq:template:fn}\\
\kw{fp}      & \ \dcolon\ \reftpv{b}{\tint}{\kva{\kvar_b}{b}} & \ \rightarrow \reftpv{\vv}{\tint}{\kva{\kvar_c}{\vv}} \label{eq:template:fp}\\
\kw{fp . fn} & \ \dcolon\ \reftpv{a}{\tint}{\kva{\kvar_a}{a}} & \ \rightarrow \reftpv{\vv}{\tint}{\kva{\kvar_c}{\vv}} \notag
\end{alignat}

\mypara{2. Constraints}
Next, the bodies of \kw{fn} and \kw{fp} must be subtypes
of the above templates. By decomposing function subtyping
into input- and output- subtyping, we get the following
implications. We omit the trivial constraints on the input
types; the above correspond to checking the output types
in an environment assuming the stronger (super-) input type:
\begin{alignat}{3}
             & \csbind{a}{\kva{\kvar_a}{a}}
             && \ \Rightarrow\ \csbind{\vv}{\vv = a - 1}
             & \ \Rightarrow\ \kva{\kvar_b}{\vv}
             & \label{eq:comp:a-b} \\
\wedge \quad & \csbind{b}{\kva{\kvar_b}{b}}
             && \ \Rightarrow\ \csbind{\vv}{\vv = b + 1}
             & \ \Rightarrow\ \kva{\kvar_c}{\vv}
             & \label{eq:comp:b-c} \\
\intertext{Finally, \kw{fp . fn} must be a subtype
of the (function) type ascribed to \kw{ex3} yielding
input and output constraints:}
\wedge \quad & \csbind{\vv}{0 \leq \vv}
             && \ \Rightarrow\ \kva{\kvar_a}{\vv}
             &
             & \label{eq:comp:nat-a} \\
\wedge \quad & \csbind{\vv}{\kva{\kvar_c}{\vv}}
             && \ \Rightarrow\ 0 \leq \vv
             &
             & \label{eq:comp:c-nat}
\end{alignat}

\mypara{3. Solution}
Finally \toolname computes the strongest refinements by
assigning $\kvar_a$ to its single hypothesis:
\begin{align*}
\kva{\kvar_a}{z} &\ \doteq\  \exbind{\vv}{0 \leq \vv \wedge \vv = z} \\
\intertext{which simplifies to ${0 \leq z}$, which plugged into (\ref{eq:comp:a-b}) gives:}
\kva{\kvar_b}{z} &\ \doteq\  \exbind{a}{0 \leq a \wedge \exbind{\vv}{\vv = a - 1 \wedge \vv = z}}\\
\intertext{which simplifies to ${0 \leq z + 1}$, which in (\ref{eq:comp:b-c}) yields}
\kva{\kvar_c}{z} &\ \doteq\ \exbind{b}{0 \leq b + 1 \wedge \exbind{\vv}{\vv = b + 1 \wedge \vv = z}}\\
\intertext{which is just $0 \leq z$, rendering implication (\ref{eq:comp:c-nat}) valid.}
\end{align*}

\subsubsection*{Example 4: Iteration}
%
Finally, \toolname uses types to scale
up to higher-order iterators over collections.
As in \kw{ex3} we generate fresh templates for
\kw{fn} (\ref{eq:template:fn}) and
\kw{fp} (\ref{eq:template:fp}).
When applied to \kw{map} the above return as output
the templates:
\begin{alignat*}{2}
\kw{map fn}
  & \ \dcolon\ \tlist{\reft{a}{\kva{\kvar_a}{a}}}
  && \rightarrow \tlist{ \reft{\vv}{\kva{\kvar_b}{\vv}}}  \\
\kw{map fp}
  & \ \dcolon\ \tlist{\reft{b}{\kva{\kvar_b}{b}}}
  && \rightarrow \tlist{\reft{\vv}{\kva{\kvar_c}{\vv}}} \\
\intertext{which are the templates from \kw{ex3}
\emph{lifted} to lists, yielding} 
\kw{map fp . map fn}
  & \ \dcolon\ \tlist{\reft{a}{\kva{\kvar_a}{a}}}
  && \rightarrow \tlist{\reft{\vv}{\kva{\kvar_c}{\vv}}}
\end{alignat*}

Consequently, the subtyping constraints for \kw{ex4} are
the same as for \kw{ex3} but lifted to lists.
Co-variant list subtyping reduces those into the
exact same set of implications as \kw{ex3}.
Thus, \toolname computes the same strongest
refinements as in \kw{ex3} and hence,
verifies \kw{ex4}.

\subsection{Avoiding Exponential Blowup}
\label{sec:overview:blowup}

The reader may be concerned that substituting
the strongest solution may cause the formulas to
expand, leading to an exponential blowup.
Recall that, in our examples, we ``simplified'' the
formulas before substituting to keep the exposition
short.
In general this is not possible as it requires
expensive quantifier elimination.
We show that the concern is justified in theory,
and that a direct substitution strategy leads to
a blowup even in practice.
Fortunately, we show that a simple optimization
that uses the scoping structure of bindings that
we have carefully preserved in the NNF constraints
allows us to avoid blowups in practice.

\mypara{Local Refinement Typing is \exptime-Hard}
The bad news is that local refinement typing in general,
and thus, constraint solving in particular are both
\exptime-hard.
This result follows from the result that
reachability (\ie safety) of non-recursive
first order boolean programs is
\exptime-complete~\cite{GodefroidYannakakis13}.
We can encode the reachability problem directly as checking
a given refinement type signature over programs using just
boolean valued data, and so local refinement typing is
\exptime-hard.
Thus, deciding constraint satisfaction
is also \exptime-hard, which means that we cannot
avoid exponential blowups in general.
Propositional validity queries are in \nptime so
\exptime-hardness means we must make an
exponential number of queries, or a polynomial
number of exponential size queries.

\mypara{Let-chains cause exponential blowup}
Indeed, the direct ``unfolding'' based approach
\cite{burstall-darlington-77,TamakiSato84} that
we have seen so far blows up due to a simple
and ubiquitous pattern: sequences of let-binders.
Consider the constraint generated by the program:
\begin{code}
  exp :: Nat -> Nat
  exp $\kw{x}_0$ = let $\kw{x}_1$ = id $\kw{x}_0$
               $\vdots$
               $\kw{x}_n$ = id $\kw{x}_{n-1}$
           in $\kw{x}_n$
\end{code}
The $i^{th}$ invocation of \kw{id :: a -> a}
creates a fresh $\kvar_i$, which is then
used as the template for $\kw{x}_i$.
Consequently, we get the NNF constraint:
\begin{equation}
\begin{array}{l@{\hskip 0.02in}l@{\hskip 0.02in}l@{\hskip 0.02in}l}
\csbind{\kw{x}_0}{0 \leq \kw{x}_0} & \multicolumn{3}{l}{\Rightarrow \csbind{\vv}{\vv = \kw{x}_0} \ \Rightarrow\  \kva{\kvar_1}{\vv}} \\
                                   & \ \ \wedge\ \csbind{\kw{x}_1}{\kva{\kvar_1}{\kw{x}_1}} & \multicolumn{2}{l}{\Rightarrow \csbind{\vv}{\vv = \kw{x}_1} \ \Rightarrow\  \kva{\kvar_2}{\vv}} \\
                                   &              & \ \ \vdots & \\
                                   &              & \ \ \wedge\ \csbind{\kw{x}_{n-1}}{\kva{\kvar_{n-1}}{\kw{x}_{n-1}}} & \Rightarrow \csbind{\vv}{\vv = \kw{x}_{n-1}} \ \Rightarrow\  \kva{\kvar_n}{\vv} \\
                                   &              & & \ \ \wedge\ \csbind{\kw{x}_n}{\kva{\kvar_n}{\kw{x}_n}}\ \Rightarrow \csbind{\vv}{\vv = \kw{x}_n} \ \Rightarrow\  0 \leq \vv
\end{array} \label{eq:cstr-blowup}
\end{equation}
%
Since each $\kvar_i$ has in its hypotheses,
all of $\kvar_1 \ldots \kvar_{i-1}$, the
strongest solution for $\kvar_i$ ends up
with $2^i$ copies of ${0 \leq x_0}$!
While this example is contrived, it
represents an idiomatic pattern: long
sequences of let-binders (\eg introduced
by ANF-conversion) with polymorphic
instantiation.
Due to the ubiquity of the pattern,
a direct unfolding based computation
of the strongest refinement fails for
all but the simplest programs.

\mypara{Sharing makes Solutions Compact}
This story has a happy ending.
Fortunately, our constraints
preserve the scoping structure
of binders.
Thus, the exponentially duplicated
$0 \leq x_0$ is ``in scope''
everywhere each $\kvar_i$
appears.
Consequently, we can \emph{omit}
it entirely from the strongest
solutions, collapsing each to
the compact solution (\ref{eq:compact-sol}),
$\kva{\kvar_i}{z} \doteq z = x_{i-1}$.
In \S~\ref{sec:constraints} we
formalize our algorithm for
generating such nested constraints,
and in \S~\ref{sec:algorithm} we
show how to use the above insight
to derive an optimized solving
algorithm, which yields compact,
shared solutions and hence,
dramatically improves
refinement type checking
in practice \S~\ref{sec:eval}.

\mypara{The Importance of Preserving Scope}
Note that the sharing observation and
optimization seem obvious at this juncture,
precisely because of our novel NNF formulation
which carefully preserves the scoping structure
of binders.
Previous approaches for
constraint based refinement
synthesis \cite{Knowles07,LiquidPLDI08,UnnoHORN15,Polikarpova16}
yield constraints that
discard the scoping
structure, yielding the
flat (\ie non-nested)
constraints
$$
\begin{array}{rl}
       & \csbind{\kw{x}_{0,0}}{0 \leq \kw{x}_{0,0}} \Rightarrow
         \csbind{\vv}{\vv = \kw{x}_{0,0}} \ \Rightarrow\  \kva{\kvar_1}{\vv} \\
\wedge & \csbind{\kw{x}_{1,0}}{0 \leq \kw{x}_{1,0}} \Rightarrow
         \csbind{\kw{x}_{1,1}}{\kva{\kvar_1}{\kw{x}_{1,1}}} \Rightarrow
         \csbind{\vv}{\vv = \kw{x}_{1,1}} \ \Rightarrow\  \kva{\kvar_2}{\vv} \\
\vdots & \\
\wedge & \csbind{\kw{x}_{n-1,0}}{0 \leq \kw{x}_{n-1,0}} \Rightarrow \ldots \Rightarrow
         \csbind{\kw{x}_{n-1,n-1}}{\kva{\kvar_{n-1}}{\kw{x}_{n-1,n-1}}} \Rightarrow
         \csbind{\vv}{\vv = \kw{x}_{n-1,n-1}} \ \Rightarrow\  \kva{\kvar_n}{\vv} \\
\wedge & \csbind{\kw{x}_{n,0}}{0 \leq \kw{x}_{n,0}} \Rightarrow \ldots \Rightarrow
         \csbind{\kw{x}_{n,n-1}}{\kva{\kvar_{n-1}}{\kw{x}_{n,n-1}}} \Rightarrow
         \csbind{\kw{x}_{n,n}}{\kva{\kvar_n}{\kw{x}_{n,n}}}\ \Rightarrow
         \csbind{\vv}{\vv = \kw{x}_{n,n}} \ \Rightarrow\  0 \leq \vv
\end{array}
$$
in which the sharing is not explicit as the
shared variables are alpha-renamed.
Absent sharing, unfolding yields a
solution for each $\kvar_i$ that is
exponential in $i$ without any syntactic
duplication of a common conjunct,
rendering the computation of precise
solutions infeasible.

\subsection{Relatively Complete Local Refinement Typing}

By \emph{fusing} together the constraints appearing
as hypotheses for each refinement variable $\kvar$,
our approach synthesizes the most precise
refinements expressible in the decidable
refinement logic (Lemma~\ref{thm:strongest}),
and hence, makes local refinement typing
\emph{relatively complete}.
That is, (assuming no recursion)
\emph{if there exist} type ascriptions
for intermediate binders and terms that
allow a program to typecheck, then \toolname
is guaranteed to find them.
The above examples \emph{do} in fact
typecheck with existing systems but
only if the user annotated local
variables with their signatures.
For example, \cite{fstar} \emph{can}
check \kw{ex2} (Figure~\ref{fig:ex2})
\emph{if} the programmer annotates
the type of the local \kw{xs} as an
almost \kw{Nat}.
Unfortunately, such superfluous
annotations impose significant
overhead in code size as well
as effort: the user must painstakingly
debug false alarms to find where
information was lost due to incompleteness.
In contrast, our completeness guarantee
means that once the user has annotated
top-level and recursive binders, \toolname
\emph{will not} return false alarms due
to missing templates, \ie will typecheck
the program if and only if it can be,
which crucially makes verification concise,
precise and decidable, enabling the results
in~\S~\ref{sec:eval}.

\mypara{Precise Refinements vs. Pre- and Post-Conditions}
%
%
As discussed in \S~\ref{sec:related},
our notion of \emph{most precise refinements}
can be viewed as the analog of the method of
strongest postconditions from Floyd-Hoare
style program logics.
\fstar introduces the notion of Dijkstra Monads
\cite{dijkstramonad} as a way to generalize
the dual notion of weakest preconditions (WP)
to the higher-order setting.
Hence, \fstar can use the notion of Dijkstra Monads to
verify the following variant of \kw{ex3} that uses
Floyd-Hoare style specification instead of refinement types:
\begin{fscode}
  let inc x = x + 1
  let dec x = x - 1
  let compose f g x = f (g x)

  let ex3 : i:int -> Pure int
    (requires (0 <= i)) (ensures (fun j -> 0 <= j))
    = compose inc dec
\end{fscode}
However, this approach has a key limitation
relative to \toolname in that it requires an
implementation or logically equivalent
\emph{strong} specification of the \kw{compose}
operator \kw{(.)}, instead of the weaker
polymorphic type signature required by \toolname.
The WP machinery crucially relies upon the
strong specification to makes the chaining
of functions \emph{explicit}.
When the chaining is obscured, \eg by the
use of \kw{map} in \kw{ex4}, the WP method
is insufficient, and so \fstar fails to
verify \kw{ex4}.
In contrast, \toolname uses only the \emph{implicit}
subtyping dependencies between refinement types.
Hence, the same compositional, type-directed
machinery that checks \kw{ex3} carries over
to successfully verify \kw{ex4}.

\section{Programs}\label{sec:lang}

\begin{figure}[t!]

\begin{tabular}{>{$}r<{$} >{$}r<{$} >{$}r<{$} >{$}l<{$} >{$}l<{$} }
\emphbf{Values}	     & \val   & ::=    & \vconst & \mbox{constants}\\
                     &        & \spmid & \elambda{\evar}{\utype}{\expr} & \mbox{function}\\[0.05in]

\emphbf{Terms}       & \expr  & ::=    & \val                                     & \mbox{values} \\
                     &        & \spmid & \evar, \ssymbol{y}, \ssymbol{z}, \ldots  & \mbox{variables} \\
                     &        & \spmid & \eletin{\evar}{\expr}{\expr'}            & \mbox{let-bind} \\
                     &        & \spmid & \app{\expr}{\evar}                       & \mbox{application} \\
                     &        & \spmid & \tlambda{\tvar}{\expr}                   & \mbox{type-abstraction} \\
                     &        & \spmid & \tapp{\expr}{\utype}                     & \mbox{type instantiation} \\ [0.05in]

\emphbf{Refinements} & \refi  & ::= & \dots \text{varies} \dots & \\ [0.05in]

\emphbf{Basic Types} & \base  & ::= & \tint \spmid \tbool \spmid \dots & \\ [0.05in]

\emphbf{Types}       & \type  & ::=    & \tvar                      & \mbox{variable} \\
                     &        & \spmid & \reftpv{x}{\base}{\refi}   & \mbox{base}  \\
                     &        & \spmid & \trfun{x}{\type}{\type}   & \mbox{function}      \\
                     &        & \spmid & \tpoly{\tvar}{\type}       & \mbox{scheme}        \\
                     [0.05in]

\emphbf{Unrefined Types} & \utype & ::=    & \tvar                     & \mbox{variable}\\
                      &        & \spmid & \base                     & \mbox{base}\\
                      &        & \spmid & \trfun{x}{\utype}{\utype} & \mbox{function} \\
                      &        & \spmid & \tpoly{\tvar}{\utype}     & \mbox{scheme}        \\
\end{tabular}
\caption{{Syntax of \lang}}
\label{fig:syntax}
\end{figure}

We start by formalizing the syntax and
operational semantics of a core source
language \lang ~\cite{Knowles10, LiquidICFP14}.

\subsection{Syntax}

\mypara{Terms}
Figure~\ref{fig:syntax} summarizes the syntax of
the terms of \lang.
The \emph{values} of \lang comprise primitive
constants $\vconst$ and functions
$\elambda{\evar}{\utype}{\expr}$.
The \emph{terms} of \lang include values and
additionally,
variables \ssymbol{x}, \ssymbol{y}, \ssymbol{z} \ldots,
let-binders \eletin{\evar}{\expr}{\expr},
applications \app{\expr}{\evar} (in ANF,
to simplify the application rule \cite{LiquidPLDI08}).
Polymorphism is accounted for via type
abstraction \tlambda{\tvar}{\expr} and
instantiation \tapp{\expr}{\utype}.
%
We assume that the source program has been annotated with
\emph{unrefined} types at $\lambda$-abstractions and type
abstraction and instantiations, either by the programmer
or via classical (unrefined) type inference.

\mypara{Types}
The types of \lang include unrefined types,
written $\utype$, and refined types, written
$\type$.
The unrefined types include basic types like
$\tint$, $\tbool$, type variables $\tvar$,
functions $\trfun{\evar}{\utype}{\utype'}$ where
the input is bound to the name \evar, and type
schemes $\tpoly{\tvar}{\utype}$.
We refine basic types with
\emph{refinement predicates}
$\refi$ to get refined types.
The formulas $\refi$ are
drawn from a decidable
refinement logic, \eg \decidablelogic: the
quantifier free theory of linear arithmetic
and uninterpreted functions.

\mypara{Notation}
We write \reft{\vv}{\refi}
when the base \base is clear from the context.
We write \base to abbreviate
\reftpv{\vv}{\base}{\ttrue},
and write
${\type \rightarrow \type'}$ to
abbreviate \trfun{\evar}{\type}{\type'}
if $\evar$ does not appear in $\type'$.

\mypara{Typing Constants}
We assume that each constant \vconst
is equipped with a type \constty{\vconst}
that characterizes its semantics~\cite{Knowles10,LiquidICFP14}.
For example, literals are assigned their corresponding
singleton type and operators have types representing
their pre- and post-conditions:
\begin{align*}
  \constty{\kw{7}} & \ \doteq\ \reftpv{\vv}{\tint}{\vv = \kw{7}} \\
  \constty{\kw{+}} & \ \doteq\ \trfun{x}{\tint}{\trfun{y}{\tint}{\reft{\vv}{\vv = x + y}}}\\
  \constty{\kw{/}} & \ \doteq\ \tint \rightarrow \reftpv{\vv}{\tint}{\vv \not = \kw{0}} \rightarrow \tint\\
  \constty{\kw{assert}} & \ \doteq\ \reftpv{\vv}{\tbool}{\vv} \rightarrow \tunit
\end{align*}

\subsection{Semantics}
\lang has a standard small-step, contextual
\emph{call-by-value} semantics; we write
$\steps{\expr}{\expr'}$ to denote that
term $\expr$ steps to $\expr'$.
We write \stepsn{\expr}{\expr'}{j} if there
exists $e_1,\ldots,e_j$ such that
$\expr \equiv \expr_1$,
$\expr' \equiv \expr_j$,
and for all $1 \leq i < j$,
we have $\steps{\expr_i}{\expr_{i+1}}$.
We write \stepsmany{\expr}{\expr'} if
for some (finite) $j$ we have
\stepsn{\expr}{\expr'}{j}.

\mypara{Constants}
We assume that when values are applied
to a primitive constant, the expression
is reduced to the output of the primitive
constant operation in a single step.
For example, consider \kw{+}, the
primitive \tint addition operator.
We assume that
$\primapp{\kw{+}}{n}$ equals $\kw{+}_n$
where for all $m$,
\primapp{\kw{+}_n}{m} equals the
integer sum of $n$ and $m$.

\mypara{Safety}
We say a term \expr \emph{is stuck} if
there does not exist any $\expr'$ such
that $\steps{\expr}{\expr'}$.
We say that a term \expr \emph{is safe},
if whenever $\stepsmany{\expr}{\expr'}$
then either $\expr'$ is a value
or $\expr'$ is \emph{not} stuck.
We write \issafe{\expr} to denote that
\expr is safe.
Informally, we assume that the primitive operations
(\eg division) get stuck on values that do not satisfy
their input refinements (\eg when the divisor is \kw{0}).
Thus, $\expr$ is safe when every (primitive) operation
is invoked with values on which it is defined (\eg there
are no divisions by \kw{0}.)

\section{Constraints}\label{sec:constraints}

\begin{figure}[t!]
\judgementHead{Constraint Syntax}{c}

\begin{tabular}{>{$}r<{$} >{$}r<{$} >{$}r<{$} >{$}l<{$}}
\emphbf{Predicates}  & \pred  & ::=    & \refi \\
                     &        & \spmid & \kvapp{\kvar}{y} \\
                     &        & \spmid & \pred_1 \wedge \pred_2 \\[0.05in]

\emphbf{Constraints} & \cstr  & ::=    & \pred \\
                     &        & \spmid & \csand{\cstr_1}{\cstr_2} \\
                     &        & \spmid & \csimp{x}{\base}{\pred}{\cstr}\\[0.05in]

\emphbf{Environments}& \tcenv & ::=    & \emptyset \spmid \tcenvext{x}{\type} \\[0.05in]

\emphbf{Assignment}   & \soln  & ::=    & \Kvar \rightarrow \Refi \\[0.05in]

\emphbf{Assumption}   & \assm  & ::=    & \soln \spmid \assmext{\assm}{x}{\base}{\pred} \\
\end{tabular}

\vspace{0.1in}

\judgementHead{Well-formedness}{\wf{\tcenv}{\cstr}}
$$\begin{array}{lll}
\inferrule[\rulename{W-Ref}]
  {\hastype{\tcenv}{\refi}{\tbool}}
  {\wf{\tcenv}{\refi}}
&
\inferrule[\rulename{W-And}]
  {\wf{\tcenv}{\cstr_1} \quad \wf{\tcenv}{\cstr_2}}
  {\wf{\tcenv}{\csand{\cstr_1}{\cstr_2}}}
&
\inferrule[\rulename{W-Imp}]
  {\wf{\tcenvext{x}{\base}}{\cstr} \quad
   \wf{\tcenvext{x}{\base}}{\refi}
  }
  {\wf{\tcenv}{\csimp{x}{\base}{\refi}{\cstr}}}
\end{array}$$

\vspace{0.1in}

\judgementHead{Satisfaction}{\satisfies{\assm}{\cstr}}
$$\begin{array}{lll}
\inferrule[\rulename{Sat-Base}]
  { \wf{\emptyset}{\apply{\soln}{\cstr}} \quad  \smtvalid{\apply{\soln}{\cstr}}}
  {\satisfies{\soln}{\cstr}}
&
\inferrule[\rulename{Sat-Ext}]
  {\satisfies{\assm}{\csimp{x}{\base}{\pred}{\cstr}}}
  {\satisfies{\assmext{\assm}{x}{\base}{\pred}}{\cstr}}
&
\inferrule[\rulename{Sat-Many}]
  {\satisfies{\assm}{\cstr}\ \mbox{for each}\ \cstr \in \cstrs}
  {\satisfies{\assm}{\cstrs}}
\end{array}$$
\caption{{Constraints: Syntax and Semantics}}
\label{fig:constraints:syntax}
\label{fig:constraints:wf}
\label{fig:satisfaction}
\end{figure}

Local refinement typing proceeds in two steps.
First, we use \lang terms to generate a system
of constraints (\S~\ref{sec:constraint:syntax})
whose satisfiability (\S~\ref{sec:constraint:semantics})
implies that the term is safe (\S~\ref{sec:constraint:generation}).
Second, we solve the constraints to find a satisfying assignment (\S~\ref{sec:algorithm}).

\subsection{Syntax}
\label{sec:constraint:syntax}

\mypara{Refinement Variables}
Figure~\ref{fig:constraints:syntax} describes
the syntax of constraints.
A \emph{refinement variable} represents
an unknown $n$-ary relation, \ie is an
unknown refinement over the free variables
$z_1,\ldots,z_n$ (abbreviated to $\params{z}$).
We refer to $\params{\tb{z}{\utype}}$ as
the \emph{parameters} of a variable $\kvar$,
written $\args{\kvar}$, and say that $n$ is
the \emph{arity} of $\kvar$.
We write $\Kvar$ for the set of all
refinement variables.

\mypara{Predicates}
A \emph{predicate} is a refinement
$\refi$ from a decidable logic, or
a \emph{refinement variable}
application $\kva{\kvar}{x_1,\ldots,x_n}$,
or a conjunction of two sub-predicates.
We assume each $\kvar$ is applied to
the same number of variables as its arity.

\mypara{Constraints}
A constraint is a ``tree'' where each
``leaf'' is a \emph{goal} and each
``internal'' node either
(1) universally \emph{quantifies} some variable $x$
    of a basic type $\base$, subjecting it to
    satisfy a \emph{hypothesis} $\pred$, \emph{or}
(2) \emph{conjoins} two sub-constraints.

\mypara{NNF Horn Clauses}
Our constraints are Horn Clauses in
Negation Normal Form~\cite{hornsurvey}.
NNF constraints are a generalization
of the ``flat'' Constraint Horn Clause (CHC)
formulation used in Liquid Typing~\cite{LiquidPLDI08}.
Intuitively, each root-to-leaf path in the tree
is a CHC that requires that the leaf's goal be
implied by the internal nodes' hypotheses.
The nesting structure of NNF permits
the scope-based optimization that makes \toolname
practical~(\S~\ref{sec:algo:scope}).

\subsection{Semantics}
\label{sec:constraint:semantics}

We describe the semantics of constraints by formalizing
the notion of constraint satisfaction. Intuitively, a
constraint is satisfiable if there is an interpretation
or \emph{assignment} of the $\kvar$-variables to concrete
refinements $\refi$ such that the resulting logical
formula is \emph{well-formed} and \emph{valid}.

\mypara{Well-formedness}
An \emph{environment} $\tcenv$ is a set of type
bindings $\tb{x}{\utype}$.
We write $\wf{\tcenv}{\cstr}$ to denote that
$\cstr$ is \emph{well-formed} under environment
$\tcenv$.
Figure~\ref{fig:constraints:wf} summarizes the
rules for establishing well-formedness.
We write $\hastype{\tcenv}{\refi}{\tbool}$ if
$\refi$ can be typed as a \tbool using the
standard (non-refined) rules, and use it to
check individual refinements (\rulename{W-Ref}).
A conjunction of sub-constraints is well-formed if
each conjunct is well-formed (\rulename{W-And}).
Finally, each implication is well formed if
the hypothesis is well-formed and the consequent
is well-formed under the extended environment
(\rulename{W-Imp}).
(Note that a constraint containing a $\kvar$-application
is \emph{not} well-formed.)

\mypara{Assignments}
An \emph{assignment} $\soln$ is a map from the
set of refinement variables $\Kvar$ to the set of
refinements $\Refi$.
An assignment is \emph{partial} if its range is predicates
(containing $\kvar$-variables, not just refinements).
The function $\apply{\soln}{\cdot}$
substitutes $\kvar$-variables in
predicates $\pred$,
constraints $\cstr$,
and sets of constraints $\Cstr$
with their $\soln$-assignments:
$$\begin{array}{lcl}
\apply{\soln}{\kvapp{\kvar}{y}}
  & \doteq & \SUBST{\soln(\kvar)}{\params{x}}{\params{y}}
    \quad \mbox{where $\params{x} = \args{\kvar}$}        \\
\apply{\soln}{\refi}
  & \doteq & \refi                                         \\
\apply{\soln}{\pred \wedge \pred'}
  & \doteq & \apply{\soln}{\pred} \wedge \apply{\soln}{\pred'} \\
\apply{\soln}{\csimp{x}{\base}{\pred}{c}}
  & \doteq & \csimp{x}{\base}{\apply{\soln}{\pred}}{\apply{\soln}{\cstr}} \\
\apply{\soln}{\csand{\cstr_1}{\cstr_2}}
  & \doteq & \csand{ \apply{\soln}{\cstr_1}   }{ \apply{\soln}{\cstr_2}  } \\
\apply{\soln}{\Cstr}
  & \doteq & \{ \apply{\soln}{\cstr} \spmid \cstr \in \Cstr \}
\end{array}$$


\mypara{Satisfaction}
A \emph{verification condition} (VC) is a constraint
$\cstr$ that has no $\kvar$ applications; \ie
$\kvars{\cstr} = \emptyset$.
We say that an assignment $\soln$ \emph{satisfies}
constraint $\cstr$, written $\satisfies{\soln}{\cstr}$,
if $\apply{\soln}{\cstr}$ is a VC that is:
(a)~\emph{well-formed}, and
(b)~\emph{logically valid} \cite{NelsonOppen}.
Figure~\ref{fig:satisfaction} formalizes the notion
of satisfaction and lifts it to sets of constraints.
We say that $\cstr$ \emph{is satisfiable} if
\emph{some} assignment $\soln$ satisfies $\cstr$.
The following lemmas follow from the definition
of satisfaction and validity:

\begin{lemma}[\textbf{Weaken}] \label{lem:weakening} 
If
  $\satisfies{\assm}{\pred}$
then
  $\satisfies{\assm}{\pred~\vee~\pred'}$.
\end{lemma}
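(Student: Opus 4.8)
The plan is to reduce an arbitrary assumption $\assm$ to its underlying ground solution $\soln$ and then invoke a one-line logical fact about validity. Observe that \rulename{Sat-Ext} does nothing but move the outermost hypothesis binding of $\assm$ into the constraint as a nested implication, so a straightforward induction on $\assm$ (keeping the constraint slot general, and noting \rulename{Sat-Many} never applies since the goal is a single predicate) shows that, writing $G[\cdot]$ for the nested-implication context $\csimp{x_1}{\base_1}{\pred_1}{(\cdots \csimp{x_k}{\base_k}{\pred_k}{\cdot})}$ assembled from the bindings of $\assm$, we have $\satisfies{\assm}{\cstr}$ iff $\satisfies{\soln}{G[\cstr]}$ for every $\cstr$. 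Instantiating this at $\cstr = \pred$ and at $\cstr = \pred \vee \pred'$ reduces the lemma to relating $\satisfies{\soln}{G[\pred]}$ and $\satisfies{\soln}{G[\pred \vee \pred']}$, two goals whose guarding contexts are identical and differ only at the leaf.

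For this ground case I would unfold \rulename{Sat-Base}: the hypothesis gives that $\apply{\soln}{G[\pred]}$ is well-formed and that $\smtvalid{\apply{\soln}{G[\pred]}}$ holds. Since $\apply{\soln}{\cdot}$ commutes with $\vee$ just as its definition has it commute with $\wedge$, the formula $\apply{\soln}{G[\pred \vee \pred']}$ is literally $\apply{\soln}{G[\pred]}$ with its leaf consequent $\apply{\soln}{\pred}$ replaced by $\apply{\soln}{\pred} \vee \apply{\soln}{\pred'}$. The validity side is then immediate from monotonicity: a valid implication $H \Rightarrow A$ stays valid as $H \Rightarrow A \vee B$, and this propagates outward unchanged through each enclosing $\forall/\Rightarrow$ layer of $G[\cdot]$, giving $\smtvalid{\apply{\soln}{G[\pred \vee \pred']}}$.

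I expect the only real obstacle to be the well-formedness premise of \rulename{Sat-Base} rather than validity. Establishing $\wf{\emptyset}{\apply{\soln}{G[\pred \vee \pred']}}$ unwinds through \rulename{W-Imp} to typing the leaf $\apply{\soln}{\pred} \vee \apply{\soln}{\pred'}$ as a \tbool under the accumulated environment (\rulename{W-Ref}), which demands that \emph{both} disjuncts type as \tbool. The hypothesis supplies this for $\apply{\soln}{\pred}$ but says nothing about $\pred'$, so I would either carry the implicit side condition that $\pred'$ is well-formed in the relevant environment, or observe that the $\pred'$ arising during constraint generation are well-formed by construction. Granting well-formedness of $\pred'$, re-applying \rulename{Sat-Base} yields $\satisfies{\soln}{G[\pred \vee \pred']}$, and the reduction of the first paragraph then delivers $\satisfies{\assm}{\pred \vee \pred'}$, completing the argument.
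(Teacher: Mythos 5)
Your proposal is correct and takes essentially the same route the paper intends: the paper offers no explicit proof, remarking only that the lemma ``follows from the definition of satisfaction and validity,'' and your argument---inverting \rulename{Sat-Ext} to reduce to a ground solution, then applying \rulename{Sat-Base} with monotonicity of validity under disjunctive weakening of the consequent---is precisely that definitional unfolding spelled out. Your observation that well-formedness of $\pred'$ must be assumed (or guaranteed by construction) for the \rulename{Sat-Base} premise to go through is a genuine side condition that the paper's statement leaves implicit, and flagging it is the right call.
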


\begin{lemma}[\textbf{Strengthen}] \label{lem:strengthening} 
If
  $\satisfies{\soln}{\csimp{x}{\base}{\pred}{\cstr}}$ and
  $\satisfies{\soln}{\csimp{x}{\base}{\pred'}{\pred}}$
then
  $\satisfies{\soln}{\csimp{x}{\base}{\pred'}{\cstr}}$.
\end{lemma}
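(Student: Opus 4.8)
The plan is to reduce both hypotheses to the base rule \rulename{Sat-Base}, rebuild the conclusion with the same rule, and carry out the logical content as a single cut. Because $\soln$ is a total assignment, $\apply{\soln}{\cdot}$ sends each of the three constraints in the statement to a $\kvar$-free VC, and since the left-hand side of each judgement is the bare assignment $\soln$ (not an extended assumption, and the right-hand side a single constraint rather than a set), each instance of $\satisfies{\soln}{\cdot}$ here can only have been derived by \rulename{Sat-Base}. First I would invert \rulename{Sat-Base} on the two hypotheses: from $\satisfies{\soln}{\csimp{x}{\base}{\pred}{\cstr}}$ I extract $\wf{\emptyset}{\apply{\soln}{\csimp{x}{\base}{\pred}{\cstr}}}$ together with $\smtvalid{\apply{\soln}{\csimp{x}{\base}{\pred}{\cstr}}}$, and symmetrically from the second hypothesis with $\pred'$ and $\pred$ replacing $\pred$ and $\cstr$. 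Using the defining equation $\apply{\soln}{\csimp{x}{\base}{\pred}{\cstr}} = \csimp{x}{\base}{\apply{\soln}{\pred}}{\apply{\soln}{\cstr}}$, all three objects become implication constraints whose hypotheses and consequents are $\kvar$-free.

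For the well-formedness obligation (a) of \rulename{Sat-Base}, I would invert \rulename{W-Imp} on the first hypothesis to learn that $\apply{\soln}{\cstr}$ is well formed under the binding $x : \base$, and invert \rulename{W-Imp} on the second hypothesis to learn that $\apply{\soln}{\pred'}$ is well formed under the same binding. Re-applying \rulename{W-Imp} with this consequent and this hypothesis then yields $\wf{\emptyset}{\csimp{x}{\base}{\apply{\soln}{\pred'}}{\apply{\soln}{\cstr}}}$, which is exactly the well-formedness of $\apply{\soln}{\csimp{x}{\base}{\pred'}{\cstr}}$.

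For the validity obligation (b), I would use that the logical meaning of an implication constraint $\csimp{x}{\base}{A}{c}$ is the universally quantified implication $\forall x{:}\base.\, A \Rightarrow \llbracket c \rrbracket$, where $\llbracket c \rrbracket$ is the formula denoted by the ($\kvar$-free) subtree $c$. Validity of the first hypothesis thus reads $\forall x{:}\base.\, \apply{\soln}{\pred} \Rightarrow \llbracket \apply{\soln}{\cstr} \rrbracket$, and validity of the second reads $\forall x{:}\base.\, \apply{\soln}{\pred'} \Rightarrow \apply{\soln}{\pred}$, where $\pred$, being a single predicate, denotes one formula under the binder. Chaining these by transitivity of implication, which is a valid entailment in the decidable logic underlying $\smtvalid{\cdot}$, gives $\forall x{:}\base.\, \apply{\soln}{\pred'} \Rightarrow \llbracket \apply{\soln}{\cstr} \rrbracket$, i.e. $\smtvalid{\csimp{x}{\base}{\apply{\soln}{\pred'}}{\apply{\soln}{\cstr}}}$. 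A final application of \rulename{Sat-Base} to (a) and (b) then delivers $\satisfies{\soln}{\csimp{x}{\base}{\pred'}{\cstr}}$.

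The argument is essentially routine, and the one point that deserves care is the cut in the validity step: it works precisely because the binder $x$ and the consequent $\apply{\soln}{\cstr}$ are shared between the two hypotheses, so the two quantified implications meet at the common mid-formula $\apply{\soln}{\pred}$ under the \emph{identical} quantifier. Since $\pred$ is a predicate rather than a full constraint tree, it denotes a single formula and the cut is a first-order transitivity, so no induction on the shape of $\cstr$ is needed -- the consequent $\cstr$ is passed through untouched. I expect no genuine obstacle beyond fixing the denotation $\llbracket \cdot \rrbracket$ of constraint trees so that ``validity factors through the top-level implication'' is a definitional unfolding rather than a separate lemma.
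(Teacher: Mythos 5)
Your proposal is correct and matches the paper's intent exactly: the paper gives no explicit proof, asserting only that the lemma ``follows from the definition of satisfaction and validity,'' which is precisely the argument you spell out — inversion of \rulename{Sat-Base} (the only applicable rule for a bare assignment and a single constraint), reassembly of well-formedness via \rulename{W-Imp}, and a first-order cut under the shared binder for validity. Your observation that no induction on $\cstr$ is needed because the consequent is passed through untouched is the right justification for why the paper treats this as immediate.
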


\mypara{Composing Assignments}
For two assignments $\soln$ and $\soln'$ we write
$\comp{\soln}{\soln'}$ to denote the assignment:
$$\apply{(\comp{\soln}{\soln'})}{\kvar} \ \doteq\ \apply{\soln}{\apply{\soln'}{\kvar}}$$
The following Lemmas, proved by induction
on the structure of $\cstr$, characterize
the relationship between assignment
composition and satisfaction.

\begin{lemma}[\textbf{Composition-Preserves-Sat}] \label{lem:sat-composition} 
If $\satisfies{\soln}{\cstr}$ then $\satisfies{\comp{\soln'}{\soln}}{\cstr}$.
\end{lemma}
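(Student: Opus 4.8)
The plan is to prove something stronger than mere transfer of satisfaction, namely that the two assignments produce the \emph{same} substituted constraint: $\apply{(\comp{\soln'}{\soln})}{\cstr} = \apply{\soln}{\cstr}$. This suffices because, for a plain assignment, $\satisfies{\cdot}{\cstr}$ is decided solely by rule \rulename{Sat-Base}, whose premises are conditions on the substituted constraint: that $\apply{\cdot}{\cstr}$ be a well-formed, SMT-valid VC. If the two substituted constraints coincide syntactically, then the well-formedness premise $\wf{\emptyset}{\apply{\soln}{\cstr}}$ and the validity premise $\smtvalid{\apply{\soln}{\cstr}}$ furnished by the hypothesis $\satisfies{\soln}{\cstr}$ are verbatim the premises needed to derive $\satisfies{\comp{\soln'}{\soln}}{\cstr}$.

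First I would extract the one consequence of the hypothesis that drives everything: by \rulename{Sat-Base}, $\apply{\soln}{\cstr}$ is well-formed, and since a constraint containing a $\kvar$-application is never well-formed, $\kvars{\apply{\soln}{\cstr}} = \emptyset$. Because substituting actual arguments for formal parameters neither creates nor destroys $\kvar$-occurrences, this forces $\soln(\kvar)$ to be $\kvar$-free for every $\kvar \in \kvars{\cstr}$. For such a $\kvar$ the composition collapses: $\apply{(\comp{\soln'}{\soln})}{\kvar} = \apply{\soln'}{\soln(\kvar)} = \soln(\kvar)$, using the defining clause $\apply{\soln'}{\refi} = \refi$ (propagated through $\wedge$) applied to the $\kvar$-free refinement $\soln(\kvar)$. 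Hence $\comp{\soln'}{\soln}$ and $\soln$ agree on all of $\kvars{\cstr}$. A routine induction on the structure of $\cstr$ — the induction the surrounding lemmas share — then shows that any two assignments agreeing on $\kvars{\cstr}$ yield identical results under $\apply{\cdot}{\cstr}$: the $\csand{\cstr_1}{\cstr_2}$ and $\csimp{x}{\base}{\pred}{\cstr}$ cases follow from the inductive hypotheses, the refinement case is immediate, and the $\kvapp{\kvar}{y}$ case is exactly the agreement just established. This yields $\apply{(\comp{\soln'}{\soln})}{\cstr} = \apply{\soln}{\cstr}$ and closes the argument via \rulename{Sat-Base}.

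The step I expect to be most delicate is the $\kvapp{\kvar}{y}$ case of the induction, and in particular the identity law for substitution into $\kvar$-free refinements that underlies the collapse of the composition. If one instead proves the fully general homomorphism $\apply{(\comp{\soln'}{\soln})}{\cstr} = \apply{\soln'}{(\apply{\soln}{\cstr})}$ for arbitrary, possibly partial, assignments, then the genuinely non-routine point becomes showing that the actual-argument renaming $\SUBST{\cdot}{\args{\kvar}}{\params{y}}$ commutes with $\apply{\soln'}{\cdot}$; this is the only place where capture-avoidance must be checked rather than assumed, and it hinges on the standard freshness and $\alpha$-renaming convention for the formal parameters $\args{\kvar}$, which guarantees that the actual arguments $\params{y}$ neither capture nor are captured by variables introduced by $\soln'$. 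For the present lemma, however, the hypothesis sidesteps this subtlety entirely, since $\soln(\kvar)$ is already $\kvar$-free, leaving only the elementary observation that an assignment is the identity on $\kvar$-free refinements.
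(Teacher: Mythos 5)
Your proof is correct, and it is essentially the argument the paper intends: the paper offers no detail beyond ``proved by induction on the structure of $\cstr$,'' and your proof is exactly such an induction, driven by the two facts the paper's definitions make available (\rulename{Sat-Base} is the only rule deriving $\satisfies{\soln}{\cstr}$ for a plain assignment, and well-formedness excludes $\kvar$-applications). If anything your route is sharper than needed: by noting that the hypothesis forces $\soln(\kvar)$ to be $\kvar$-free for every $\kvar \in \kvars{\cstr}$, you get that $\comp{\soln'}{\soln}$ and $\soln$ agree on $\kvars{\cstr}$, hence $\apply{(\comp{\soln'}{\soln})}{\cstr}$ and $\apply{\soln}{\cstr}$ coincide syntactically and both \rulename{Sat-Base} premises transfer verbatim.
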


\begin{lemma}[\textbf{Composition-Validity}] \label{lem:valid-composition} 
$\satisfies{\comp{\soln}{\soln'}}{\cstr}$ if and only if $\satisfies{\soln}{\apply{\soln'}{\cstr}}$. 
\end{lemma}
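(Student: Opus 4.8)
The plan is to reduce the biconditional to a single syntactic identity and then read off both directions from the definition of satisfaction. Concretely, I would first establish the \emph{substitution lemma}
$$\apply{(\comp{\soln}{\soln'})}{\cstr} \;=\; \apply{\soln}{\apply{\soln'}{\cstr}}$$
for every constraint $\cstr$ (and, mutually, every predicate $\pred$). Given this identity the lemma is immediate: by rule \rulename{Sat-Base}, $\satisfies{\comp{\soln}{\soln'}}{\cstr}$ holds exactly when $\apply{(\comp{\soln}{\soln'})}{\cstr}$ is well-formed (which, by the remark that any constraint containing a $\kvar$-application is ill-formed, also forces it to be a genuine VC) and valid, while $\satisfies{\soln}{\apply{\soln'}{\cstr}}$ holds exactly when $\apply{\soln}{\apply{\soln'}{\cstr}}$ is well-formed and valid. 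The substitution lemma says these are the \emph{same} formula, so the two judgments stand or fall together.

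I would prove the substitution lemma by structural induction on $\cstr$, with an inner induction on predicates $\pred$. The conjunction case $\csand{\cstr_1}{\cstr_2}$ and the implication case $\csimp{x}{\base}{\pred}{\cstr'}$ are routine, since the definition of $\apply{\cdot}{\cdot}$ is homomorphic in both constructors, so the induction hypotheses on the subconstraints (and, for the hypothesis $\pred$, the predicate induction) close them; the predicate conjunction case $\pred_1 \wedge \pred_2$ is likewise homomorphic. The pure-refinement base case $\pred = \refi$ is trivial, as none of $\soln$, $\soln'$, or $\comp{\soln}{\soln'}$ rewrites a bare $\refi$, so both sides equal $\refi$.

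The only case with real content -- and the step I expect to be the main obstacle -- is the refinement-variable application $\pred = \kvapp{\kvar}{y}$. Writing $\params{x} = \args{\kvar}$ for the canonical parameters of $\kvar$, the left-hand side is $\SUBST{(\comp{\soln}{\soln'})(\kvar)}{\params{x}}{\params{y}}$, which by the definition of composition equals $\SUBST{\apply{\soln}{\soln'(\kvar)}}{\params{x}}{\params{y}}$, whereas the right-hand side is $\apply{\soln}{\SUBST{\soln'(\kvar)}{\params{x}}{\params{y}}}$. Thus the case reduces to showing that the parameter renaming $[\params{y}/\params{x}]$ \emph{commutes} with the $\kvar$-assignment $\apply{\soln}{\cdot}$ on the predicate $\soln'(\kvar)$. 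This is precisely where the care lies: applying $\soln$ rewrites each inner application $\kvapp{\kvar'}{w}$ to $\SUBST{\soln(\kvar')}{\args{\kvar'}}{\params{w}}$, and I must ensure this inner substitution does not interfere with the outer renaming.

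I would discharge this with a small auxiliary commutation lemma, proved by induction on $\soln'(\kvar)$, relying on two facts: each $\soln(\kvar')$ is a well-formed refinement whose free variables lie among its canonical parameters $\args{\kvar'}$, and (by the standard freshness convention) those canonical parameters are chosen disjoint from the program variables $\params{x},\params{y}$ being renamed. Consequently $[\params{y}/\params{x}]$ leaves $\soln(\kvar')$ itself untouched and acts only on the actual arguments $\params{w}$, so substituting-then-renaming and renaming-then-substituting agree by the usual composition-of-substitutions identity. With this commutation in hand the $\kvar$-application case closes, completing the induction, the substitution lemma, and hence the biconditional.
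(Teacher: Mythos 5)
Your proposal is correct and takes essentially the same route as the paper, which proves Lemma~\ref{lem:valid-composition} simply ``by induction on the structure of $\cstr$'' --- precisely your substitution identity $\apply{(\comp{\soln}{\soln'})}{\cstr} = \apply{\soln}{\apply{\soln'}{\cstr}}$ followed by an appeal to \rulename{Sat-Base}. Your explicit handling of the $\kvapp{\kvar}{y}$ case, where the parameter renaming must commute with $\apply{\soln}{\cdot}$ under the standard convention that solutions mention only their canonical parameters, spells out the one nontrivial detail the paper leaves implicit.
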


\subsection{Dependencies and Decomposition}
\label{sec:decomposition}

Next, we describe various properties of constraints
that we will exploit to determine satisfiability.

\mypara{Flat Constraints}
A \emph{flat} or non-nested constraint $\cstr$ is of the form:
$$\csimp{x_1}{\base_1}{\pred_1}{\ldots~\Rightarrow~\csimp{x_n}{\base_n}{\pred_n}{\pred}}\label{eq:flat}$$
By induction on the structure of $\cstr$ we can
show that the procedure $\flatten{\cstr}$, shown
in Figure~\ref{fig:flatten}, returns a set of flat
constraints that 
are satisfiable exactly when $\cstr$ is satisfiable.
\begin{lemma}[\textbf{Flattening}] \label{lem:flatten}
$\satisfies{\soln}{\cstr}$
if and only if
$\satisfies{\soln}{\flatten{\cstr}}$.
\end{lemma}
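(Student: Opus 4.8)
The plan is to proceed by structural induction on $\cstr$, as the statement suggests. The first observation is that, because the top-level judgment fixes the assumption to a bare assignment $\soln$, the only satisfaction rules whose conclusion can match are \rulename{Sat-Base} (for the single constraint $\cstr$) and \rulename{Sat-Many} (for the set $\flatten{\cstr}$); the assumption-extending rule \rulename{Sat-Ext} can never be the last step when the assumption is just $\soln$, since its conclusion carries an extended assumption. Assuming $\soln$ is a full assignment, so that $\apply{\soln}{\cstr}$ and every $\apply{\soln}{c}$ are genuine VCs, I can therefore collapse both sides to the \rulename{Sat-Base} conditions: $\satisfies{\soln}{\cstr}$ holds iff $\apply{\soln}{\cstr}$ is well-formed under $\emptyset$ and valid, whereas $\satisfies{\soln}{\flatten{\cstr}}$ holds iff, for every flat $c \in \flatten{\cstr}$, the VC $\apply{\soln}{c}$ is well-formed and valid. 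It then suffices to show that well-formedness and validity each ``distribute'' over $\flatten{\cdot}$, i.e. that $\wf{\emptyset}{\apply{\soln}{\cstr}}$ is equivalent to $\wf{\emptyset}{\apply{\soln}{c}}$ for all $c \in \flatten{\cstr}$, and likewise for validity.

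I would read off the three clauses of $\flatten{\cdot}$ from Figure~\ref{fig:flatten}: a goal $\pred$ yields $\{\pred\}$; a conjunction yields $\flatten{\cstr_1} \cup \flatten{\cstr_2}$; and an implication $\csimp{x}{\base}{\pred}{\cstr'}$ yields $\{\,\csimp{x}{\base}{\pred}{c'} \mid c' \in \flatten{\cstr'}\,\}$, prepending the hypothesis to every flattened path of the body. The base case $\cstr = \pred$ is immediate. The conjunction case is routine: \rulename{W-And} splits well-formedness of $\apply{\soln}{\csand{\cstr_1}{\cstr_2}}$ into that of each conjunct; validity of a conjunction is the conjunction of the two validities; and \rulename{Sat-Many} turns satisfaction of the union into satisfaction of each half. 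Two appeals to the induction hypothesis then close it. The convenient way to organize the validity half uniformly is to prove, by the same induction, the formula-level equivalence that $\apply{\soln}{\cstr}$ and $\bigwedge_{c \in \flatten{\cstr}} \apply{\soln}{c}$ are logically equivalent; validity of the left is then equivalent to validity of every conjunct on the right.

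The crux, and where I expect the real work to lie, is the implication case. Here I must transport the induction hypothesis underneath a universally quantified implication prefix: I need validity of $\forall x{:}\base.\ \apply{\soln}{\pred} \Rightarrow \apply{\soln}{\cstr'}$ to be equivalent to validity of $\forall x{:}\base.\ \apply{\soln}{\pred} \Rightarrow \apply{\soln}{c'}$ for each $c' \in \flatten{\cstr'}$. The key fact to isolate is the logical equivalence $\forall x.\,(P \Rightarrow \bigwedge_i Q_i) \equiv \bigwedge_i \forall x.\,(P \Rightarrow Q_i)$, valid in the decidable refinement logic regardless of the quantifier structure inside the $Q_i$; prepending $\csimp{x}{\base}{\pred}{\cdot}$ to every flattened path is exactly the right-hand side of this law applied to the induction hypothesis $\apply{\soln}{\cstr'} \equiv \bigwedge_{c'} \apply{\soln}{c'}$. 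The accompanying subtlety is the well-formedness bookkeeping: \rulename{W-Imp} requires checking the hypothesis $\apply{\soln}{\pred}$ and the body under the extended environment $\tcenvext{x}{\base}$, and in the flattened form this binder and hypothesis are reproduced identically at the head of every path. I must argue that this duplication is harmless, i.e. that each leaf goal in a flattened path still sees exactly the binders in scope at the corresponding leaf of the original tree, so that well-formedness of the whole is equivalent to well-formedness of each path. This scope-alignment is the structural content of the lemma, and verifying it carefully — rather than the arithmetic of the distribution law — is the part demanding the most attention.
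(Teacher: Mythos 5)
Your proposal is correct and takes essentially the same route as the paper: the paper's entire proof is the remark that the result follows ``by induction on the structure of $\cstr$,'' and your argument—collapsing satisfaction to \rulename{Sat-Base}/\rulename{Sat-Many}, then distributing validity and well-formedness over the clauses of $\flatten{\cdot}$ using $\forall x.\,(P \Rightarrow \bigwedge_i Q_i) \equiv \bigwedge_i \forall x.\,(P \Rightarrow Q_i)$—is precisely that induction with the details filled in. The only omission is the clause $\flatten{\ttrue} = \emptyset$, which you should add for completeness (both sides hold vacuously there, and it is what makes the empty flattened set possible in the implication case).
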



\mypara{Heads and Bodies}
Let
$\cstr\equiv\csimp{x_1}{\base_1}{\pred_1}{\ldots\Rightarrow\csimp{x_n}{\base_n}{\pred_n}{\pred}}$
be a flat constraint.
We write $\head{\cstr}$ for the predicate $\pred$
that is the right-most ``consequent'' of $\cstr$.
We write $\body{\cstr}$ for the set of predicates
$\pred_1,\ldots,\pred_n$ that are the ``antecedents''
of $\cstr$.

\mypara{Dependencies}
Let $\kvars{\cstr}$ denote the
the set of $\kvar$-variables
in a constraint $\cstr$.
We can define the following sets of \emph{dependencies}: 
\begin{align*}
\dep{\soln} & \doteq \{ (\kvar, \kvar') \spmid \kvar \in \apply{\soln}{\kvar'} \}
  \tag{of a \emph{solution} $\soln$} \\
\dep{\cstr} & \doteq \{ (\kvar, \kvar') \spmid \kvar \in \body{\cstr}, \kvar' \in \head{\cstr} \}
  \tag{of a \emph{flat constraint} $\cstr$} \\
\dep{\cstr} & \doteq \cup_{\cstr' \in \flatten{\cstr}} \dep{\cstr'}
  \tag{of an \emph{NNF constraint} $\cstr$} \\
\deps{\Kvarc}{\cstr} & \doteq \dep{\cstr} \setminus (\Kvar \times \Kvarc \cup \Kvarc \times \Kvar)
  \tag{of a constraint \emph{excluding} $\Kvarc$}
\end{align*}
We write $\depstar{\cstr}$ (resp. $\depsstar{\Kvarc}{\cstr}$)
for the reflexive and transitive closure of
$\dep{\cstr}$ (resp. $\deps{\Kvarc}{\cstr}$).
Intuitively, the dependencies of a constraint comprise the
set of pairs $(\kvar, \kvar')$ where $\kvar$ appears in a
body (hypothesis) for a clause where $\kvar'$ is in the head (goal).

\mypara{Cuts and Cycles}
A set of variables $\Kvarc$ \emph{cuts} $\cstr$ if
the relation $\depsstar{\Kvarc}{\cstr}$ is \emph{acyclic}.
A set of variables $\Kvar'$ is \emph{acyclic} in $\cstr$
if $\Kvar - \Kvar'$ cuts $\cstr$.
Intuitively, a set $\Kvar'$ is acyclic in $\cstr$ if
the closure of dependencies \emph{restricted to} $\Kvar'$
(\ie excluding variables not in $\Kvar'$) has no cycles.
A single variable $\kvar$ is \emph{acyclic} in $\cstr$
if $\{ \kvar \}$ is acyclic in $\cstr$.
Intuitively, a single variable $\kvar$ is acyclic in $\cstr$
if there is no clause in $\cstr$ where $\kvar$ appears in both
the head and the body.
A constraint is \emph{acyclic} if the set of
\emph{all} $\Kvar$ is acyclic in $\cstr$,
\ie if $\depstar{\cstr}$ is acyclic.
In the sequel, for clarity of exposition we assume the
invariant that whenever relevant, $\kvar$ is acyclic in
$\cstr$. (Of course, our algorithm maintains this property,
as described in~\S~\ref{sec:algo:cyclic}.)

\mypara{Definitions and Uses}
The \emph{definitions} (resp. \emph{uses}) of $\kvar$ in $\cstr$,
written $\defns{\cstr}{\kvar}$ (resp. $\uses{\cstr}{\kvar}$),
are the subset of $\flatten{\cstr}$ such that the head
contains (resp. does not contain) $\kvar$:
\begin{align}
\defns{\cstr}{\kvar}
   & \doteq \ \{ \cstr' \mid \cstr' \in \flatten{\cstr},\ \kvar \in \head{\cstr'} \}
   \tag{definitions} \\
\uses{\cstr}{\kvar}
   & \doteq \ \{ \cstr' \mid \cstr' \in \flatten{\cstr},\ \kvar \not \in \head{\cstr'} \}
   \tag{uses}
\end{align}
By induction on the structure of $\cstr$ we can check that:
\begin{lemma}[\textbf{Flatten-Definitions}]\label{lem:flat-defn}
$\csimps{x}{\pred}{\defns{\cstr}{\kvar}} \equiv \defns{(\csimps{x}{\pred}{\cstr})}{\kvar}$
\end{lemma}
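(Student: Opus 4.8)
The plan is to unfold $\defns{\cdot}{\kvar}$ on both sides as a \emph{filter} of $\flatten{\cdot}$ by the predicate ``$\kvar \in \head{\cdot}$'', and then reduce the claim to two elementary commutation facts about the operation that prepends the binder $(x,\pred)$, written $\csimps{x}{\pred}{\cdot}$ (read element-wise when applied to a set of constraints). Matching the authors' hint, the structural induction on $\cstr$ will be confined to establishing the first of these facts; the remainder is pure set algebra.

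First I would record the defining clause of $\flatten$ from Figure~\ref{fig:flatten} for implications, namely $\flatten{\csimps{x}{\pred}{\cstr}} = \csimps{x}{\pred}{\flatten{\cstr}}$. If $\csimps{x}{\pred}{\cdot}$ prepends a single binder this is immediate from that clause; if $(x,\pred)$ abbreviates a sequence of binders it follows by a one-line induction on the length of the sequence (equivalently, by induction on the structure of $\cstr$ through the $\csand$ and predicate cases). Second I would observe that prepending antecedents does not touch the right-most consequent, so $\head{\csimps{x}{\pred}{\cstr'}} = \head{\cstr'}$ for every flat $\cstr'$, and that $\csimps{x}{\pred}{\cdot}$ is injective on flat constraints. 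Combining these, the computation is
\[
\defns{(\csimps{x}{\pred}{\cstr})}{\kvar}
 \;=\; \{\, d \in \csimps{x}{\pred}{\flatten{\cstr}} \mid \kvar \in \head{d}\,\}
 \;=\; \csimps{x}{\pred}{\defns{\cstr}{\kvar}},
\]
where the first equality uses the flattening fact and the second uses that the filter ``$\kvar \in \head{\cdot}$'' commutes with the injective image $\csimps{x}{\pred}{\cdot}$ precisely because heads are preserved.

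The only real obstacle is making this last filter/image exchange airtight: I must check that prepending the binder neither adds nor removes any clause whose head mentions $\kvar$ (this is exactly head-preservation) and does not merge two distinct flat constraints into one (injectivity), so that the image of the filtered set coincides with the filter of the image. A secondary point to get right is that $\pred$ may itself be a conjunction and that $\cstr$'s body may be a $\csand$; in both cases $\flatten$ leaves the conjunctive \emph{hypothesis} intact and only distributes over the constraint-level $\csand$, so head-preservation and the flattening fact are unaffected, and the $\csand$ case of the induction goes through because $\flatten$, $\defns{\cdot}{\kvar}$, and $\csimps{x}{\pred}{\cdot}$ each distribute over set union.
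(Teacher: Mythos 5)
Your proof is correct, and it takes a genuinely more direct route than the paper, which disposes of this lemma with a one-line appeal to structural induction on $\cstr$ (no written-out proof appears in the appendix). Your argument shows the induction is essentially unnecessary: for a single prepended binder, your ``flattening fact'' $\flatten{\csimps{x}{\pred}{\cstr}} = \csimps{x}{\pred}{\flatten{\cstr}}$ is literally the implication clause defining $\flattensym$ in Figure~\ref{fig:flatten}, read element-wise, so the whole lemma reduces to head-preservation plus a filter/image exchange, which is pure set algebra; the case analysis on $\cstr$ that the paper's induction would perform is already baked into the recursive definition of $\flattensym$, and your observation that $\flattensym$, $\defns{\cdot}{\kvar}$, and $\csimps{x}{\pred}{\cdot}$ all distribute over union is exactly why those cases would be mechanical anyway. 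Your treatment of binder sequences (iterating the one-binder case) also matches how the lemma is actually invoked in the proof of Theorem~\ref{thm:defns}, where it is applied to $\csimpss{x}{\pred}{\cstr'}$ rather than a single binder. One small remark: the injectivity of $\csimps{x}{\pred}{\cdot}$ that you single out as the ``only real obstacle'' is not actually needed. For sets, $\{d \in F[S] \mid P(d)\} = F[\{s \in S \mid P(s)\}]$ holds whenever $P$ is invariant under $F$ on $S$ (here guaranteed by head-preservation), regardless of whether $F$ merges elements: any collision $F(s_1) = F(s_2)$ is coherent on both sides, since $P(s_1) = P(F(s_1)) = P(F(s_2)) = P(s_2)$. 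So that part of your argument is over-cautious but harmless, and since prepending a fixed binder is in fact injective on syntax, nothing breaks either way.
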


Since the definitions and uses of a $\kvar$ partition
$\flatten{\cstr}$, we can show that $\soln$ satisfies
$\cstr$ if it satisfies the definitions and uses of
$\kvar$ in $\cstr$:
\begin{lemma}[\textbf{Partition}]\label{lem:partition} 
  If $\kvar$ is acyclic in $\cstr$ then
  $\satisfies{\soln}{\cstr}$
  if and only if
  $\satisfies{\soln}{\defns{\cstr}{\kvar}}$ and
  $\satisfies{\soln}{\uses{\cstr}{\kvar}}$.
\end{lemma}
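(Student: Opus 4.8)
The plan is to push the statement through the flattening procedure and then observe that definitions and uses form a genuine partition of the flat constraints, so that satisfaction---which is defined pointwise over a set of constraints---factors exactly along this partition. No induction or semantic reasoning about implication is required; the content is entirely the bookkeeping of how satisfaction distributes over a disjoint union.

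First I would invoke Lemma~\ref{lem:flatten}, which gives $\satisfies{\soln}{\cstr}$ if and only if $\satisfies{\soln}{\flatten{\cstr}}$. This reduces the problem to reasoning about the \emph{set} of flat constraints $\flatten{\cstr}$, which is precisely the set on which both $\defns{\cstr}{\kvar}$ and $\uses{\cstr}{\kvar}$ are defined. Next I would establish that $\defns{\cstr}{\kvar}$ and $\uses{\cstr}{\kvar}$ partition $\flatten{\cstr}$. This is immediate from their definitions: a flat constraint $\cstr'$ lies in $\defns{\cstr}{\kvar}$ exactly when $\kvar \in \head{\cstr'}$, and in $\uses{\cstr}{\kvar}$ exactly when $\kvar \notin \head{\cstr'}$. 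These two membership tests are complementary, so the two sets are disjoint and their union is all of $\flatten{\cstr}$.

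Finally I would appeal to the rule \rulename{Sat-Many}, under which $\soln$ satisfies a set of constraints precisely when it satisfies each member. Because the index set $\flatten{\cstr}$ splits disjointly into $\defns{\cstr}{\kvar}$ and $\uses{\cstr}{\kvar}$, the quantification ``for each $\cstr' \in \flatten{\cstr}$'' is equivalent to the conjunction of the two quantifications over each block. Hence $\satisfies{\soln}{\flatten{\cstr}}$ holds if and only if both $\satisfies{\soln}{\defns{\cstr}{\kvar}}$ and $\satisfies{\soln}{\uses{\cstr}{\kvar}}$ hold, and combining this with the first step yields exactly the claim.

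The only point that needs care is that the split is a true partition (both disjoint and covering), which the complementary head-membership conditions guarantee; beyond that there is no real obstacle. It is worth remarking that the acyclicity hypothesis on $\kvar$ plays no role in this particular argument: the partition of $\flatten{\cstr}$ and the pointwise character of set satisfaction give the equivalence whether or not some clause places $\kvar$ in both its head and its body (such a clause is simply counted among the definitions). Acyclicity is the standing assumption of this section and is what later makes the \emph{elimination} of $\kvar$ sound once its definitions have been solved; for Partition itself it is not needed.
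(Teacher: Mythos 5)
Your proof is correct and is essentially the paper's own argument: the paper justifies Partition with precisely the observation that $\defns{\cstr}{\kvar}$ and $\uses{\cstr}{\kvar}$ partition $\flatten{\cstr}$ (by the complementary head-membership conditions), combined with Lemma~\ref{lem:flatten} and the pointwise rule \rulename{Sat-Many}. Your closing remark is also accurate: the acyclicity hypothesis does no work in this lemma and is carried along as the standing convention of the section, becoming essential only for the later elimination results.
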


\subsection{Generation}
\label{sec:constraint:generation}

Next, we show how to map a term $\expr$
into a constraint $\cstr$ whose satisfiability
implies the safety of $\expr$.

\mypara{Shapes ($\shapesym$)}
Procedure $\shapesym$, elided for brevity, takes as input a refined
type $\type$ and returns as output the non-refined
version obtained by erasing the refinements
from $\type$.

\begin{figure}[t!]
$$\begin{array}{lcl}
\toprule
\freshsym & : & (\tcenv \times \Type) \rightarrow \liq{\Type} \\
\midrule
\fresh{[\tb{y_1}{\utype_1},\ldots]}{\base} & \doteq & \reftpv{\vv}{\base}{\kva{\kvar}{y_1,\ldots,\vv}} \\
\quad \mbox{where}       &   & \\
\quad \quad \kvar        & = & \mbox{fresh $\Kvar$ variable} \\
\quad \quad \mbox{s.t.}\ \args{\kvar} & = & [\tb{z_1}{\utype_1},\ldots,\tb{z}{\base}] \\[0.05in]

\fresh{\tcenv}{\trfun{x}{\utype}{\utype'}} & \doteq & \trfun{x}{\type}{\type'} \\
\quad \mbox{where}       &   & \\
\quad \quad \type        & = & \fresh{\tcenv}{\utype} \\
\quad \quad \type'       & = & \fresh{\tcenvext{x}{\utype}}{\utype'} \\[0.05in]

\fresh{\tcenv} {\tvar}   & \doteq & \tvar  \\[0.05in]
\fresh{\tcenv}{\tpoly{\tvar}{\utype}}
                         & \doteq & \tpoly{\tvar}{\fresh{\tcenv}{\utype}} \\[0.1in]
\bottomrule
\end{array}$$
\caption{Fresh Templates: $\liq{\Type}$ denotes \emph{templates},
\ie types with $\kvar$-variables representing unknown refinements.}
\label{fig:fresh}
\end{figure}

\begin{figure}[t!]
$$\begin{array}{lcl}
\toprule
\subsym & : & (\Type \times \Type) \rightarrow \Cstr \\
\midrule
{\sub{\reftpv{x}{\base}{\pred}}{\reftpv{y}{\base}{\predb}}}
  & \doteq & \csimp{x}{\base}{\pred}{\SUBST{\predb}{y}{x}} \\[0.05in]
{\sub{\trfun{x}{\typeb}{\typeb'}}{\trfun{y}{\type}{\type'}}}
  & \doteq & \csand{\cstr}{(\cswith{y}{\type}{\cstr'})} \\
\quad \mbox{where}  &   & \\
\quad \quad \cstr   & = & \sub{\type}{\typeb} \\
\quad \quad \cstr'  & = & \sub{\SUBST{\typeb'}{x}{y}}{\type'} \\[0.05in]
{\sub{\tvar}{\tvar}}
  & \doteq & \ttrue \\[0.05in]
{\sub{\tpoly{\tvar}{\type}}{\tpoly{\tvar}{\type'}}}
  & \doteq & \sub{\type}{\type'} \\[0.05in]
\bottomrule
\end{array}$$
\caption{Subtyping Constraints}
\label{fig:sub}
\end{figure}

\mypara{Templates ($\freshsym$)}
Procedure $\fresh{\tcenv}{\utype}$ (Figure~\ref{fig:fresh})
takes as input
an environment and a non-refined type $\utype$, and
returns a template whose $\shapesym$ is $\utype$.
Recall that each refinement variable
denotes an $n$-ary relation. For example,
a refinement variable that appears in the
\emph{output} type of a function can refer
to the \emph{input} parameter, \ie can be
a \emph{binary} relation between the input
and the output value.
We track the allowed parameters by using the
environment $\tcenv$.
In the base case, $\freshsym$ generates
a fresh $\kvar$ whose parameters correspond
to the binders of $\tcenv$.
In the function type case, $\freshsym$
recursively generates templates for the
input and output refinements, extending
the environment for the output with the
input binder.

\mypara{Subtyping ($\subsym$)}
Procedure $\sub{\type}{\type'}$ (Figure~\ref{fig:sub})
returns the constraint $\cstr$ that must be satisfied
for $\type$ to be a subtype of $\type'$, which
intuitively means, the set of values denoted by
$\type$ to be subsumed by those denoted by $\type'$.
(See \cite{Knowles10, LiquidICFP14} for details.)
In the base case, the sub-type's predicate must
imply the super-type's predicate.
In the function case, we conjoin the
contra-variant input constraint and
the co-variant output constraint.
For the latter, we additionally add
a hypothesis assuming the stronger
input type, using a \emph{generalized
implication} that drops binders with
non-basic types as they are not
supported by the first-order
refinement logic:
$$\begin{array}{lcl}
\cswith{x}{\reftpv{y}{\base}{\pred}}{\cstr}
  & \doteq & \csimp{x}{\base}{\SUBST{\pred}{y}{x}}{\cstr} \\
\cswith{x}{\type}{\cstr}
  & \doteq & \cstr \\
\end{array}$$

\mypara{Generation ($\conssym$)}
Finally, procedure $\cons{\tcenv}{\expr}$
(Figure~\ref{fig:consgen}) takes as input
an environment $\tcenv$ and term $\expr$
and returns as output a pair $(\cstr, \type)$
where $\expr$ typechecks if $\cstr$ holds, and
has the (refinement) template $\type$.
The procedure computes the template
and constraint by a syntax-directed traversal
of the input term.
The key idea is two-fold.
First: generate a fresh template for each
position where the (refinement) types
cannot be directly synthesized -- namely,
function inputs, let-binders, and polymorphic
instantiation sites.
Second: generate subtyping constraints at each
position where values flow from one position into
another.

\begin{itemize}

\item \emphbf{Constants and Variables} yield the
trivial constraint $\ttrue$, and templates
corresponding to their primitive types, or
environment bindings respectively.
(The helper $\singtysym$ strengthens the
refinement to state the value equals the
given variable, dubbed ``selfification''
by \cite{Ou2004}, enabling path-sensitivity
~\cite{LiquidPLDI08,Knowles10}.)

\item \emphbf{Let-binders} ($\eletin{\evar}{\expr_1}{\expr_2})$
yield the conjunction of three sub-constraints.
First, we get a constraint $\cstr_1$ and template
$\type_1$ from $\expr_1$.
Second, we get a constraint $\cstr_2$ and template
$\type_2$ from $\expr_2$ checked under the environment
extended with $\type_1$ for $\evar$.
Third, to hide \evar which may appear in $\type_2$,
we generate a fresh template $\liq{\type}$ with the
shape of $\type_2$ and use $\subsym$ to constrain
$\type_2$ to be a subtype of $\liq{\type}$.
The well-formedness requirement ensures $\liq{\type}$,
which is returned as the expression's template, is
assigned a well-scoped refinement type.
\item \emphbf{Functions} ($\elambda{\evar}{\utype}{\expr}$)
yield a template where the input is assigned a
fresh template of shape $\utype$, and the output
is the template for \expr.

\item \emphbf{Applications} ($\app{\expr}{y}$) yield
the output template with the formal $x$ substituted
for the actual $y$. (ANF ensures the substitution does
not introduce arbitrary expressions into the refinements.)
Furthermore, the input value $y$ is constrained
to be a subtype of the function's input type.

\item \emphbf{Type-Instantiation} ($\tyapp{\expr}{\utype}$)
is handled by generating a fresh template
whose shape is that of the polymorphic instance
$\utype$, and substituting that in place of the
type variable $\tvar$ in the type (scheme) obtained
for $\expr$.

\end{itemize}

\mypara{Soundness of Constraint Generation}
We can prove by induction on the structure of $\expr$
that the NNF constraints generated by $\conssym$
capture the refinement type checking obligations
of~\cite{Knowles10,LiquidICFP14}.
Let $\conssym(\expr)$\ denote the constraint returned
by $\cons{\emptyset}{\expr}$.
We prove that if $\conssym(\expr)$ is satisfiable, then
${\hastype{\emptyset}{\expr}{\type}}$ for some refinement
type $\type$. This combined with the soundness of refinement
typing (Theorem~1,~\cite{LiquidICFP14}) yields:
\begin{theorem}\label{thm:consgen}
If $\conssym(\expr)$\ is satisfiable then $\expr$ is safe.
\end{theorem}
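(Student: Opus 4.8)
The plan is to factor the proof through a stronger generation lemma and then invoke the soundness of refinement typing. Concretely, since $\conssym(\expr)$ is satisfiable, fix a total assignment $\soln$ (one whose range consists of genuine refinements $\refi$, containing no $\kvar$-variables) witnessing $\satisfies{\soln}{\conssym(\expr)}$. I would first establish that $\hastype{\emptyset}{\expr}{\apply{\soln}{\type}}$ in the refinement type system of \cite{Knowles10,LiquidICFP14}, where $\type$ is the template returned alongside $\conssym(\expr)$ by $\cons{\emptyset}{\expr}$. The theorem then follows immediately by appealing to the soundness of that type system (Theorem~1 of \cite{LiquidICFP14}): any term typeable in the empty environment is safe, i.e. $\issafe{\expr}$.

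The heart of the argument is a generalized generation lemma proved by induction on the structure of $\expr$, following the syntax-directed clauses of $\cons{\tcenv}{\expr}$ (Figure~\ref{fig:consgen}). The statement to carry through the induction is: if $\cons{\tcenv}{\expr} = (\cstr, \type)$ and $\satisfies{\assm}{\cstr}$, where $\assm$ is built from $\soln$ together with the basic-typed bindings of $\tcenv$ (threaded in via \rulename{Sat-Ext}), then $\hastype{\apply{\soln}{\tcenv}}{\expr}{\apply{\soln}{\type}}$ (with $\apply{\soln}{\tcenv}$ the pointwise application of $\soln$ to each binding's template). The cases mirror the clauses of $\conssym$: constants and variables discharge $\ttrue$ and use the selfified singleton produced by $\singtysym$; let-binders $\eletin{\evar}{\expr_1}{\expr_2}$ combine the two inductive hypotheses with the subtyping constraint that hides $\evar$, yielding a well-scoped result template; $\lambda$-abstractions apply the inductive hypothesis to the body under the environment extended with the fresh input template; applications $\app{\expr}{y}$ use the inductive hypothesis plus the substitution $\SUBST{\cdot}{x}{y}$ justified by ANF; and type abstraction/instantiation push $\soln$ through the polymorphism rules. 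In each case I must check that $\apply{\soln}{\cdot}$ commutes with the template construction of $\freshsym$, so the substituted templates are exactly the refinement types expected by the corresponding typing rule.

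A key sub-lemma, also proved by induction on the common shape of the two types, is the soundness of $\subsym$: whenever $\sub{\type}{\type'} = \cstr$ and $\soln$ satisfies $\cstr$ under the assumption context for $\tcenv$, the substituted $\apply{\soln}{\type}$ is a subtype of $\apply{\soln}{\type'}$ under $\apply{\soln}{\tcenv}$. The base case is precisely the fact (recalled in \S\ref{sec:constraints}) that base subtyping is implication between refinements, so validity of $\apply{\soln}{\csimp{x}{\base}{\pred}{\SUBST{\predb}{y}{x}}}$ is exactly the subtyping side-condition. The function case matches the contra-/co-variant decomposition $\csand{\sub{\type}{\typeb}}{(\cswith{y}{\type}{\sub{\SUBST{\typeb'}{x}{y}}{\type'}})}$ against the function-subtyping rule, the added hypothesis $\cswith{y}{\type}{\cdot}$ supplying exactly the assumption of the stronger input type.

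The main obstacle is bridging the two semantic worlds --- constraint satisfaction (defined via $\apply{\soln}{\cdot}$, well-formedness, and SMT-validity in Figure~\ref{fig:satisfaction}) and the refinement typing/subtyping judgments --- while keeping scoping exact. Two points need care. First, the generalized implication $\cswith{\cdot}{\cdot}{\cdot}$ silently drops binders of non-basic type; I must argue this loses nothing, since the first-order refinement logic never mentions such binders, so a dropped hypothesis is irrelevant to the subtyping derivation --- this is where the well-formedness rules \rulename{W-Ref}/\rulename{W-Imp} and the restriction of \rulename{Sat-Ext} to basic types are used. Second, I must verify that the assumption context $\assm$ threaded through \rulename{Sat-Ext} faithfully represents $\apply{\soln}{\tcenv}$ at every leaf, so that validity of each root-to-leaf implication coincides with the corresponding base-subtyping obligation. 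Once these alignments are in place, the induction goes through routinely, and the theorem follows by composing the generation lemma (at $\tcenv = \emptyset$) with the cited soundness result.
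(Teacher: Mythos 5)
Your proposal follows essentially the same route as the paper: an induction on the structure of $\expr$ showing that a satisfying assignment for $\conssym(\expr)$ yields a refinement typing derivation $\hastype{\emptyset}{\expr}{\type}$, composed with the soundness of refinement typing (Theorem~1 of \cite{LiquidICFP14}) to conclude $\issafe{\expr}$. The paper gives only this two-step sketch, so your generalized generation lemma, the $\subsym$-soundness sub-lemma, and the attention to dropped non-basic binders are a faithful fleshing-out of the same argument rather than a different approach.
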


\begin{figure}[t!]
$$\begin{array}{lcl}
\toprule
\conssym & : &  (\tcenv \times \Expr) \rightarrow (\Cstr \times \Type) \\
\midrule
\cons{\tcenv}{\vconst}
  & \doteq & (\ttrue,\ \constty{\vconst})  \\[0.05in]
\cons{\tcenv}{\evar}
  & \doteq & ( \ttrue ,\ \singty{\tcenv}{\evar}) \\[0.05in]
\cons{\tcenv}{\eletin{\evar}{\expr_1}{\expr_2}}
                          & \doteq & ( \csand{\cstr}{\sub{\type_2}{\liq{\type}}} ,\ \liq{\type} )\\
\quad \mbox{where}        &        & \\
\quad \quad \cstr         &      = & \csand{\cstr_1}{(\cswith{x}{\type_1}{\cstr_2})}\\
\quad \quad \liq{\type}   &      = & \fresh{\emptyset}{\shape{\type_2}} \\
\quad \quad (\cstr_1, \type_1) & = & \cons{\tcenv}{e_1} \\
\quad \quad (\cstr_2, \type_2) & = & \cons{\tcenvext{x}{\type_1}}{e_2} \\[0.05in]

\cons{\tcenv}{\elambda{\evar}{\utype}{\expr}}
                          & \doteq & (\cstr, \trfun{\evar}{\liq{\type}}{\type}) \\
\quad \mbox{where}        &        & \\
\quad \quad (\cstr, \type)&      = & \cons{\tcenvext{\evar}{\liq{\type}}}{\expr} \\
\quad \quad \liq{\type}   &      = & \fresh{\emptyset}{\utype} \\[0.05in]

\cons{\tcenv}{\app{\expr}{y}}
                          & \doteq & (\csand{\cstr}{\cstr_y}, \SUBST{\type'}{x}{y}) \\
\quad \mbox{where}        &        & \\
\quad \quad \cstr_y       &      = & \sub{\singty{\tcenv}{y}}{\type}\\
\quad \quad (\cstr,
             \trfun{x}
                   {\type}
                   {\type'})   & = & \cons{\tcenv}{\expr} \\[0.05in]

\cons{\tcenv}{\tlambda{\tvar}{\expr}}
                          & \doteq & (\cstr, \tpoly{\tvar}{\type}) \\
\quad \mbox{where}        &        & \\
\quad \quad (\cstr,\type) &      = & \cons{\tcenv;\tvar}{\expr}    \\[0.05in]

\cons{\tcenv}{\tyapp{\expr}{\utype}}
                          & \doteq & (\cstr, \SUBST{\type}{\tvar}{\liq{\type}}) \\
\quad \mbox{where}        &        & \\
\quad \quad (\cstr, \tpoly{\tvar}{\type})
                          &      = & \cons{\tcenv}{\expr}    \\[0.05in]
\quad \quad \liq{\type}   &      = & \fresh{\emptyset}{\utype} \\[0.1in]

\toprule
\singtysym & : & (\tcenv \times \Evar) \rightarrow \Type \\
\midrule
\singty{\tcenv}{\evar} & & \\
\quad \spmid \type = \reftpv{x}{\base}{\pred} & \doteq & \reftpv{\vv}{\base}{\SUBST{p}{x}{\vv} \wedge \vv = x} \\
\quad \spmid \mbox{otherwise}                 & \doteq & \type \\
\quad \mbox{where}\ \type                     &      = & \tcenv(\evar) \\[0.1in]
\bottomrule
\end{array}$$
\caption{Constraint Generation
%
%
}
\label{fig:consgen}
\end{figure}

\section{Algorithm}\label{sec:algorithm}

Next, we describe our algorithm for solving the
constraints generated by $\conssym$, \ie for
for finding a satisfying assignment -- and hence
inferring suitable refinement types.
Our algorithm is an instance of the classical
``unfolding'' method for logic
programs~\cite{burstall-darlington-77,pettorosi94,TamakiSato84}.
We show how to apply unfolding to the NNF
constraints derived from refinement typing,
in a \emph{scoped} fashion that avoids the
blowup caused by long let-chains.
First, we describe a procedure for computing
the \emph{scope} of a $\kvar$
variable~(\S~\ref{sec:algo:scope}).
Second, we show how to use the scope to compute
the \emph{strongest solution} for a single $\kvar$
variable~(\S~\ref{sec:algo:solk}).
Third, we describe how we can \emph{eliminate}
a $\kvar$ by replacing it with its strongest
solution~(\S~\ref{sec:algo:elimk}).
Fourth, we show how the above procedure can
be repeated when the constraints are
acyclic~(\S~\ref{sec:algo:acyclic}).
Finally, we describe how to use our method
when the constraints have
cycles~(\S~\ref{sec:algo:cyclic}).

\subsection{The Scope of a Variable} 
\label{sec:algo:scope}

\begin{figure}[t!]
$$\begin{array}{lcl}
\toprule
\scopesym & : & (\Kvar \times \Cstr) \rightarrow \Cstr \\
\midrule
\scoped{\kvar}{\csand{\cstr_1}{\cstr_2}} & & \\ 
\quad \spmid \kvar      \in \cstr_1, \kvar \not \in \cstr_2
  & \doteq & \scoped{\kvar}{\cstr_1}
  \\ 
\quad \spmid \kvar \not \in \cstr_1, \kvar      \in \cstr_2
  & \doteq & \scoped{\kvar}{\cstr_2}
  \\ [0.05in]
\scoped{\kvar}{\csimp{x}{\base}{\pred}{\cstr'}} \\ 
\quad \spmid \kvar \not \in \pred
  & \doteq & \csimp{x}{\base}{\pred}{\scoped{\kvar}{\cstr'}}
  \\[0.05in]
\scoped{\kvar}{\cstr}
  & \doteq & \cstr
  \\[0.05in]
\bottomrule
\end{array}$$
\caption{The scope of a $\kvar$ in constraint $\cstr$.}
\label{fig:scoped}
\end{figure}


Procedure $\scoped{\kvar}{\cstr}$ (Fig.~\ref{fig:scoped})
returns a sub-constraint of $\cstr$ of the form:
${\csimpss{x_i}{\pred_i}{\cstr'}}$
such that
(1)~$\kvar$ does not occur in any $\pred_i$, and
(2)~\emph{all} occurrences of $\kvar$ in $\cstr$ occur in $\cstr'$.
As occurrences of $\kvar$ in $\cstr$ occur under
$\binds{x_i}{\pred_i}$, we can omit these binders
from the solution of $\kvar$, thereby avoiding the
blowup from unfolding let-chains described in
\S~\ref{sec:overview:blowup}.
%
%
Restricting unfolding to $\scoped{\kvar}{\cstr}$
does not affect satisfiability. The omitted hypotheses
are redundant as they are already present at all
\emph{uses} of $\kvar$.
This intuition is captured by the following
lemmas that follow by induction on $\cstr$.

\begin{lemma}[\textbf{Scoped-Definitions}]\label{lem:scoped-definitions}
$\defns{\cstr}{\kvar} \equiv \defns{\scoped{\kvar}{\cstr}}{\kvar}$
\end{lemma}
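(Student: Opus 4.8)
The plan is to prove Lemma~\ref{lem:scoped-definitions} by structural induction on $\cstr$, with a case split that mirrors exactly the defining clauses of $\scopesym$ in Figure~\ref{fig:scoped}. The only feature of $\defns{\cstr}{\kvar}$ I intend to exploit is that it depends on $\cstr$ solely through $\flatten{\cstr}$, and that membership is decided by whether $\kvar$ occurs in the \emph{head} of a flat constraint; prepending or discarding hypotheses never alters a head. The two facts I will lean on are Lemma~\ref{lem:flat-defn} (Flatten-Definitions), which lets me commute a leading implication binder past $\defns{\cdot}{\kvar}$, and the distribution of $\flatten{\cdot}$ over conjunction from its definition (Figure~\ref{fig:flatten}).

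For the conjunction clause, say $\cstr = \csand{\cstr_1}{\cstr_2}$ with $\kvar \in \cstr_1$ and $\kvar \not\in \cstr_2$, so that $\scoped{\kvar}{\cstr} = \scoped{\kvar}{\cstr_1}$. Since $\flatten{\csand{\cstr_1}{\cstr_2}}$ is $\flatten{\cstr_1} \cup \flatten{\cstr_2}$, we get $\defns{\cstr}{\kvar} = \defns{\cstr_1}{\kvar} \cup \defns{\cstr_2}{\kvar}$. The key observation is that $\kvar \not\in \cstr_2$ forces $\defns{\cstr_2}{\kvar} = \emptyset$: no flat constraint drawn from $\cstr_2$ can carry $\kvar$ in its head. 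Hence $\defns{\cstr}{\kvar} = \defns{\cstr_1}{\kvar}$, and the induction hypothesis on the subterm $\cstr_1$ closes the case. The symmetric subcase is identical.

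For the implication clause, $\cstr = \csimp{x}{\base}{\pred}{\cstr'}$ with $\kvar \not\in \pred$, so $\scoped{\kvar}{\cstr} = \csimp{x}{\base}{\pred}{\scoped{\kvar}{\cstr'}}$. Here I would use Lemma~\ref{lem:flat-defn} to pull the binder $\binds{x}{\pred}$ outside of $\defns{\cdot}{\kvar}$, presenting $\defns{\cstr}{\kvar}$ as the result of prepending $\binds{x}{\pred}$ to each constraint in $\defns{\cstr'}{\kvar}$; then apply the induction hypothesis on the subterm $\cstr'$ to rewrite $\defns{\cstr'}{\kvar}$ as $\defns{\scoped{\kvar}{\cstr'}}{\kvar}$; and finally apply Lemma~\ref{lem:flat-defn} in the reverse direction to push the binder back in, yielding $\defns{\csimp{x}{\base}{\pred}{\scoped{\kvar}{\cstr'}}}{\kvar}$, which is $\defns{\scoped{\kvar}{\cstr}}{\kvar}$. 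All remaining situations — a leaf predicate, an implication whose hypothesis mentions $\kvar$, or a conjunction in which $\kvar$ occurs in both or in neither conjunct — fall into the catch-all clause where $\scoped{\kvar}{\cstr} = \cstr$, making the goal reflexively true.

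I expect the implication case to be the only place requiring genuine care, since it is where the induction hypothesis must be threaded through the binder-commutation identity of Lemma~\ref{lem:flat-defn}; the subtle point is that this identity holds precisely because a leading hypothesis is appended to the \emph{body} of each flat path and so never changes which paths have $\kvar$ in their head — the side condition $\kvar \not\in \pred$ is needed only to guarantee property~(1) of $\scopesym$, not for this particular equality. The conceptual heart of the conjunction case, by contrast, is the soundness of \emph{discarding} a $\kvar$-free conjunct, which is justified by the fact that definitions are recorded only at heads, so a conjunct not containing $\kvar$ contributes nothing to $\defns{\cstr}{\kvar}$.
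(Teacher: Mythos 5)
Your proof is correct and takes essentially the same approach as the paper, whose own argument for this lemma is simply an induction on the structure of $\cstr$. Your case analysis along the defining clauses of $\scopesym$ — using Lemma~\ref{lem:flat-defn} to commute the binder in the implication case, and the observation that a $\kvar$-free conjunct contributes nothing to $\defns{\cstr}{\kvar}$ in the conjunction case — is precisely the detail that induction requires.
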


%

\begin{lemma}[\textbf{Scoped-Uses}]\label{lem:scoped-uses}
$\uses{\cstr}{\kvar} \equiv \uses{\scoped{\kvar}{\cstr}}{\kvar} \cup C$
for some flat constraints $C$ such that $\kvar \not \in C$.
\end{lemma}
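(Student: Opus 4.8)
The plan is to prove the identity by structural induction on $\cstr$, following precisely the case split that defines $\scoped{\kvar}{\cstr}$ in Figure~\ref{fig:scoped}. Before the induction I would record the one auxiliary fact that does all the work: prefixing a hypothesis does not alter the head of a flat constraint, since the head is the right-most consequent. Hence $\uses{\cdot}{\kvar}$ commutes with binder-prefixing exactly as $\defns{\cdot}{\kvar}$ does in Lemma~\ref{lem:flat-defn}, i.e. $\uses{(\csimps{x}{\pred}{\cstr})}{\kvar} \equiv \csimps{x}{\pred}{\uses{\cstr}{\kvar}}$, and because $\csimps{x}{\pred}{\cdot}$ distributes over set union it also commutes with the union that produces $C$. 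I would pair this with the complementary guarantee, which is exactly the companion Lemma~\ref{lem:scoped-definitions}: whatever $\scoped$ discards can only be a \emph{use} of $\kvar$, never a definition.

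The base case $\cstr \equiv \pred$ is immediate: the procedure falls through to the identity clause $\scoped{\kvar}{\pred} = \pred$, so the two sides of uses coincide and I take $C = \emptyset$. For $\cstr \equiv \csand{\cstr_1}{\cstr_2}$ I use $\flatten{\cstr} = \flatten{\cstr_1} \cup \flatten{\cstr_2}$, so $\uses{\cstr}{\kvar} = \uses{\cstr_1}{\kvar} \cup \uses{\cstr_2}{\kvar}$. When $\kvar$ occurs in exactly one conjunct, say $\kvar \in \cstr_1$ and $\kvar \notin \cstr_2$, the procedure returns $\scoped{\kvar}{\cstr_1}$; the induction hypothesis gives $\uses{\cstr_1}{\kvar} \equiv \uses{\scoped{\kvar}{\cstr_1}}{\kvar} \cup C_1$ with $\kvar \notin C_1$, and $\kvar \notin \cstr_2$ forces $\uses{\cstr_2}{\kvar} = \flatten{\cstr_2}$ with $\kvar \notin \flatten{\cstr_2}$ (no flat constraint of a $\kvar$-free conjunct can mention $\kvar$ in its head). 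I then take $C = C_1 \cup \flatten{\cstr_2}$. In the two remaining configurations, $\kvar$ in both conjuncts or in neither, the procedure falls to the catch-all identity clause, the two uses sides are literally equal, and $C = \emptyset$.

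For $\cstr \equiv \csimp{x}{\base}{\pred}{\cstr'}$ I split on whether $\kvar \in \pred$. If $\kvar \notin \pred$, the procedure returns $\csimp{x}{\base}{\pred}{\scoped{\kvar}{\cstr'}}$; applying the binder-commutation fact on both sides together with the induction hypothesis $\uses{\cstr'}{\kvar} \equiv \uses{\scoped{\kvar}{\cstr'}}{\kvar} \cup C'$ yields $\uses{\cstr}{\kvar} \equiv \uses{\scoped{\kvar}{\cstr}}{\kvar} \cup \csimps{x}{\pred}{C'}$, and I take $C = \csimps{x}{\pred}{C'}$, which is $\kvar$-free because both $\pred$ and $C'$ are. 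If instead $\kvar \in \pred$, the procedure falls to the identity clause and $C = \emptyset$.

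The argument is a routine induction; the one point requiring care is the bookkeeping around the implicit identity clause of $\scoped$, namely recognizing that the conjunction clause fires only when $\kvar$ lies in exactly one conjunct and the implication clause only when $\kvar \notin \pred$, with every other configuration absorbed by the catch-all. The genuinely load-bearing observation is that a conjunct entirely free of $\kvar$ contributes only flat constraints whose heads lack $\kvar$, so everything $\scoped$ drops is legitimately part of $\uses{\cstr}{\kvar}$ and can be collected into the $\kvar$-free residue $C$.
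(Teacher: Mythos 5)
Your proof is correct and takes essentially the same route as the paper, which establishes this lemma by structural induction on $\cstr$ and omits the case analysis entirely. Your elaboration supplies exactly the details the paper leaves implicit: that binder-prefixing commutes with $\uses{\cdot}{\kvar}$ because it never changes a head, that the clauses of $\scoped{\kvar}{\cstr}$ fire only in the configurations you identify (with the catch-all absorbing the rest, giving $C = \emptyset$ there), and that whatever $\scoped$ discards is entirely $\kvar$-free and therefore collects into the residue $C$.
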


\begin{lemma}[\textbf{Scoped-Satisfaction}]\label{lem:scoped-sat}
If $\satisfies{\soln}{\cstr}$ then $\satisfies{\soln}{\scoped{\kvar}{\cstr}}$.
\end{lemma}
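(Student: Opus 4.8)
The plan is to reduce the statement to a purely set-theoretic fact about flat constraints, so that the asymmetry between conjunction and implication in the definition of $\scopesym$ never forces me to reason about satisfaction under a growing hypothesis stack. First I would invoke Lemma~\ref{lem:flatten} twice: since $\satisfies{\soln}{\cstr}$ holds iff $\satisfies{\soln}{\flatten{\cstr}}$, and likewise for $\scoped{\kvar}{\cstr}$, it suffices to prove that $\satisfies{\soln}{\flatten{\cstr}}$ implies $\satisfies{\soln}{\flatten{\scoped{\kvar}{\cstr}}}$. By \rulename{Sat-Many}, satisfaction of a set of flat constraints is exactly satisfaction of each member, so the whole thing collapses to the inclusion $\flatten{\scoped{\kvar}{\cstr}} \subseteq \flatten{\cstr}$: every flat constraint demanded by the scoped version is already demanded, verbatim, by the original, hence is satisfied by $\soln$.

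The core of the argument is therefore this inclusion, which I would establish by induction on $\cstr$ following the three cases in the definition of $\scopesym$. In the base case $\scoped{\kvar}{\cstr} = \cstr$, so the inclusion is an equality. In the conjunction case where $\scoped{\kvar}{\csand{\cstr_1}{\cstr_2}} = \scoped{\kvar}{\cstr_1}$, the induction hypothesis gives $\flatten{\scoped{\kvar}{\cstr_1}} \subseteq \flatten{\cstr_1} \subseteq \flatten{\cstr_1} \cup \flatten{\cstr_2} = \flatten{\csand{\cstr_1}{\cstr_2}}$, so dropping the $\kvar$-free conjunct only discards flat constraints. In the implication case, $\flatten{\csimp{x}{\base}{\pred}{\cstr'}}$ is obtained by prepending the hypothesis $\pred$ to every flat constraint of $\cstr'$, so the inclusion follows by applying the induction hypothesis to $\cstr'$ and then prepending $\pred$ uniformly to both the smaller and the larger set.

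Alternatively---and this is likely the intended route, given their placement---I could read the inclusion straight off the two preceding lemmas: the flattening of any constraint partitions into its definitions and its uses of $\kvar$, Lemma~\ref{lem:scoped-definitions} says $\scopesym$ preserves the definitions exactly, and Lemma~\ref{lem:scoped-uses} says it only deletes (never adds or strengthens) uses of $\kvar$. Combining $\defns{\scoped{\kvar}{\cstr}}{\kvar} = \defns{\cstr}{\kvar}$ with $\uses{\scoped{\kvar}{\cstr}}{\kvar} \subseteq \uses{\cstr}{\kvar}$ yields $\flatten{\scoped{\kvar}{\cstr}} \subseteq \flatten{\cstr}$ immediately, after which \rulename{Sat-Many} and Lemma~\ref{lem:flatten} close the proof. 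The one point where I would be careful is the implication case: I must confirm that flattening genuinely distributes a hypothesis over the recursive call as a set-map (which Lemma~\ref{lem:flat-defn} already records for definitions), so that prepending $\pred$ respects $\subseteq$ rather than merging or reordering binders. I do not expect the conjunction case to be subtle, since it only throws constraints away; the real discipline is simply to route through $\flatten$ \emph{before} any case analysis, thereby sidestepping the mismatch between the plain-solution semantics of \rulename{Sat-Base} (which splits conjunctions cleanly) and the assumption-extending semantics that a direct induction would need for implications.
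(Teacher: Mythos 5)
Your proposal is correct and matches the paper's own proof, which establishes exactly the inclusion $\flatten{\scoped{\kvar}{\cstr}} \subseteq \flatten{\cstr}$ by induction on the structure of $\cstr$ and then concludes via Lemma~\ref{lem:flatten} and \rulename{Sat-Many}. Your case analysis of $\scopesym$ (conjunct-dropping only discards flat constraints; the implication case prepends the hypothesis uniformly, preserving inclusion) fills in the same induction the paper leaves implicit.
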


\subsection{The Strongest Solution} 
\label{sec:algo:solk}

\begin{figure}[t!]
$$\begin{array}{lcl}
\toprule
\solksym & : & (\Kvar \times \Cstr) \rightarrow \Pred \\
\midrule

\solk{\kvar}{\csand{\cstr_1}{\cstr_2}}
  & \doteq & \solk{\kvar}{\cstr_1} \vee \solk{\kvar}{\cstr_2} \\[0.05in]

\solk{\kvar}{\csimp{x}{\base}{\pred}{\cstr}}
  & \doteq & \existsp{x}{\base}{\pred \wedge \solk{\kvar}{c} } \\[0.05in]

\solk{\kvar}{\kvapp{\kvar}{y}}
  & \doteq & \bigwedge_i x_i = y_i \quad \mbox{where} \ \params{x} = \args{\kvar} \\[0.05in]

\solk{\kvar}{\pred}
  & \doteq & \tfalse \\[0.05in]
\bottomrule
\end{array}$$
\caption{The Strongest Solution for a $\kvar$-Variable.}
\label{fig:solk}
\end{figure}

Procedure $\solk{\kvar}{\cstr}$ (Fig.~\ref{fig:solk})
returns a predicate that is guaranteed to satisfy all
the clauses where $\kvar$ appears as the head.
The procedure recursively traverses $\cstr$ to compute
the returned predicate as the \emph{disjunction} of
those bodies in $\cstr$ where $\kvar$ appears as the head.
When the head does not contain $\kvar$, the output predicate
is the empty disjunction, $\tfalse$.

\mypara{The Strongest Solution}
%
The \emph{strongest} (resp. \emph{scoped})
solution for $\kvar$ in $\cstr$, written
$\ssoln{\cstr}{\kvar}$ (resp. $\osoln{\cstr}{\kvar}$)
is:
\begin{align*}
\ssoln{\cstr}{\kvar} \ & \doteq\ \mapsingle{\kvar}{\lambda \params{x}. \solk{\kvar}{\cstr}} \\
\osoln{\cstr}{\kvar} \ & \doteq\ \mapsingle{\kvar}{\lambda \params{x}. \solk{\kvar}{\cstr'}}
  \ \mbox{where}\
    \scoped{\kvar}{\cstr} \equiv \csimpss{x_i}{\pred_i}{\cstr'} \mbox{s.t.}\ \kvar \not \in \pred_i
\end{align*}
The name ``strongest solution'' is justified by the
following theorems.
%
%
First, we prove that $\ssoln{\cstr}{\kvar}$ satisfies
the definitions of $\kvar$ in $\cstr$.
\begin{lemma}[\textbf{Strongest-Sat-Definitions}]\label{lem:defns} 
  $\satisfies{\ssoln{\cstr}{\kvar}}{\defns{\cstr}{\kvar}}$
\end{lemma}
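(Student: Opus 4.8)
The plan is to unfold the set $\defns{\cstr}{\kvar}$ into its constituent flat clauses and discharge each one separately. By rule \rulename{Sat-Many} it suffices to show $\satisfies{\ssoln{\cstr}{\kvar}}{\cstr'}$ for every definition $\cstr' \in \defns{\cstr}{\kvar}$, where each such $\cstr'$ is a flat clause
\[
\cstr' \ \equiv\ \csimp{x_1}{\base_1}{\pred_1}{\cdots \Rightarrow \csimp{x_n}{\base_n}{\pred_n}{\kvapp{\kvar}{y}}},
\]
whose head is a $\kvar$-application with actuals $\params{y}$ built from the in-scope binders $x_1,\ldots,x_n$. Writing $\params{z} = \args{\kvar}$ for the formal parameters, applying the assignment rewrites only the head, giving $\apply{\ssoln{\cstr}{\kvar}}{\cstr'} \equiv \csimp{x_1}{\base_1}{\pred_1}{\cdots \Rightarrow \csimp{x_n}{\base_n}{\pred_n}{\SUBST{\solk{\kvar}{\cstr}}{\params{z}}{\params{y}}}}$. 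Here I use the standing invariant that $\kvar$ is acyclic in $\cstr$: since $\cstr'$ is a definition, $\kvar \notin \body{\cstr'}$, so the hypotheses $\pred_j$ are untouched by $\ssoln{\cstr}{\kvar}$ and remain available to discharge the substituted head.

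The key intermediate step is a structural decomposition of $\solk{\kvar}{\cstr}$, proved by induction on $\cstr$ from the $\solksym$ equations of Fig.~\ref{fig:solk} together with Lemma~\ref{lem:flat-defn}. The claim is that $\solk{\kvar}{\cstr}$ is logically the disjunction, over all definitions $\cstr'' \in \defns{\cstr}{\kvar}$, of the existential closure of the body of $\cstr''$ conjoined with the equalities contributed by its head; concretely, the disjunct contributed by the clause $\cstr'$ above is
\[
\existsp{x_1}{\base_1}{\pred_1 \wedge \cdots \wedge \existsp{x_n}{\base_n}{\pred_n \wedge \textstyle\bigwedge_i z_i = y_i}},
\]
with $\params{z}$ free (the parameters abstracted in $\ssoln{\cstr}{\kvar}$) and $x_1,\ldots,x_n$ existentially bound. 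The three inductive cases mirror the equations exactly: the $\csand$ case matches $\defns{\csand{\cstr_1}{\cstr_2}}{\kvar} = \defns{\cstr_1}{\kvar} \cup \defns{\cstr_2}{\kvar}$ against $\solk{\kvar}{\cstr_1} \vee \solk{\kvar}{\cstr_2}$; the $\csimp$ case distributes $\existsp{x}{\base}{\pred \wedge (\cdot)}$ over the disjunction, lining up with the head-prefixing of Lemma~\ref{lem:flat-defn}; and the leaf $\kvapp{\kvar}{y}$ yields exactly $\bigwedge_i z_i = y_i$, while every other leaf yields the empty disjunction $\tfalse$.

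It remains to prove validity of $\apply{\ssoln{\cstr}{\kvar}}{\cstr'}$, and for this it is enough that the single disjunct contributed by $\cstr'$ itself, after the head substitution $\SUBST{\cdot}{\params{z}}{\params{y}}$, is implied by the hypotheses $\pred_1,\ldots,\pred_n$. Renaming the bound $x_j$ to fresh $x_j'$ to avoid capture, that disjunct becomes $\existsp{x_1'}{\base_1}{\cdots \existsp{x_n'}{\base_n}{\bigwedge_j \pred_j' \wedge \bigwedge_i y_i = y_i'}}$, where $\pred_j'$ and $y_i'$ are the primed copies; instantiating each witness $x_j' := x_j$ restores $\pred_j' = \pred_j$ (true by hypothesis) and collapses every equality to $y_i = y_i$, discharging the existential and hence the whole disjunction. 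I expect the capture-avoidance bookkeeping to be the main obstacle: the actuals $\params{y}$ may mention the clause binders $x_j$, so substituting formals by actuals has to be reconciled with the existential quantifiers, and the argument succeeds precisely because the witnesses can be chosen equal to the universal variables. Finally, note that the hypotheses in the body and their copies inside the witnessed disjunct coincide syntactically, so validity is insensitive to the interpretation of any other refinement variables still present in the $\pred_j$; this uniformity is what lets this single-variable result later compose, through Lemma~\ref{lem:valid-composition}, with the solutions eventually chosen for the remaining variables.
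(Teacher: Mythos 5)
Your proof is correct, but it takes a different route than the paper's. The paper proves the lemma by a direct structural induction on $\cstr$ whose inductive statement is the satisfaction claim itself: at a conjunction it applies the induction hypothesis to each conjunct and uses Lemma~\ref{lem:weakening} to pass from each sub-constraint's $\solksym$-predicate to their disjunction; at an implication $\csimp{x}{\base}{\pred}{\cstr'}$ it applies the induction hypothesis to $\cstr'$, unfolds with \rulename{Sat-Ext}, conjoins $\pred$, and introduces the existential using the still-in-scope binder $x$ as its own witness, so the witnessing is threaded incrementally, one binder at a time. You instead put the induction into an explicit characterization lemma --- $\solk{\kvar}{\cstr}$ is logically the disjunction, over $\defns{\cstr}{\kvar}$, of the existentially closed bodies conjoined with the head equalities, via Lemma~\ref{lem:flat-defn} and distribution of $\exists$ and $\wedge$ over $\vee$ --- and then close with a single ``diagonal'' instantiation $x_j' := x_j$ of all witnesses at once. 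What each buys: your factoring makes the unfolding semantics of $\solksym$ explicit and reusable, concentrates all the quantifier reasoning in one step, and handles variable capture explicitly --- a point the paper leaves to the convention that substitution renames bound variables, even though the same issue arises in its implication case since the actuals $\params{y}$ may mention the outer binder $x$. The paper's version avoids stating and proving a global characterization, keeps each inductive case small, and never needs explicit renaming because each witness is chosen while its binder is still universally in scope. Both arguments ultimately rest on the same two facts --- that $\solk{\kvar}{\cstr}$ collects exactly the existentially closed hypotheses of the definitions, and that each definition's own contribution witnesses its substituted head --- and both use the standing acyclicity assumption that $\kvar$ does not occur in the bodies of its own definitions.
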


As an example, consider the constraint:
${\cstr_5 \doteq \csimps{a}{p\ a}{\csimps{b}{q\ b}{ \csand{\kvar(a)}{\kvar(b)}}}}.$
Below, we show each sub-constraint $\cstr_i$,
and the corresponding solution returned by $\solk{\kvar}{\cstr_i}$:
$$\begin{array}{rclrcl}
\cstr_1 & \doteq & \kvar(a)                  & \qquad \solk{\kvar}{\cstr_1} & \doteq & x = a \\
\cstr_2 & \doteq & \kvar(b)                  & \qquad \solk{\kvar}{\cstr_2} & \doteq & x = b \\
\cstr_3 & \doteq & \csand{\cstr_1}{\cstr_2}  & \qquad \solk{\kvar}{\cstr_3} & \doteq & x = a \vee x = b \\
\cstr_4 & \doteq & \csimps{b}{q\ b}{\cstr_4} & \qquad \solk{\kvar}{\cstr_4} & \doteq & \exbind{b}{q\ b \wedge (x = a \vee x = b)} \\
\cstr_5 & \doteq & \csimps{a}{p\ a}{\cstr_4} & \qquad \solk{\kvar}{\cstr_5}   & \doteq & \exbind{a}{p\ a \wedge (\exbind{b}{q\ b \wedge (x = a \vee x = b)})}
\end{array}$$
Note that in essence,
$\ssoln{\cstr_5}{\kvar} \doteq \mapsingle{\kvar}{\lambda x. \solk{\kvar}{\cstr_5}}$
maps $\kvar$ to the disjunction of the two hypotheses (bodies)
under which $\kvar$ appears, thereby satisfying both implications,
and hence satisfying $\defns{\cstr_5}{\kvar}$.

\mypara{The Strongest Scoped Solution}
The addition of hypotheses -- the binders $\binds{x_i}{\pred_i}$
under which $\kvar$ always occurs -- preserves validity.
Thus, Lemma~\ref{lem:defns} implies that
the strongest scoped solution satisfies
the definitions of $\kvar$ in $\cstr$.
\begin{theorem}[\textbf{Strongest-Scoped-Sat-Definitions}]\label{thm:defns}
$\satisfies{\osoln{\cstr}{\kvar}}{\defns{\cstr}{\kvar}}$.
\end{theorem}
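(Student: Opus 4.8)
The plan is to reduce the claim to Lemma~\ref{lem:defns} applied not to $\cstr$ itself but to the scoped body $\cstr'$, where $\scoped{\kvar}{\cstr} \equiv \csimpss{x_i}{\pred_i}{\cstr'}$ with $\kvar \notin \pred_i$. The first observation is that the strongest scoped solution is \emph{literally} the strongest (unscoped) solution of that body: unwinding the two definitions gives
\[
  \osoln{\cstr}{\kvar} \;=\; \mapsingle{\kvar}{\lambda \params{x}.\, \solk{\kvar}{\cstr'}} \;=\; \ssoln{\cstr'}{\kvar}.
\]
So it suffices to relate $\defns{\cstr}{\kvar}$ to $\defns{\cstr'}{\kvar}$. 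By Lemma~\ref{lem:scoped-definitions} the definitions are unchanged by scoping, and by iterating Lemma~\ref{lem:flat-defn} over the binders $\binds{x_i}{\pred_i}$,
\[
  \defns{\cstr}{\kvar} \;\equiv\; \defns{\scoped{\kvar}{\cstr}}{\kvar} \;\equiv\; \defns{\csimpss{x_i}{\pred_i}{\cstr'}}{\kvar} \;\equiv\; \csimpss{x_i}{\pred_i}{\defns{\cstr'}{\kvar}}.
\]
Thus every flat constraint in $\defns{\cstr}{\kvar}$ has the shape $\csimpss{x_i}{\pred_i}{d}$ for some $d \in \defns{\cstr'}{\kvar}$: it is a definition of $\kvar$ inside the body $\cstr'$, guarded by exactly the scope hypotheses $\binds{x_i}{\pred_i}$.

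Next I would apply Lemma~\ref{lem:defns} directly to the body $\cstr'$. Since $\osoln{\cstr}{\kvar} = \ssoln{\cstr'}{\kvar}$, that lemma yields $\satisfies{\osoln{\cstr}{\kvar}}{\defns{\cstr'}{\kvar}}$, i.e. $\satisfies{\osoln{\cstr}{\kvar}}{d}$ for every $d \in \defns{\cstr'}{\kvar}$. The remaining, and main, step is then to re-attach the guards: from $\satisfies{\soln}{d}$ conclude $\satisfies{\soln}{\csimpss{x_i}{\pred_i}{d}}$, where $\soln \doteq \osoln{\cstr}{\kvar}$ and $\kvar \notin \pred_i$. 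This is precisely the informal claim that ``adding the hypotheses under which $\kvar$ always occurs preserves validity,'' and it is the only place where the two components of satisfaction must be argued separately.

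For validity the step is immediate: because $\kvar \notin \pred_i$ we have $\apply{\soln}{(\csimpss{x_i}{\pred_i}{d})} = \csimpss{x_i}{\pred_i}{\apply{\soln}{d}}$, and since $\apply{\soln}{d}$ is already a valid VC, universally quantifying over the $x_i$ and adding the antecedents $\pred_i$ can only weaken the implication, so it stays valid. The delicate part is well-formedness: I must show $\wf{\emptyset}{\csimpss{x_i}{\pred_i}{\apply{\soln}{d}}}$ from $\wf{\emptyset}{\apply{\soln}{d}}$. Peeling off the implications with \rulename{W-Imp}, this requires (a) each $\pred_i$ to be well-formed under $\tb{x_1}{\base_1},\ldots,\tb{x_{i-1}}{\base_{i-1}}$, which holds because the guards $\binds{x_i}{\pred_i}$ are exactly those of the well-formed sub-constraint $\scoped{\kvar}{\cstr}$ and are untouched by $\soln$; and (b) $\apply{\soln}{d}$ to be well-formed under the full extension $\tb{x_1}{\base_1},\ldots,\tb{x_n}{\base_n}$, which follows from $\wf{\emptyset}{\apply{\soln}{d}}$ by a routine weakening lemma for the well-formedness judgment (monotonicity of $\hastype{\tcenv}{\cdot}{\tbool}$ in $\tcenv$). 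I expect this weakening/well-scopedness bookkeeping to be the only genuine obstacle; the validity direction and the algebraic manipulation of definitions are routine. Combining the three ingredients---the identity $\osoln{\cstr}{\kvar} = \ssoln{\cstr'}{\kvar}$, Lemma~\ref{lem:defns} on $\cstr'$, and guard re-attachment---gives $\satisfies{\osoln{\cstr}{\kvar}}{\defns{\cstr}{\kvar}}$, as required.
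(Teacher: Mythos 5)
Your proposal is correct and follows essentially the same route as the paper's proof: both rest on the identity $\osoln{\cstr}{\kvar} = \ssoln{\cstr'}{\kvar}$, apply Lemma~\ref{lem:defns} to the scoped body $\cstr'$, re-attach the guards $\binds{x_i}{\pred_i}$, and then identify the result with $\defns{\cstr}{\kvar}$ via Lemmas~\ref{lem:flat-defn} and~\ref{lem:scoped-definitions}. The only difference is presentational: the paper dismisses the guard re-attachment step as ``by definition of validity,'' whereas you spell out both the validity direction and the well-formedness/weakening bookkeeping explicitly, which is if anything more careful than the paper itself.
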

%
%
Furthermore, if there exists a solution $\soln$
that satisfies $\cstr$ then the composition of
$\soln$ and the strongest scoped solution also
satisfies $\cstr$.
\begin{theorem}[\textbf{Strongest-Scoped-Sat-Uses}]\label{thm:strongest}
If   $\satisfies{\soln}{\cstr}$
then $\satisfies{\comp{\soln}{\osoln{\cstr}{\kvar}}}{\cstr}$.
\end{theorem}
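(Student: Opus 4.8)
The plan is to prove $\satisfies{\comp{\soln}{\osoln{\cstr}{\kvar}}}{\cstr}$ by splitting $\cstr$ according to the role of $\kvar$ and discharging each part with a different tool. Since we maintain the standing invariant that $\kvar$ is acyclic in $\cstr$, Lemma~\ref{lem:partition} reduces the goal to proving the two statements $\satisfies{\comp{\soln}{\osoln{\cstr}{\kvar}}}{\defns{\cstr}{\kvar}}$ and $\satisfies{\comp{\soln}{\osoln{\cstr}{\kvar}}}{\uses{\cstr}{\kvar}}$, i.e.\ that the composed assignment satisfies both the definitions and the uses of $\kvar$.

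The definitions are handled entirely by results already established. Theorem~\ref{thm:defns} (Strongest-Scoped-Sat-Definitions) gives $\satisfies{\osoln{\cstr}{\kvar}}{\defns{\cstr}{\kvar}}$, and Lemma~\ref{lem:sat-composition} (Composition-Preserves-Sat), applied with $\soln$ as the outer assignment, lifts this to $\satisfies{\comp{\soln}{\osoln{\cstr}{\kvar}}}{\defns{\cstr}{\kvar}}$. Nothing further is required for this half.

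The uses are the crux. First I would turn the goal inside out with Lemma~\ref{lem:valid-composition} (Composition-Validity): the statement $\satisfies{\comp{\soln}{\osoln{\cstr}{\kvar}}}{\uses{\cstr}{\kvar}}$ is equivalent to $\satisfies{\soln}{\apply{\osoln{\cstr}{\kvar}}{\uses{\cstr}{\kvar}}}$, so it suffices to show that $\soln$ still satisfies the use-clauses once every occurrence of $\kvar$ in their hypotheses has been replaced by the scoped solution. From the hypothesis $\satisfies{\soln}{\cstr}$ and Lemma~\ref{lem:partition} we already have $\satisfies{\soln}{\uses{\cstr}{\kvar}}$ and $\satisfies{\soln}{\defns{\cstr}{\kvar}}$, where the hypotheses in these very clauses instead carry $\soln(\kvar)$. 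The pivotal claim is a domination property: at every point where $\kvar$ occurs, the predicate the scoped solution assigns to $\kvar$ (with its own refinement variables resolved through $\soln$) implies $\soln(\kvar)$. This holds because $\solk$ (Figure~\ref{fig:solk}) builds the scoped solution as the disjunction of exactly the bodies of the definition clauses of $\kvar$, while $\satisfies{\soln}{\defns{\cstr}{\kvar}}$ forces each such body to imply $\soln(\kvar)$; a disjunction of predicates that each imply $\soln(\kvar)$ again implies $\soln(\kvar)$. With the domination property in hand, I would discharge each use-clause by Lemma~\ref{lem:strengthening} (Strengthen): the clause whose hypothesis is $\soln(\kvar)$ is satisfied, the scoped solution is the stronger hypothesis, and Strengthen transports satisfaction to the strengthened clause.

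The main obstacle will be making the domination argument rigorous across the scoping boundary. The scoped solution deliberately drops the common prefix of binders $\binds{x_i}{\pred_i}$ under which all occurrences of $\kvar$ sit, so the implication ``scoped solution $\Rightarrow \soln(\kvar)$'' is only valid once those dropped hypotheses (and the existentials that $\solk$ introduces for the inner binders) are accounted for. The key enabling fact is Lemma~\ref{lem:scoped-uses}: the omitted scope hypotheses are present at every use of $\kvar$, so the needed implication is available precisely where Strengthen is applied. Concretely, I would state the domination claim at the granularity of individual flat clauses of $\uses{\cstr}{\kvar}$, using Lemma~\ref{lem:scoped-uses} to reinstate the scope binders in each clause's antecedents and Lemma~\ref{lem:scoped-definitions} to identify ``scoped body under scope hypotheses'' with the original definition body; the remaining existential-quantifier and disjunction reasoning is routine, and closing each clause is then a direct appeal to Strengthen.
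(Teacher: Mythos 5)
Your proposal is correct and takes essentially the same route as the paper's own proof: the paper likewise partitions via Lemma~\ref{lem:partition}, discharges the definitions with Theorem~\ref{thm:defns} and composition, and handles the uses by decomposing with Lemma~\ref{lem:scoped-uses}, transporting through Lemma~\ref{lem:valid-composition}, and strengthening each use-clause's hypotheses via Lemma~\ref{lem:strengthening}. Your ``domination property'' is exactly the paper's key Lemma~\ref{lem:strong} (Strongest-Solution), which it isolates and proves by induction on $\cstr$; you simply re-derive its content inline (disjunction of definition bodies, each implying $\soln(\kvar)$) instead of citing it.
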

By Theorem~\ref{thm:defns} and Lemma~\ref{lem:partition}
it suffices to restrict attention to uses of $\kvar$,
\ie to prove that if
$\satisfies{\soln}{\uses{\cstr}{\kvar}}$
then
$\satisfies{\comp{\soln}{\osoln{\cstr}{\kvar}}}{\uses{\cstr}{\kvar}}$.
We show this via a key lemma that states that \emph{any}
solution $\soln$ that satisfies $\cstr$, must assign $\kvar$
to a predicate that is implied by, \ie weaker than,
$\solk{\kvar}{\cstr}$.

\begin{lemma}[\textbf{Strongest-Solution}]\label{lem:strong}
  If   $\satisfies{\soln,\binds{x_i}{\pred_i}}{\cstr}$
  then $\satisfies{\soln,\binds{x_i}{\pred_i}}{\apply{\ssoln{\cstr}{\kvar}}{\pred} \Rightarrow \pred}$.
\end{lemma}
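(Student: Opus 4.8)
The plan is to prove the statement by structural induction on $\cstr$. Reading $\pred$ as the $\kvar$-application $\kva{\kvar}{\params{x}}$ on the canonical parameters $\params{x} = \args{\kvar}$ (a conjunctive $\pred$ mentioning $\kvar$ reduces to this case, since the $\kvar$-free conjuncts cancel on both sides), the definition of $\apply{\cdot}{\cdot}$ on $\kvar$-applications gives $\apply{\ssoln{\cstr}{\kvar}}{\pred} = \solk{\kvar}{\cstr}$. The goal therefore reduces to showing that, under the assumption $\soln, \binds{x_i}{\pred_i}$, the constraint $\solk{\kvar}{\cstr} \Rightarrow \kva{\kvar}{\params{x}}$ is satisfied --- that is, after applying $\soln$ (which resolves any $\kvar'$ occurring inside $\solk{\kvar}{\cstr}$) and discharging the context, $\solk{\kvar}{\cstr}$ logically implies $\soln(\kvar)(\params{x})$. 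Because the leading context $\binds{x_i}{\pred_i}$ grows as I descend through implication nodes, I would phrase the induction hypothesis for an \emph{arbitrary} leading context, matching exactly the universal form already present in the statement.

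For the base cases, when $\cstr$ is an application $\kvapp{\kvar}{y}$ with $\kvar$ in the head, $\solk{\kvar}{\cstr} = \bigwedge_i x_i = y_i$; here the hypothesis yields that $\soln(\kvar)(\params{y})$ holds under the context, and substituting the equalities $x_i = y_i$ transports this to $\soln(\kvar)(\params{x})$, discharging the implication. When $\cstr$ is any other leaf --- a bare refinement $\refi$, or an application of some $\kvar' \neq \kvar$ --- we have $\solk{\kvar}{\cstr} = \tfalse$, so $\tfalse \Rightarrow \kva{\kvar}{\params{x}}$ is trivially valid. For the conjunction $\csand{\cstr_1}{\cstr_2}$, satisfaction factors into satisfaction of each conjunct, so the two induction hypotheses give $\solk{\kvar}{\cstr_1} \Rightarrow \kva{\kvar}{\params{x}}$ and $\solk{\kvar}{\cstr_2} \Rightarrow \kva{\kvar}{\params{x}}$ under the context; since $\solk{\kvar}{\csand{\cstr_1}{\cstr_2}} = \solk{\kvar}{\cstr_1} \vee \solk{\kvar}{\cstr_2}$, the result follows from the propositional tautology $(A \Rightarrow C) \wedge (B \Rightarrow C) \Rightarrow ((A \vee B) \Rightarrow C)$.

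The crux is the implication node $\csimp{w}{\base}{\pred_q}{\cstr'}$, where $\solk{\kvar}{\cstr} = \existsp{w}{\base}{\pred_q \wedge \solk{\kvar}{\cstr'}}$. I would first use \rulename{Sat-Ext} to move the binder into the assumption, rewriting the hypothesis as $\satisfies{\assmext{(\soln, \binds{x_i}{\pred_i})}{w}{\base}{\pred_q}}{\cstr'}$, and then apply the induction hypothesis to $\cstr'$ under this extended context, obtaining (post-$\soln$) the validity of $\forall w.\, \pred_q \Rightarrow (\solk{\kvar}{\cstr'} \Rightarrow \kva{\kvar}{\params{x}})$. The goal, after applying $\soln$ and discharging the context, is the validity of $(\exists w.\, \pred_q \wedge \solk{\kvar}{\cstr'}) \Rightarrow \kva{\kvar}{\params{x}}$. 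The main obstacle is precisely this step: pushing the existential that $\solksym$ introduces in the antecedent across the implication. It rests on the equivalence $(\forall w.\, (\pred_q \wedge \solk{\kvar}{\cstr'}) \Rightarrow H) \equiv ((\exists w.\, \pred_q \wedge \solk{\kvar}{\cstr'}) \Rightarrow H)$, which holds exactly because the consequent $H = \kva{\kvar}{\params{x}}$ does not mention $w$ --- the same skolemization the overview uses to discharge negatively-occurring existentials. I would therefore lean on well-formedness and $\alpha$-renaming to keep $w$ distinct from $\params{x}$ and from every variable already in the context, so that $H$ is genuinely $w$-free; with that proviso the equivalence closes the case and completes the induction.
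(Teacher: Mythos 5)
Your proof is correct and takes essentially the same route as the paper's: reduce to the canonical $\kvar$-application (dismissing $\kvar$-free and conjunctive predicates as trivial), then induct on the structure of $\cstr$ with the same four cases, including the identical skolemization step at implication nodes, justified by the bound variable not occurring in the consequent. The only cosmetic difference is that the paper writes the canonical parameters as $\params{z}$ to keep them notationally distinct from the context binders $x_i$, a clash you instead resolve by appeal to $\alpha$-renaming.
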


We use the above lemma to conclude that the
hypotheses in $\uses{\cstr}{\kvar}$ are
\emph{strengthened} under
$\comp{\soln}{\osoln{\cstr}{\kvar}}$,
and hence, that each corresponding goal remains
valid under the composed assignment.

\subsection{Eliminating One Variable}
\label{sec:algo:elimk}

\begin{figure}[t!]
$$\begin{array}{lcl}
\toprule
\elimksym & : & (\Kvar \times \Cstr) \rightarrow \Cstr \\
\midrule
\elimk{\kvar}{\cstr}             & \doteq & \elimsol{\osoln{\cstr}{\kvar}}{\cstr} \\
\quad \mbox{where}               &        & \\
\quad \quad \osoln{\cstr}{\kvar} & =      & \mapsingle{\kvar}{\lambda \params{x}. \solk{\kvar}{\cstr'}} \\
\quad \quad \csimpss{x_i}{\pred_i}{\cstr'} & =      & \scoped{\kvar}{\cstr} \\
\quad \quad \params{x}           & =      & \args{\kvar}\\[0.05in]
\toprule
\elimsolsym & : & (\Soln \times \Cstr) \rightarrow \Cstr \\
\midrule
\elimsol{\soln}{\cstr_1 \wedge \cstr_2}
  & \doteq
  & \elimsol{\soln}{\cstr_1} \wedge \elimsol{\soln}{\cstr_2}
  \\[0.05in]
\elimsol{\soln}{\csimp{x}{\base}{\pred}{\cstr}}
  & \doteq
  & \csimp{x}{\base}{\apply{\soln}{\pred}}{\elimsol{\soln}{\cstr}}
  \\[0.05in]
\elimsol{\soln}{\kvapp{\kvar}{y}} &  &  \\[0.05in]
  \quad \spmid \kvar \in \domain{\soln}
  & \doteq
  & \ttrue
  \\[0.05in]
\elimsol{\soln}{\pred}
  & \doteq
  & \pred
  \\[0.05in]
\bottomrule
\end{array}$$
\caption{Eliminating Variables.}
\label{fig:elimsol}
\label{fig:elimk}
\end{figure}

Next, we show how to eliminate a single $\kvar$ variable
by replacing it with its strongest solution.

\mypara{Variable Elimination ($\elimsol{\soln}{\cstr}$)}
Procedure $\elimsol{\soln}{\cstr}$ (Fig.~\ref{fig:elimsol})
\emph{eliminates} from $\cstr$ all the $\kvar$ variables
defined in $\soln$ by replacing them with $\apply{\soln}{\kvar}$.
That is, the procedure returns as output a constraint $\cstr'$ where:
(1)~every \emph{body}-occurrence of a $\kvar$
   \ie where $\kvar$ appears as a hypothesis,
   is replaced by $\apply{\soln}{\kvar}$, and
(2)~every \emph{head}-occurrence of a $\kvar$
   \ie where $\kvar$ appears as a goal,
   is replaced by $\ttrue$ if $\kvar$
   is in the domain of $\soln$, and left
   unchanged otherwise.
   %
%
We prove by induction over $\cstr$ that
eliminating a single variable $\kvar$
yields a constraint over just the
\emph{uses} of that variable:

\begin{lemma}[\textbf{\elimsolsym}]\label{lem:subst}
If $\domain{\soln} = \{ \kvar \}$ and $\cstr' = \elimsol{\soln}{\cstr}$
then $\kvar \not \in \cstr'$ and
$\flatten{\cstr'} = \apply{\soln}{\uses{\cstr}{\kvar}}$.
\end{lemma}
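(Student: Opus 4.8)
The plan is to prove both conjuncts -- that $\kvar$ is eliminated and that flattening recovers the substituted uses -- simultaneously by structural induction on $\cstr$, following the three shapes that the definition of $\elimsolsym$ in Figure~\ref{fig:elimk} recurses over: atomic goals, conjunctions, and implications. Throughout I will use that $\domain{\soln} = \{\kvar\}$, so $\apply{\soln}{\cdot}$ rewrites exactly the occurrences of $\kvar$ to the predicate $\soln(\kvar)$ and acts as the identity on every other refinement variable and on plain refinements $\refi$. Since this lemma is applied with the scoped strongest solution and $\kvar$ is acyclic, $\soln(\kvar)$ itself omits $\kvar$, which is what lets the single substitution pass of $\apply{\soln}{\cdot}$ produce a genuinely $\kvar$-free result.

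For the base case I take $\cstr$ to be an atomic goal $\pred$ (generated goals are single $\kvar$-applications or refinements, consistent with the singular reading of $\head{\cdot}$). If $\pred = \kvapp{\kvar}{y}$, then $\elimsolsym$ returns $\ttrue$, which visibly omits $\kvar$; its sole flat clause is a \emph{definition} of $\kvar$, so $\uses{\cstr}{\kvar} = \emptyset$ and the equation holds once we record the bookkeeping convention that a $\ttrue$-headed clause contributes the empty obligation to $\flatten{\cdot}$. If $\pred$ is a \emph{different} refinement variable $\kvapp{\kvar'}{y}$ or a plain refinement, $\elimsolsym$ leaves it unchanged, its single flat clause is a \emph{use}, and $\apply{\soln}{\cdot}$ is the identity on it, so both sides agree. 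The conjunction case $\cstr = \csand{\cstr_1}{\cstr_2}$ is then immediate from the induction hypotheses, using that $\flatten{\cdot}$, $\uses{\cdot}{\kvar}$, and the pointwise lifting $\apply{\soln}{\cdot}$ all distribute over the union induced by conjunction, and that $\kvar$-freeness of $\cstr'$ follows from $\kvar$-freeness of each sub-result.

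The implication case $\cstr = \csimp{x}{\base}{\pred}{\cstr_0}$ is where the real work lies. Here $\elimsolsym$ substitutes $\soln$ into the hypothesis (erasing any body-occurrence of $\kvar$, which gives $\kvar \not\in \apply{\soln}{\pred}$ and hence, with the induction hypothesis, $\kvar \not\in \cstr'$) and recurses into $\cstr_0$. To close the flattening equation I will show that prepending the binder $\binds{x}{\apply{\soln}{\pred}}$ commutes with all three operations at once: flattening a prepended implication prepends the binder to each flat clause of the body, so $\flatten{\csimps{x}{\pred}{\cstr_0}} = \csimps{x}{\pred}{\flatten{\cstr_0}}$; prepending a binder does not change the head, so it respects the definitions/uses partition (this is the dual, for uses, of Lemma~\ref{lem:flat-defn}, proved the same way); and the defining clause $\apply{\soln}{\csimp{x}{\base}{\pred}{d}} = \csimp{x}{\base}{\apply{\soln}{\pred}}{\apply{\soln}{d}}$ lets me pull the substitution back out of the prepended clause. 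Chaining these three facts with the induction hypothesis $\flatten{\elimsol{\soln}{\cstr_0}} = \apply{\soln}{\uses{\cstr_0}{\kvar}}$ yields $\flatten{\cstr'} = \apply{\soln}{\uses{\cstr}{\kvar}}$.

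I expect the implication case to be the main obstacle, precisely because it forces the ``uses'' dual of Lemma~\ref{lem:flat-defn} together with the commutation of hypothesis-prepending and $\apply{\soln}{\cdot}$; the only other delicate point is the convention that a $\ttrue$-headed flat clause contributes nothing to $\flatten{\cdot}$, which I will state explicitly so that the $\kvapp{\kvar}{y}$ base case lines up exactly with $\uses{\cstr}{\kvar} = \emptyset$.
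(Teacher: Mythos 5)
Your proposal is correct and takes essentially the same route as the paper: the paper's own (elided) proof is just ``by induction over the structure of $\cstr$,'' and your three cases — atomic goals, conjunctions, and implications, with the prepending/commutation facts for the implication case — fill in exactly that induction. Your side remark invoking acyclicity to get $\kvar \not\in \soln(\kvar)$ is unnecessary for the lemma as stated, since under the paper's definition an assignment's range lies in $\Refi$ and is therefore $\kvar$-free by construction.
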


\mypara{Eliminating One Variable} 
Procedure $\elimk{\kvar}{\cstr}$ (Fig.~\ref{fig:elimk})
eliminates $\kvar$ from a constraint $\cstr$ by invoking
$\elimsym$ on the strongest (scoped) solution for $\kvar$
in $\cstr$.
We prove that if $\kvar$ is acyclic in $\cstr$
then $\elimk{\kvar}{\cstr}$ returns an constraint
\emph{without} $\kvar$ that is
satisfiable if and only if $\cstr$ is satisfiable.

\begin{theorem}[\textbf{Elim-Preserves-Satisfiability}]
\label{thm:elim}
$\elimk{\kvar}{\cstr}$ is satisfiable iff $\cstr$ is satisfiable.
\end{theorem}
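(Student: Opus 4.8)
The plan is to reduce the biconditional to the single-variable facts already established, using the acyclicity of $\kvar$ to invoke Lemma~\ref{lem:partition} in both directions. Throughout, I keep the strongest scoped solution $\osoln{\cstr}{\kvar}$, so that by the definition of $\elimksym$ we have $\elimk{\kvar}{\cstr} = \elimsol{\osoln{\cstr}{\kvar}}{\cstr}$ with $\domain{\osoln{\cstr}{\kvar}} = \{\kvar\}$. The first step is to record the \emph{central equivalence}, that for every assignment $\soln$,
$$\satisfies{\soln}{\elimk{\kvar}{\cstr}} \iff \satisfies{\comp{\soln}{\osoln{\cstr}{\kvar}}}{\uses{\cstr}{\kvar}}.$$
I would prove this by chaining three results: Lemma~\ref{lem:flatten} (Flattening) rewrites satisfaction of $\elimk{\kvar}{\cstr}$ as satisfaction of $\flatten{\elimk{\kvar}{\cstr}}$; Lemma~\ref{lem:subst} identifies $\flatten{\elimk{\kvar}{\cstr}}$ with $\apply{\osoln{\cstr}{\kvar}}{\uses{\cstr}{\kvar}}$ (and incidentally certifies $\kvar \notin \elimk{\kvar}{\cstr}$, so the eliminated constraint genuinely ranges over the uses alone); and Lemma~\ref{lem:valid-composition} (Composition-Validity), applied clausewise to the set $\uses{\cstr}{\kvar}$ via \rulename{Sat-Many}, turns $\satisfies{\soln}{\apply{\osoln{\cstr}{\kvar}}{\uses{\cstr}{\kvar}}}$ into $\satisfies{\comp{\soln}{\osoln{\cstr}{\kvar}}}{\uses{\cstr}{\kvar}}$.

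For the ``$\cstr$ satisfiable $\Rightarrow \elimk{\kvar}{\cstr}$ satisfiable'' direction, I start from a witness $\satisfies{\soln}{\cstr}$. Theorem~\ref{thm:strongest} (Strongest-Scoped-Sat-Uses) strengthens it to $\satisfies{\comp{\soln}{\osoln{\cstr}{\kvar}}}{\cstr}$; since $\kvar$ is acyclic in $\cstr$, Lemma~\ref{lem:partition} (Partition) projects this onto the uses, giving $\satisfies{\comp{\soln}{\osoln{\cstr}{\kvar}}}{\uses{\cstr}{\kvar}}$; and the central equivalence then delivers $\satisfies{\soln}{\elimk{\kvar}{\cstr}}$, so the eliminated constraint is satisfiable.

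For the converse, I start from $\satisfies{\soln}{\elimk{\kvar}{\cstr}}$. The central equivalence at once yields $\satisfies{\comp{\soln}{\osoln{\cstr}{\kvar}}}{\uses{\cstr}{\kvar}}$. For the definitions, Theorem~\ref{thm:defns} (Strongest-Scoped-Sat-Definitions) supplies $\satisfies{\osoln{\cstr}{\kvar}}{\defns{\cstr}{\kvar}}$, which Lemma~\ref{lem:sat-composition} (Composition-Preserves-Sat) promotes to $\satisfies{\comp{\soln}{\osoln{\cstr}{\kvar}}}{\defns{\cstr}{\kvar}}$. Because definitions and uses partition $\flatten{\cstr}$ and $\kvar$ is acyclic, Lemma~\ref{lem:partition} recombines the two halves into $\satisfies{\comp{\soln}{\osoln{\cstr}{\kvar}}}{\cstr}$, witnessing satisfiability of $\cstr$.

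The step I expect to be the main obstacle is the central equivalence, specifically threading the \emph{partial} assignment $\osoln{\cstr}{\kvar}$ through Composition-Validity: its range may itself mention other refinement variables, so I must confirm that Lemma~\ref{lem:valid-composition} is stated (and proved) for partial $\soln'$, and that substituting $\osoln{\cstr}{\kvar}$ into a body-occurrence of $\kvar$ commutes with the outer assignment $\soln$ exactly as $\comp{\soln}{\osoln{\cstr}{\kvar}}$ prescribes. The rest is bookkeeping: verifying that Lemma~\ref{lem:subst}'s hypothesis $\domain{\osoln{\cstr}{\kvar}} = \{\kvar\}$ holds here, and noting that the acyclicity of $\kvar$ -- the standing assumption -- is precisely what licenses both uses of Partition. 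Everything else is routine rewriting.
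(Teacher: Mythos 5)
Your proposal is correct and follows essentially the same route as the paper's own proof: both directions rest on the identity $\flatten{\elimk{\kvar}{\cstr}} = \apply{\osoln{\cstr}{\kvar}}{\uses{\cstr}{\kvar}}$ (the paper isolates this as its Eliminate-Definitions lemma, derived from Lemma~\ref{lem:subst} exactly as you do), combined with Composition-Validity, Theorem~\ref{thm:strongest} plus Partition for one direction, and Theorem~\ref{thm:defns} plus Composition-Preserves-Sat plus Partition for the other. Your packaging of the flatten/subst/composition chain as a single biconditional ``central equivalence'' is only an organizational difference, and the concern you flag about partial assignments in Lemma~\ref{lem:valid-composition} is legitimate but resolves exactly as you expect, since the paper's composition lemmas are stated for arbitrary (including partial) assignments.
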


Intuitively, if $\satisfies{\soln}{\elimk{\kvar}{\cstr}}$
then, via Theorem~\ref{thm:strongest} and
Lemma~\ref{lem:sat-composition}, we have
$\comp{\soln}{\osoln{\cstr}{\kvar}}$
satisfies $\cstr$.
Dually, if $\satisfies{\soln}{\cstr}$ then we show
that $\satisfies{\soln}{\elimk{\kvar}{\cstr}}$.
See \S~\ref{sec:proofs} for details.

\subsection{Eliminating Acyclic Variables}
\label{sec:algo:acyclic}

When the constraint $\cstr$ is acyclic,
we can iteratively eliminate all the variables
in $\Kvar$ to obtain a VC whose validity
determines the satisfiability of $\cstr$,
as each elimination preserves acyclicity:
\begin{lemma}[\textbf{Elim-Acyclic}]\label{lem:acyclic-subst}
If $\Kvar'$ is acylic in $\cstr$, ${\dep{\soln} \subseteq \dep{\cstr}}$
then $\Kvar'$ is acyclic in $\elimsol{\soln}{\cstr}$.
\end{lemma}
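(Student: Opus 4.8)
The plan is to reduce the claim to a purely relational statement about dependency edges, and then exploit that a subrelation of an acyclic, transitively-closed relation is again acyclic. Write $\Kvarc$ for the cut set $\Kvar - \Kvar'$. By definition, ``$\Kvar'$ is acyclic in $\cstr$'' unfolds to the assertion that $\depsstar{\Kvarc}{\cstr}$ is acyclic, i.e. that the restriction of $\dep{\cstr}$ to edges with both endpoints in $\Kvar'$ has an acyclic transitive closure. Hence it suffices to establish the containment $\deps{\Kvarc}{\elimsol{\soln}{\cstr}} \subseteq \depsstar{\Kvarc}{\cstr}$: since the right-hand side is transitively closed, any subrelation's transitive closure stays inside it, and is therefore acyclic, which is exactly ``$\Kvar'$ is acyclic in $\elimsol{\soln}{\cstr}$''.

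First I would pin down how $\elimsolsym$ transforms the dependency relation, carrying out the argument for the single-variable elimination actually performed by $\elimksym$, where $\domain{\soln} = \{\kvar\}$ and $\soln = \osoln{\cstr}{\kvar}$ (the general case then follows by iterating this step). By Lemma~\ref{lem:subst} we have $\flatten{\elimsol{\soln}{\cstr}} = \apply{\soln}{\uses{\cstr}{\kvar}}$, so every flat constraint of $\elimsol{\soln}{\cstr}$ is obtained from a use $\cstr'$ of $\kvar$ by replacing each body occurrence of $\kvar$ with $\soln(\kvar)$, the head being untouched since uses are $\kvar$-free in their heads. Reading off $\body{\cstr'}$ and $\head{\cstr'}$, every edge $(\alpha,\gamma) \in \dep{\elimsol{\soln}{\cstr}}$ is of one of two kinds: (a) a \emph{surviving} edge, with $\alpha \in \body{\cstr'}$, $\alpha \neq \kvar$, so that $(\alpha,\gamma) \in \dep{\cstr}$; or (b) a \emph{substitution-introduced} edge, where $\alpha$ is contributed by $\soln(\kvar)$, so that $(\alpha,\kvar) \in \dep{\soln}$, while $\kvar \in \body{\cstr'}$ and $\gamma \in \head{\cstr'}$, giving $(\kvar,\gamma) \in \dep{\cstr}$.

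Next I would use the hypothesis $\dep{\soln} \subseteq \dep{\cstr}$ to absorb the type-(b) edges. In case (b) we have $(\alpha,\kvar) \in \dep{\soln} \subseteq \dep{\cstr}$ together with $(\kvar,\gamma) \in \dep{\cstr}$, so $(\alpha,\gamma)$ is merely a two-edge path through $\kvar$ in $\dep{\cstr}$. Restricting to $\Kvarc$, I must check such a path lands in $\depsstar{\Kvarc}{\cstr}$ whenever $\alpha,\gamma \in \Kvar'$. This is where the invariant that we eliminate a variable drawn from the acyclic set is used: since $\kvar \in \Kvar'$, both $(\alpha,\kvar)$ and $(\kvar,\gamma)$ are edges entirely within $\Kvar'$, hence both lie in $\deps{\Kvarc}{\cstr}$, and their composite $(\alpha,\gamma)$ lies in $\depsstar{\Kvarc}{\cstr}$. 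The type-(a) edges restricted to $\Kvar'$ are in $\deps{\Kvarc}{\cstr}$ outright. Together these give $\deps{\Kvarc}{\elimsol{\soln}{\cstr}} \subseteq \depsstar{\Kvarc}{\cstr}$, and the conclusion follows as in the first paragraph.

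The main obstacle is precisely the type-(b) case: a careless elimination can route a fresh edge through a variable lying on a cycle, fusing two harmless edges into a genuine cycle among the survivors (this indeed happens if one eliminates a cyclic variable lying outside $\Kvar'$). The real content is therefore to show that substitution only ever \emph{contracts} an already-present two-edge path in $\dep{\cstr}$ — which is exactly what $\dep{\soln} \subseteq \dep{\cstr}$ buys us — and that, because the eliminated $\kvar$ belongs to $\Kvar'$, this contracted path was already subsumed by $\depsstar{\Kvarc}{\cstr}$. The remaining bookkeeping — that heads of uses are $\kvar$-free and that the structural rewriting of $\elimsolsym$ is faithfully captured by Lemma~\ref{lem:subst}, plus the reduction of a multi-variable $\soln$ to iterated single-variable elimination — is routine.
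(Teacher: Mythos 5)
The paper states this lemma without proof: it is asserted in \S~\ref{sec:algo:acyclic} and no argument for it appears in the appendix (\S~\ref{sec:proofs}), so there is no official proof to compare yours against. Judged on its own, your argument is correct and is surely the intended one: reduce acyclicity preservation to the edge containment $\deps{\Kvarc}{\elimsol{\soln}{\cstr}} \subseteq \depsstar{\Kvarc}{\cstr}$, use Lemma~\ref{lem:subst} to classify each edge of the eliminated constraint as either a surviving edge of $\dep{\cstr}$ or a substitution-introduced edge, and absorb the latter as a contracted two-edge path through the eliminated variable using $\dep{\soln} \subseteq \dep{\cstr}$.

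Your side remark deserves emphasis, because it identifies a genuine defect in the statement itself: the extra hypothesis you invoke --- that the eliminated variable lies in $\Kvar'$, i.e.\ $\domain{\soln} \subseteq \Kvar'$ --- is not a convenience but a necessity. Take $\cstr$ with three flat clauses inducing the edges $(\kvar_1,\kvar_3)$, $(\kvar_3,\kvar_2)$, $(\kvar_2,\kvar_1)$, and let $\Kvar' = \{\kvar_1,\kvar_2\}$, so $\Kvarc = \{\kvar_3\}$ and $\deps{\Kvarc}{\cstr} = \{(\kvar_2,\kvar_1)\}$ is acyclic. Let $\soln$ map $\kvar_3$ to a predicate containing $\kvar_1$, so $\dep{\soln} = \{(\kvar_1,\kvar_3)\} \subseteq \dep{\cstr}$. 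Then $\elimsol{\soln}{\cstr}$ has exactly the edges $(\kvar_1,\kvar_2)$ and $(\kvar_2,\kvar_1)$: a cycle inside $\Kvar'$, so the stated conclusion fails. (Note that $\kvar_3$ has no self-loop, so the blanket invariant of \S~\ref{sec:decomposition} that the relevant $\kvar$ is individually acyclic does not exclude this instance.) Your use of $\kvar \in \Kvar'$ --- precisely to place both halves $(\alpha,\kvar)$ and $(\kvar,\gamma)$ of the contracted path inside $\deps{\Kvarc}{\cstr}$ --- is the correct repair, and the added hypothesis holds in every application the paper makes (Corollaries~\ref{thm:elims} and~\ref{thm:elims:cyclic} eliminate only variables drawn from the acyclic set). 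One cosmetic point: for non-singleton $\soln$ you appeal to iterated single-variable elimination, but simultaneous substitution is not literally iteration, and Lemma~\ref{lem:subst} is stated only for singleton domains; it is cleaner to run your edge case analysis directly on the defining equations of $\elimsolsym$, which goes through verbatim for any $\soln$ with $\domain{\soln} \subseteq \Kvar'$.
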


\mypara{Eliminating Many Variables} 
Procedure $\elim{\Kvar}{\cstr}$ (Fig.~\ref{fig:elim})
eliminates a set of acyclic $\Kvar$ variables by iteratively
eliminating each variable via $\elimksym$.
Using Theorem~\ref{thm:elim}, Lemma~\ref{lem:subst} and
Lemma~\ref{lem:acyclic-subst} we show that the elimination
of \emph{multiple} acyclic $\kvar$-variables also preserves
satisfiability.

\begin{corollary}[\textbf{Exact-Satisfaction}] 
\label{thm:elims}
Let $\cstr' \doteq \elim{\Kvar}{\cstr}$.
\begin{enumerate}
  \item $\kvars{\cstr'} = \emptyset$, \ie $\cstr'$ is a VC, and
  \item $\cstr'$ is satisfiable iff $\cstr$ is satisfiable.
\end{enumerate}
\end{corollary}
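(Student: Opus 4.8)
The plan is to prove both parts of the corollary simultaneously by induction on the number of variables that $\elimsym$ threads through $\cstr$, exploiting the recursive definition of $\elim{\Kvar}{\cstr}$ as the iterated application of $\elimksym$ one variable at a time. Throughout, I would carry as an invariant that the constraint handed to each $\elimksym$ call is still \emph{acyclic on the variables that remain to be eliminated}; this is exactly the precondition that Theorem~\ref{thm:elim} requires in order to justify a single elimination step. The standing hypothesis of this section, that $\cstr$ itself is acyclic, supplies the invariant for the very first step.

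For the base case $\Kvar = \emptyset$ we have $\cstr' = \cstr$; since every refinement variable lies in $\Kvar$, the constraint already contains no $\kvar$ and is thus a VC, and satisfiability is trivially preserved. For the inductive step, write $\elim{\Kvar}{\cstr} = \elim{\Kvar \setminus \{\kvar\}}{\elimk{\kvar}{\cstr}}$ for whichever $\kvar \in \Kvar$ the procedure eliminates first, and set $\cstr_1 \doteq \elimk{\kvar}{\cstr}$. Because $\cstr$ is acyclic, in particular $\kvar$ is acyclic in $\cstr$, so Theorem~\ref{thm:elim} gives that $\cstr_1$ is satisfiable iff $\cstr$ is, while Lemma~\ref{lem:subst} gives that $\kvar \not\in \cstr_1$ (so $\kvars{\cstr_1} \subseteq \Kvar \setminus \{\kvar\}$, and, since the solution assigned to $\kvar$ mentions only variables already present in $\cstr$, no fresh refinement variables are introduced). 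Applying the induction hypothesis to $\elim{\Kvar \setminus \{\kvar\}}{\cstr_1}$ then yields that $\cstr'$ is a VC and that $\cstr'$ is satisfiable iff $\cstr_1$ is; composing the two satisfiability equivalences closes the step and establishes both conclusions.

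The part that needs real care — and the step I expect to be the main obstacle — is discharging the induction hypothesis' precondition that $\cstr_1$ remains acyclic on the surviving variables $\Kvar \setminus \{\kvar\}$. For this I would first note that $\elimk{\kvar}{\cstr} = \elimsol{\osoln{\cstr}{\kvar}}{\cstr}$, and that the scoped solution's dependencies are contained in those of $\cstr$, \ie $\dep{\osoln{\cstr}{\kvar}} \subseteq \dep{\cstr}$: by the definition of $\solksym$, every $\kvar'$ occurring in the predicate assigned to $\kvar$ arises from the body of a clause whose head is $\kvar$, so the pair $(\kvar', \kvar)$ already lies in $\dep{\cstr}$. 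With this containment in hand, Lemma~\ref{lem:acyclic-subst}, applied with the acyclic set $\Kvar \setminus \{\kvar\}$ (acyclic in $\cstr$ because $\cstr$ is) and $\soln = \osoln{\cstr}{\kvar}$, shows that $\Kvar \setminus \{\kvar\}$ remains acyclic in $\cstr_1$. This is precisely what guarantees that the next $\elimksym$ call again satisfies the hypothesis of Theorem~\ref{thm:elim}, so acyclicity is preserved all the way down the iteration and the induction goes through.
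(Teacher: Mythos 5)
Your proposal is correct and takes essentially the same approach as the paper: the paper discharges this corollary in one line by combining Theorem~\ref{thm:elim}, Lemma~\ref{lem:subst}, and Lemma~\ref{lem:acyclic-subst}, and your induction on the list of eliminated variables is exactly the intended way to chain them, with acyclicity of the surviving variables as the loop invariant. The step you single out as needing care --- checking the precondition $\dep{\osoln{\cstr}{\kvar}} \subseteq \dep{\cstr}$ so that Lemma~\ref{lem:acyclic-subst} applies --- is indeed the only non-mechanical part, and your justification (every variable in the assigned predicate comes from a body of a clause whose head is $\kvar$) is precisely the reasoning the paper leaves implicit.
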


\newcommand\bcsat{\textsc{BcSat}\xspace}

\mypara{Unavoidable Worst-Case Blowup}
Note that in the worst case, the $\elimsym$ procedure
can cause an exponential blowup in the constraint size.
In general some form of blowup is unavoidable as even
the Boolean version of the constraint satisfaction problem
is complete for exponential time.
A \emph{boolean constraint} is an acyclic
constraint where each variable is of type
$\tbool$.
The \emph{Boolean Constraint Satisfaction} (\bcsat)
problem is to decide whether a given boolean
constraint is satisfiable.

\begin{theorem}\label{thm:exptime}
  \bcsat is \exptime-complete.
\end{theorem}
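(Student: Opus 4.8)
The plan is to prove the two directions of \exptime-completeness separately: membership in \exptime and \exptime-hardness, with the latter being the substantive part.

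For \emph{membership}, I would decide satisfiability of a boolean constraint $\cstr$ by explicitly computing the global strongest solution and then checking the goals, exploiting the fact that over $\tbool$ every $\kvar$ of arity $k$ denotes a relation storable as a truth table of $2^k$ bits. Since acyclicity of $\depstar{\cstr}$ guarantees a topological order on $\Kvar$, I would process the variables so that each $\kvar$ is handled only after every variable it depends on. For each $\kvar$ I would materialize $\solk{\kvar}{\cstr}$ --- the disjunction of the bodies of $\defns{\cstr}{\kvar}$ with the non-parameter variables existentially quantified --- as a truth table: for each of the at most $2^{|\cstr|}$ valuations of the parameters, test whether some definition body is satisfiable by enumerating the boolean existential witnesses, evaluating the quantifier-free parts directly and resolving nested $\kvar$-applications by lookup into the already-computed tables. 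Each variable thus costs $2^{O(|\cstr|)}$ time, and there are at most $|\cstr|$ of them, so the whole solution is built in $2^{O(|\cstr|)}$ time. Finally I would check that this assignment satisfies $\uses{\cstr}{\kvar}$ for every $\kvar$, i.e.\ that every concrete-headed clause is valid, again by enumeration in $2^{O(|\cstr|)}$ time. Correctness is immediate from the established machinery: the assignment satisfies all definitions by Lemma~\ref{lem:defns}, and by Theorem~\ref{thm:strongest} (applied variable-by-variable, equivalently Corollary~\ref{thm:elims}) it satisfies $\cstr$ exactly when $\cstr$ is satisfiable at all. Hence \bcsat lies in \exptime.

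For \emph{hardness}, I would reduce from the reachability (safety) problem for non-recursive first-order boolean programs, which is \exptime-complete~\cite{GodefroidYannakakis13}. Given such a program $P$ and a target location, I would build a \lang term $e_P$ in the boolean fragment whose procedures become functions, whose control flow and state threading become let-bindings and applications over tuples of $\tbool$ values, and whose safety specification asserts that the target is never reached. Running constraint generation yields $\cstr_P \doteq \conssym(e_P)$. Two properties make $\cstr_P$ a \emph{boolean} constraint in the sense of the theorem: all data is boolean, so every generated $\kvar$ ranges over $\tbool$-typed parameters; and $P$ is \emph{non-recursive}, so there are no cyclic dependencies among procedures and hence $\depstar{\cstr_P}$ is acyclic. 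By Theorem~\ref{thm:consgen} (soundness) together with the relative completeness established via the strongest-solution construction, $\cstr_P$ is satisfiable if and only if $e_P$ is safe, i.e.\ if and only if the target is unreachable. Since \exptime is closed under complement, a polynomial-time decision procedure for \bcsat would decide unreachability and hence reachability, establishing \exptime-hardness.

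The main obstacle is the hardness reduction, specifically making the encoding of $P$ into $e_P$ faithful in both directions while keeping the generated constraint boolean and acyclic. The soundness direction (satisfiable constraint implies safe program implies unreachable target) is handed to us by Theorem~\ref{thm:consgen}, but the converse --- that a genuinely unreachable target yields a \emph{satisfiable} constraint --- relies on the refinement logic expressing the exact reachable-state relation at each program point; I would discharge this by appealing to relative completeness (the strongest scoped solution of Theorem~\ref{thm:strongest} realizes precisely these reachability relations when the data is boolean) and by checking that the control-flow encoding introduces a fresh $\kvar$ per program point so that no spurious cycles arise. A secondary subtlety, on the membership side, is that one must \emph{not} solve via the syntactic $\elimksym$ unfolding, whose output can be of size $2^{\Theta(|\cstr|)}$; representing each $\kvar$ semantically as an exponential-size truth table is what keeps the running time within a single exponential rather than risking a second one.
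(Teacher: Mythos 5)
Your proposal is correct, and at its core it travels the same road as the paper, which disposes of this theorem in a single sentence: \bcsat is \exptime-complete as a corollary of the \exptime-completeness of reachability for non-recursive first-order boolean programs~\cite{GodefroidYannakakis13} together with the equivalence between that reachability problem and satisfaction of acyclic boolean constraints, outsourced to the correspondence of~\cite{HMC}. What you do differently is discharge both halves of that citation yourself. For \emph{membership}, instead of reducing \bcsat back to boolean-program reachability (which is what the paper's appeal to the equivalence implicitly does), you give a direct algorithm: process the $\kvar$-variables in a topological order of the acyclic dependency relation, materialize each strongest solution semantically as a truth table, and check the uses by enumeration; this is sound (note that the paper's key lemmas, Lemma~\ref{lem:defns} and Lemma~\ref{lem:strong}, are stated for exactly the \emph{unscoped} strongest solution you use, so your correctness appeal goes through), and your closing remark is a genuine subtlety the paper never states -- running the syntactic $\elimsym$ pipeline and then checking validity of the resulting VC naively costs doubly-exponential time, since the VC can contain exponentially many bound variables, so the semantic truth-table representation is what keeps the procedure inside a single exponential. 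For \emph{hardness}, your route through $\conssym$ is the same encoding idea the paper sketches in \S~\ref{sec:overview:blowup}, but carrying it through the type system commits you to a completeness claim (safe program $\Rightarrow$ satisfiable constraint) that the paper's formal development never proves: Theorem~\ref{thm:consgen} is soundness only, and ``relative completeness'' is discussed only informally. You correctly identify this as the main obstacle, and your discharge is credible precisely because in the boolean setting every relation over the in-scope variables is expressible as a quantifier-free formula, so the exact reachability relations are legal assignments; still, you could sidestep the type-system layer entirely by encoding the program's transitions directly as Horn clauses with one $\kvar$ per program point, which is what the cited correspondence of~\cite{HMC} does and makes the adequacy argument in both directions more elementary.
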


The above is a corollary of \cite{GodefroidYannakakis13}
and the fact that reachability of (non-recursive) boolean
programs is equivalent to the satisfaction
of (acyclic) boolean constraints via the
correspondence of ~\cite{HMC}.
Thus, as validity checking is in \nptime, it is not possible
to always generate compact (polynomial sized) formulas,
assuming $\exptime \not = \nptime$.

\mypara{Avoiding Let-Chain Blowup}
Recall the example from~\S~\ref{sec:overview:blowup}.
Let $\cstr$ be the NNF constraint in (\ref{eq:cstr-blowup})
and consider the sequence of intermediate constraints
produced by eliminating $\kvar_1,\ldots,\kvar_n$
from $\cstr$:
\begin{align}
\cstr_1     & \doteq \cstr \notag \\
\cstr_{i+1} & \doteq \elimk{\kvar_i}{\cstr_i} \quad \mbox{for} \quad i \in 1,\ldots,n \notag \\
\intertext{At each step, the strongest solution for $\kvar_i$ is
computed from $\scoped{\kvar}{\cstr_i}$, \ie the part of the
constraint derived from the subterm where the binder $\kw{x}_i$
is in scope. Thus, for each $i \in 1,\ldots,n$ we get solutions}
\solk{\kvar_i}{\cstr_i} &  \doteq \exists \vv.\ \vv = \kw{x}_{i-1} \wedge z = \vv \label{eq:compact-sol} \\
\intertext{where $\args{\kvar_i} = z$. For brevity of exposition, we simplify the above to}
\solk{\kvar_i}{\cstr_i} &  \doteq z = \kw{x}_{i-1} \notag \\
\intertext{and hence, note that $\elim{[\kvar_1,\ldots,\kvar_n], \cstr}$ yields the linear-sized VC}
\cstr_{n+1} & \doteq
   \csimps{\kw{x}_0}{0 \leq \kw{x}_0}
                    {\csimps{\kw{x}_1}{\kw{x}_1 = \kw{x}_0 }
                                        { \ldots \Rightarrow \csimps{\kw{x}_n}{  \kw{x}_n = \kw{x}_{n-1}}
                                                                    {\csimps{\vv}{\vv = \kw{x}_n}{0 \leq \vv}}
                                        }
                    } \notag
\end{align}
that is easily proved valid by the SMT solver.

\subsection{Eliminating Cyclic Variables}
\label{sec:algo:cyclic}

When the constraint $\cstr$ is cyclic, the
satisfaction problem is undecidable via a
reduction from the problem of checking the
safety of \textsc{while}-programs~\cite{hornsurvey}.
Consequently, we can only compute over-approximate
or conservative solutions via abstract interpretation~\cite{HMC}.
Our approach is to
(1)~compute the set of variables that make the constraints cyclic
    (\ie whose absence would make the constraints acyclic),
(2)~eliminate all \emph{except} those variables, and
(3)~use predicate abstraction to solve the resulting constraint,
yielding a method that is much faster \emph{and} more
precise~(\S~\ref{sec:eval}), than ``global'' Liquid Typing~\cite{LiquidPLDI08}.

\mypara{Cut Variables}
A set of refinement variables $\Kvarc$ \emph{cuts} a
constraint $\cstr$ if $\Kvar - \Kvarc$ is acyclic in $\cstr$.
We cannot compute the \emph{minimum} set of cut variables
as this is the NP-Complete minimum feedback vertex set
problem~\cite{Karp72}.
Instead we use a greedy heuristic to compute $\Kvarc$.
We compute the strongest connected component (SCC) digraph
for $\depstar{\cstr}$, and iteratively remove $\kvar$-variables
(and recompute the digraph) until each SCC has a single
vertex \ie the graph is acyclic.
%
%
The reasoning for Lemma~\ref{thm:elims}
yields the following corollary:

\begin{corollary}\label{thm:elims:cyclic}
If $\Kvarc$ cuts $\cstr$
then $\elim{\Kvar - \Kvarc}{\cstr}$
returns a constraint $\cstr'$ such that
$\kvars{\cstr'} \subseteq \Kvarc$ and
$\cstr'$ is satisfiable iff $\cstr$ is satisfiable.
\end{corollary}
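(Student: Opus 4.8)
The plan is to recognize that this corollary is just Corollary~\ref{thm:elims} (Exact-Satisfaction) with the eliminated set narrowed from all of $\Kvar$ to the acyclic subset $\Kvar - \Kvarc$, so essentially the same induction should carry through. First I would unfold the hypothesis: by the definition of cuts, $\Kvarc$ cuts $\cstr$ precisely when $\Kvar - \Kvarc$ is acyclic in $\cstr$, which is exactly the property the elimination machinery consumes. Since $\elim{\Kvar - \Kvarc}{\cstr}$ removes the variables of $\Kvar - \Kvarc$ one at a time via $\elimksym$, I would proceed by induction on $|\Kvar - \Kvarc|$, maintaining the invariant that the set of still-to-be-eliminated variables is acyclic in the current constraint.

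For the inductive step, because the remaining subset of $\Kvar - \Kvarc$ is acyclic there is some $\kvar$ in it that is acyclic in the current constraint, and I would eliminate that one first. Three earlier results then combine cleanly: Theorem~\ref{thm:elim} (Elim-Preserves-Satisfiability) gives that $\elimk{\kvar}{\cstr}$ is satisfiable iff $\cstr$ is, so the biconditional is preserved; Lemma~\ref{lem:subst} guarantees that $\kvar$ no longer occurs in the result, so the remaining work is over $(\Kvar - \Kvarc) \setminus \{ \kvar \}$; and Lemma~\ref{lem:acyclic-subst} (Elim-Acyclic) guarantees that this remaining set stays acyclic in $\elimk{\kvar}{\cstr}$, re-establishing the invariant so the induction hypothesis applies. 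After every variable of $\Kvar - \Kvarc$ has been removed, the resulting $\cstr'$ contains none of them, giving $\kvars{\cstr'} \subseteq \Kvarc$ (the first conclusion), while satisfiability has been preserved at each step (the second conclusion).

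The main obstacle I anticipate is discharging the side condition $\dep{\soln} \subseteq \dep{\cstr}$ demanded by Lemma~\ref{lem:acyclic-subst}, where $\soln = \osoln{\cstr}{\kvar}$ is the strongest scoped solution used inside $\elimksym$. I would argue that $\solk{\kvar}{\cstr}$ is assembled solely from disjunctions of the bodies under which $\kvar$ occurs as a head, so any dependency $(\kvar_1, \kvar_2)$ introduced by substituting $\kvar$ by its solution witnesses a body-occurrence of $\kvar_1$ together with a head-occurrence of $\kvar_2$ already present in $\flatten{\cstr}$; hence $\dep{\osoln{\cstr}{\kvar}} \subseteq \dep{\cstr}$ and acyclicity is genuinely preserved rather than merely plausibly so. A secondary bookkeeping point is that eliminating other variables first must not create a self-loop on a remaining variable, but this is exactly what Elim-Acyclic forbids, so no argument beyond a careful invocation of that lemma is required. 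Since the heavy lifting is packaged in Theorem~\ref{thm:elim}, Lemma~\ref{lem:subst}, and Lemma~\ref{lem:acyclic-subst}, the corollary follows by the same induction as Corollary~\ref{thm:elims}, merely restricting attention to $\Kvar - \Kvarc$ in place of all of $\Kvar$.
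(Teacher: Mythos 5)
Your proposal is correct and follows essentially the same route as the paper: the paper dispatches this corollary by noting that the reasoning for Corollary~\ref{thm:elims} carries over, which is precisely your induction over the eliminated set using Theorem~\ref{thm:elim}, Lemma~\ref{lem:subst}, and Lemma~\ref{lem:acyclic-subst}, only with $\Kvar - \Kvarc$ in place of $\Kvar$. Your explicit discharge of the side condition $\dep{\osoln{\cstr}{\kvar}} \subseteq \dep{\cstr}$ (via the observation that $\solksym$ only collects hypotheses from clauses whose head contains $\kvar$) is a detail the paper leaves implicit, but it is exactly the justification the paper's argument relies on.
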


\begin{figure}[t!]
$$\begin{array}{lcl}
\toprule
\satsym & : & (C \times \setof{R}) \rightarrow \{ \textsc{True}, \textsc{False} \} \\
\midrule
\sat{\cstr}{\quals} & \doteq & \solve{\cstrs}{\quals} \\
  \quad \mbox{where}     &        & \\
  \quad \quad \cstr'     & =      & \elim{\Kvarc}{\cstr} \ \mbox{where $\Kvarc$ cuts $\cstr$} \\
  \quad \quad \cstrs     & =      & \flatten{\cstr'} \\[0.1in]
\toprule
\flattensym              &    :   & C \rightarrow \setof{C} \\
\midrule
\flatten{\ttrue}
  & \doteq & \emptyset
  \\[0.05in]
\flatten{\pred}
  & \doteq & \{ \pred \}
  \\[0.05in]
\flatten{\csand{\cstr}{\cstr'}}
  & \doteq & \flatten{\cstr} \cup \flatten{\cstr'}
  \\[0.05in]
\flatten{\csimp{x}{\base}{\pred}{\cstr}}
  & \doteq & \{ \csimp{x}{\base}{\pred}{\cstr'} \spmid \cstr' \in \flatten{\cstr} \}
  \\[0.1in]
\toprule
\elimsym & : & (\setof{\Kvar} \times \Cstr) \rightarrow \Cstr \\
\midrule
\elim{[]}{\cstr} & \doteq & \cstr \\[0.05in]
\elim{\kvar : \kvarl}{\cstr} & \doteq & \elim{\kvarl}{\elimk{\kvar}{\cstr}} \\[0.05in]
\bottomrule
\end{array}$$
\caption{Checking Constraint Satisfaction}
\label{fig:elim}
\label{fig:sat}
\label{fig:flatten}
\end{figure}

\mypara{Exact and Approximate Constraint Satisfaction}
Procedure $\sat{\cstr}{\quals}$ (Fig.~\ref{fig:sat})
computes a set of cut variables $\Kvarc$, and then
uses $\elimsym$ to compute \emph{exact}
solutions for non-cut variables,
and finally computes \emph{approximate}
solutions for the residual cut variables
left over after elimination, using the
$\solvesym$ procedure from~\cite{LiquidPLDI08}
which computes the satisfaction of (non-nested)
constrained Horn Clauses (CHC) (obtained
via $\flattensym$) using predicate
abstraction~\cite{GrafSaidi97,Houdini}.
%
%
\begin{theorem}[\textbf{Satisfaction}] \label{thm:sat}
  \leavevmode
\begin{enumerate}
  \item If $\sat{\cstr}{\quals}$ then $\cstr$ is satisfiable.
  \item If $\cstr$ is acyclic, then $\sat{\cstr}{\quals}$
       iff $\cstr$ is satisfiable.
\end{enumerate}
\end{theorem}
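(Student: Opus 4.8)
The plan is to prove both parts by composing three results already established: that eliminating non-cut variables preserves satisfiability, that flattening preserves satisfiability, and the soundness/exactness guarantees of $\solvesym$. First I would unfold $\sat{\cstr}{\quals}$ from its definition: it picks a cut set $\Kvarc$, forms $\cstr' = \elim{\Kvar - \Kvarc}{\cstr}$, flattens it to $\cstrs = \flatten{\cstr'}$, and returns $\solve{\cstrs}{\quals}$.

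For part~(1), since $\Kvarc$ cuts $\cstr$, Corollary~\ref{thm:elims:cyclic} gives both $\kvars{\cstr'} \subseteq \Kvarc$ and that $\cstr'$ is satisfiable iff $\cstr$ is. Lemma~\ref{lem:flatten} then gives that $\cstrs$ is satisfiable iff $\cstr'$ is. Finally I would appeal to the soundness of $\solvesym$ (the predicate-abstraction solver of~\cite{LiquidPLDI08}): if $\solve{\cstrs}{\quals}$ returns \textsc{True}, it has exhibited an assignment satisfying $\cstrs$. Chaining the implications gives $\sat{\cstr}{\quals} \Rightarrow \cstrs$ satisfiable $\Rightarrow \cstr'$ satisfiable $\Rightarrow \cstr$ satisfiable.

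For part~(2), I would first note that on an acyclic $\cstr$ the greedy heuristic returns $\Kvarc = \emptyset$: the digraph $\depstar{\cstr}$ is already acyclic, so every SCC is a singleton and no variable is ever removed. Hence $\cstr' = \elim{\Kvar}{\cstr}$, and Corollary~\ref{thm:elims} applies, giving that $\cstr'$ is a VC ($\kvars{\cstr'} = \emptyset$) which is satisfiable iff $\cstr$ is. By Lemma~\ref{lem:flatten} the flat set $\cstrs$ is satisfiable iff $\cstr'$ is, and each member of $\cstrs$ is itself a flat VC. The decisive observation is that $\solvesym$ is \emph{exact} on constraints free of $\kvar$-applications: with no unknowns to assign, the qualifiers $\quals$ play no role and $\solvesym$ simply discharges each flat clause as a validity query, so $\solve{\cstrs}{\quals}$ returns \textsc{True} precisely when every flat VC in $\cstrs$ is valid, \ie precisely when $\cstrs$ is satisfiable. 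Reading the resulting chain of equivalences in both directions yields $\sat{\cstr}{\quals}$ iff $\cstr$ is satisfiable.

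The soundness direction is essentially bookkeeping, a straight composition of the cited lemmas. The real work is the exactness claim on which part~(2) rests: I must argue that the \emph{only} source of incompleteness in $\solvesym$ is its restriction of each $\kvar$-solution to the qualifier-generated lattice, so that once elimination has removed every $\kvar$ the solver degenerates to a complete validity check over the flattened clauses. Supporting this in turn needs the acyclicity-to-emptiness step ($\Kvarc = \emptyset$, so that \emph{all} variables are eliminated rather than merely the non-cut ones) together with the interface fact that $\solvesym$ decides validity exactly on unknown-free Horn clauses.
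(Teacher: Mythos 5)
Your proposal is correct and follows essentially the same route as the paper, which proves the theorem by composing Corollary~\ref{thm:elims}, Corollary~\ref{thm:elims:cyclic}, the soundness of $\solvesym$ (Theorem~2 of~\cite{LiquidPLDI08}), and the observation that $\solvesym$ degenerates to $\smtvalid{\cdot}$ once no $\kvar$-applications remain. Your explicit step that acyclicity forces the cut set to be empty (so that \emph{all} variables are eliminated and Corollary~\ref{thm:elims} applies) is exactly the unstated bridge the paper's one-line argument relies on.
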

These results follow from
Corollary~\ref{thm:elims},
~\ref{thm:elims:cyclic},
the properties of $\solvesym$
(Theorem 2, \cite{LiquidPLDI08}),
and as $\solvesym$ reduces to
$\smtvalid{\cstr}$ when
$\kvars{\cstr}$ is empty.

\newcommand\TOTALLOC{15885}

\newcommand\datastruct{\toolfont{Data-Struct}}
\newcommand\vecalgs   {\toolfont{Vec-Algos}}

\newcommand\hname[1]{\textbf{#1}\xspace}
\newcommand\locname{\hname{Code}}
\newcommand\specname{\hname{Spec}}

\newcommand\filesname{\hname{Files}}
\newcommand\benchname{\hname{Benchmark}}
\newcommand\timename{\hname{Time}}
\newcommand\oldname{\toolfont{L}} 
\newcommand\newname{\toolfont{F}}
\newcommand\tymename{\hname{Time(s)}}
\newcommand\qualsname{\hname{Qualifiers}}

\section{Evaluation}\label{sec:eval}

We have implemented \toolname within
the \toolfont{LiquidHaskell} refinement
type checker~\cite{LiquidICFP14}.
It is used by default in the current version
\footnote{https://github.com/ucsd-progsys/liquidhaskell}.
We evaluate the \emph{speed} and
\emph{precision} of \toolname on
two sets of benchmarks from the
\toolfont{LiquidHaskell} project
totalling more than 12KLOC.
Our formalism based on call-by-value
evaluation is sound for Haskell as we
also simultaneously prove termination
for all potentially bottom-inhabited
terms, as described in~\cite{LiquidICFP14}.
Our results show that \toolname yields nearly
$2 \times$ \emph{speedups} for safety verification
benchmarks and, by synthesizing the most precise types
more than $10 \times$ faster, actually \emph{enables}
theorem proving.

\mypara{Safety Property Benchmarks}
The first set of benchmarks
are drawn from the Haskell standard libraries
and detailed in~\cite{realworldliquid}.
\datastruct\ comprises applicative
data structures like red-black trees, lists (\kw{Data.List}),
splay trees (\kw{Data.Set.Splay}), and binary search trees
(\kw{Data.Map.Base}); we verify
termination and structure specific invariants
like ordering and balance.
\toolfont{Vector-Algorithms} comprises
a suite of imperative (\ie monadic) array-based
sorting algorithms; we verify termination and
the correctness of array accesses.
\toolfont{Text} and \toolfont{Bytestring}
are the standard libraries for high-performance
unicode text and byte-array processing which
are implemented via low-level pointer arithmetic;
we verify termination, memory
safety and correctness properties
specified by the library API.

\mypara{Theorem Proving Benchmarks}
Recent work shows how refinement typing can convert
legacy languages like Haskell into \emph{proof assistants}
where ordinary programs can be used to write arbitrary
proofs of correctness about the ``deep specifications''
of other programs, and have the proofs checked via
refinement typing~\cite{ReflectionARXIV}.
%
%
The second set of benchmarks corresponds to a set of
programs corresponding to such proofs.
\toolfont{Arith} includes theorems about the \emph{growth}
of the fibonacci and ackermann functions;
\toolfont{Fold} includes theorems about the
\emph{universality} of traversals;
\toolfont{Monoid}, \toolfont{Functor},
\toolfont{Applicative} and \toolfont{Monad}
includes proofs of the respective
category-theoretic \emph{laws} for the
\kw{Maybe}, \kw{List}, and \kw{Id}
instances of the respective typeclasses;
and finally,
\toolfont{SatSolver} and \toolfont{Unification}
are fully verified implementations of the respective
algorithms from the \toolfont{Zombie} suite which,
absent SMT support, requires significantly more
local annotations (proof terms) from the
user~\cite{zombie1,zombie2}.

\mypara{Methodology}
For each benchmark, we compare the performance
of \toolfont{LiquidHaskell} using the (\oldname)
\emph{global} refinement inference from \cite{LiquidPLDI08},
and using our (\newname) \emph{local}
refinement algorithm.
We compare the amount of \tymename,
in seconds, it took to check each benchmark.
Each benchmark represents several Haskell
files (modules); we report \qualsname,
the average number of qualifiers (predicates)
that were required to synthesize the types needed
for verification per file.
%
%
In addition to the variables needed to eliminate cycles,
we aggressively mark all refinement variables appearing
in templates for ``top-level'' types as cut-variables
to ensure that simple refinements (over qualifiers)
are synthesized for such functions.
All benchmarks were run on a MacBook Pro with
a 2.2 GHz Intel Core i7 processor, using the
\toolfont{Z3} SMT solver for checking validity.
Table~\ref{table:eval} summarizes the results.

\mypara{Safety Verification Results}
Two points emerge from the safety verification
benchmarks.
First, for the larger benchmarks
(\toolfont{text} and \toolfont{bytestring}),
for which there already exist suitable qualifiers
permitting global inference, the new local \toolname
algorithm yields significant speedups -- often \emph{halving}
the time taken for verification. (Note this includes end-to-end
time, including name resolution, plain type checking \etc, and
so the actual speedup for just refinement checking is even greater.)
Second, even for these benchmarks, \toolname reduces (nearly halves)
the number of \emph{required} qualifiers. This makes the checker
significantly easier to use, as the programmer does not have
reason about which qualifiers to use for intermediate terms.
By synthesizing strongest refinements, \toolname removes a
key source of unpredictability and hard-to-diagnose false alarms.

\mypara{Theorem Proving Results}
The improvement is more stark for the theorem
proving benchmarks: most of them can only be
checked using local inference.
There are several reasons for this.
The proofs are made possible by heavy use of
polymorphic proof combinators.
As the specifications are much more
complicated, the combinators' type
variables must be (automatically)
instantiated with significantly
more complex refinements that
relate many program variables.
Thus, it is very difficult for
the user to even \emph{determine}
the relevant qualifiers.
Even if they could, the qualifiers
have many free variable (parameters)
which causes an exponential blowup
when matching against actual program
variables, making global, abstract
interpretation based refinement
synthesis impossible.
In contrast, since the theorem proving
benchmarks have almost no cyclic
dependencies, \toolname
makes short work of automatically
synthesizing the relevant refinement
instantiations, making complex
proofs possible.

\mypara{Comparison with other Tools}
We are not aware of any other tool
that scales up to these programs.
\fstar~\cite{fstar} requires local
annotations as described
in \S~\ref{sec:overview},
\toolfont{Mochi}~\cite{TerauchiPOPL13}
requires no annotations but may diverge,
and does not support uninterpreted
functions which precludes all of
our benchmarks.
Similarly, existing Horn Solvers like
\toolfont{$\mu$Z3} may diverge,
while
\toolfont{Eldarica}~\cite{eldarica15},
\toolfont{HSF}~\cite{RybalchenkoHSF}, and
\toolfont{Spacer}~\cite{spacer} do not
support uninterpreted functions.



\begin{table}
\centering
\caption{Comparing Liquid inference (L)~\cite{LiquidPLDI08}
with \toolname (F).
The number of modules (files) per benchmark is listed
in parentheses.
\locname is the total number of lines of non-comment
and non-whitespace lines of code and
\specname is the total number of lines of specification
(\ie top-level signatures), computed by \kw{sloccount}.
\qualsname is the number of qualifiers needed for
refinement inference; * means a false-positive,
\ie the benchmark could not be verified using
the qualifiers provided as the intermediate
terms have refinements not expressible using
the given qualifiers.
\tymename is the total time in seconds needed
to verify the benchmark (or to return a false
positive, for times with a *).}
\begin{tabular}{@{}lrrrrrrrrrr@{}}
\toprule
\benchname               & \locname & \specname  & \multicolumn{2}{c}{\qualsname} & \multicolumn{2}{c}{\tymename} \\
                         &          &            & \multicolumn{1}{c}{\oldname}
                                                     & \multicolumn{1}{c}{\newname}
                                                                 & \multicolumn{1}{c}{\oldname}
                                                                              & \multicolumn{1}{c}{\newname}    \\

\midrule

\datastruct (8)            & 1818 & 408 &  5 &  4 & 126 &  94 \\
\vecalgs   (11)            & 1252 & 279 &  4 &  4 &  78 &  61 \\
\toolfont{Bytestring}(11)  & 4811 & 726 & 18 & 11 & 233 & 136 \\
\toolfont{Text}(17)        & 3157 & 818 & 9  & 5 &  349 & 231 \\

\midrule

\toolfont{Arith}(2)        & 270 & 46  & *  & 0  & *63   & 5 \\
\toolfont{Fold}(1)         & 70  & 29  & 0  & 0  & 78   & 1 \\
\toolfont{Monoid}(2)       & 85  & 16  & 0  & 0  & 3    & 1 \\
\toolfont{Functor}(3)      & 137 & 28 & 0  & 0  & 55  & 3 \\
\toolfont{Applicative}(2)  & 146 & 36 & *  & 0  & *70  & 2 \\
\toolfont{Monad}(3)        & 180 & 42 & *  & 0  & *35  & 3 \\
\toolfont{Sat-Solver}(1)   & 98  & 31  & *  & 0  & *48   & 1 \\
\toolfont{Unification}(1)  & 139 & 53  & *  & 1  & *240  & 3 \\
\bottomrule
\end{tabular}
\vspace{0.2in}
\label{table:eval}
\end{table}

\section{Related Work}\label{sec:related}


\mypara{Floyd-Hoare Logics and Model Checking}
The process of eliminating refinement variables
to get a single verification condition (checked by SMT)
is analogous to the VC-generation step used in ESC-style
checkers~\cite{NelsonOppen,ESCJava}.
Indeed we use the term ``strongest refinement'' or ``solution''
for the intermediate types to highlight the analogy with
the notion of ``strongest postconditions'' from Floyd-Hoare
Logic.
While \toolname seeks to minimize the size of the
generated VC, unlike in the intra-procedural
case~\cite{FlanaganSaxe00, Leino05}, \exptime-completeness
means we cannot get compact VCs~(Theorem~\ref{thm:exptime}).
In the presence of loops, direct VC generation is
insufficient, and one must compute over-approximations.
\toolname's elimination procedure can be viewed as
a generalization of the \emph{large block encoding}
method~\cite{Beyer09} where over-approximation is
performed not at each instruction, but only at
``back edges'' in the control-flow graph.
Similarly, in program analyses it is common to
``inline'' the code for a procedure at a call site
to improve precision.
%
%
In \cite{dijkstramonad}, the authors
show how to generalize the notions of
composing weakest preconditions across
higher-order functions via the notion
of a \emph{Dijkstra Monad}.
\toolname's elimination procedure is
a way to systematically generalize
and lift the Floyd-Hoare notions
of strongest-postconditions
(dually, weakest-preconditions),
large block encodings, and procedure
inlining to the typed, higher order setting.
However, unlike Dijkstra Monads, \toolname
exploits the compositional structure of
types to locally synthesize precise
refinements (\ie invariants) in the
presence of polymorphic collections
and higher-order functions, to allow
checking examples like \kw{ex2}, \kw{ex3},
and \kw{ex4} (\S~\ref{sec:overview}).


%

\mypara{Local Type Inference}
\toolname performs a \emph{local inference} in that \emph{if}
the constraints are acyclic, then \toolname is able
to synthesize all intermediate refinement types exactly.
However, refinements render the problem (and our solution)
quite different than classical local typing~\cite{pierce-turner},
even with subtyping~\cite{odersky-local-01}.
First, even when all top-level (recursive) functions have
signatures, the constraints may get cycles, for example,
at instantiation sites for \kw{fold} functions (whose
polymorphic type variables must be instantiated with
the analogue of a ``loop invariant''.)
Second, our approach is orthogonal to bidirectional
type checking. Indeed, they can be (and in our
implementation, are) combined to yield a simpler
system of constraints, but we still need $\solkcsym$
and $\elimsym$ to synthesize the strongest refinements
\emph{relating} different program variables.

\mypara{Refinement Inference}
There are several other approaches to synthesizing refinements.
First, \cite{Knowles09} shows how existentials can be used
to type let-binders.
Second, \cite{GordonRefinement09} shows how a form of bidirectional
typing can be used to infer \emph{some} intermediate types.
Third, \cite{LiquidPLDI08} introduces the liquid typing framework
for synthesizing refinements via abstract interpretation.
Fourth, \cite{Polikarpova16} shows that liquid typing can be made
bidirectional and presents a new demand driven (``round trip'')
algorithm for doing the abstract interpretation (and also solving
for the weakest solution.)
However, none of the above approaches is able to handle the idiomatic
examples shown in ~\S~\ref{sec:overview}, or can only do so \emph{if}
given a suitable abstract domain (via templates).
We can try to infer such templates via abstraction
refinement~\cite{HMC,Kobayashi11,TerauchiPOPL13} but
that approach is notoriously unpredictable and prone
to diverging, especially in the presence of uninterpreted
functions which are ubiquitous in our examples.
Finally, \cite{JagannathanZhu15} shows how machine
learning over dynamic traces can be used to learn
refinements (in a generalization of the approach
pioneered by \cite{Ernst01}).
However, this needs closed programs (that can be run), which
can limit applicability to higher order functions.


\mypara{Horn Clauses}
%
%
Horn clauses have recently become a popular ``intermediate representation''
for verification problems \cite{hornsurvey}, as they can be used to encode
the proof rules for classical imperative Floyd-Hoare logics, and concurrent
programs \cite{RybalchenkoHSF} among others.
However, current Horn Clause solvers \eg \cite{Hoder12,RybalchenkoHSF,eldarica15}
are based on CEGAR and interpolation and hence, to quote a recent
survey \cite{hornsurvey}: ``mainly tuned for real and linear integer
arithmetic and Boolean domains'' rendering them unable to check any
of our benchmarks which make extensive use of uninterpreted functions.
%
%
%
Our work shows how to (1) algorithmically generate NNF clauses from typed,
higher-order programs, in a way that preserves scoping, (2) use an
optimized form of ``unfolding''~\cite{burstall-darlington-77,pettorosi94,TamakiSato84}
to synthesize the most precise type and (3) thereby, obtain a method
for improving the \emph{speed}, \emph{precision} and \emph{completeness}
of refinement type checking.



\begin{acks}                
%
We thank the anonymous referees,
Nadia Polikarpova, Eric Seidel and
Niki Vazou for their invaluable
feedback on earlier drafts of
this paper.
This material is based upon work supported by the
\grantsponsor{GS100000001}{National Science
Foundation}{http://dx.doi.org/10.13039/100000001}
under Grant Numbers
\grantnum{GS100000001}{CCF-1422471},
\grantnum{GS100000001}{CCF-1223850}, and
\grantnum{GS100000001}{CCF-1218344},
and a generous gift from Microsoft Research.
\end{acks}

\bibliography{sw}

\appendix

\section{Appendix: Proofs}
\label{sec:proofs}

We include below the proofs of the key theorems.

\begin{proof} (Lemma~\ref{lem:strong})
Assume that $\pred \equiv \kvapp{\kvar}{z}$ where $\params{z} = \args{\kvar}$
The other cases are trivial as
$\apply{\ssoln{\cstr}{\kvar}}{\pred} = \pred$,
and
$\apply{\soln}{\pred_1} \Rightarrow \pred_1$ and
$\apply{\soln}{\pred_2} \Rightarrow \pred_2$
implies
$\apply{\soln}{\csand{\pred_1}{\pred_2}} \Rightarrow \csand{\pred_1}{\pred_2}$.
%
%
The proof is by induction on the structure of $\cstr$.
\begin{alignat}{3}
\intertext{\quad \emph{Case: $\cstr \equiv \kvapp{\kvar}{y}$.}}
\bcoz{By definition of $\solk{\cstr}{\kvar}$}
  && \apply{\ssoln{\cstr}{\kvar}}{\kvapp{\kvar}{z}} & = \params{z} = \params{y} \label{eq:zeqy}\\
\bcoz{Assume}
  && \soln, \binds{x_i}{\pred_i} & \models \kvapp{\kvar}{y} \notag \\
\bcoz{Hence}
  && \soln, \binds{x_i}{\pred_i} & \models \params{z} = \params{y} \Rightarrow \kvapp{\kvar}{z} \notag \\
\bcoz{By~\ref{eq:zeqy}}
  && \soln, \binds{x_i}{\pred_i} & \models \apply{\ssoln{\cstr}{\kvar}}{\kvapp{\kvar}{z}} \Rightarrow \kvapp{\kvar}{z} \notag \\
\intertext{\quad \emph{Case: $\cstr \equiv \pred'$, such that $\kvar \not \in \pred'$.}}
\bcoz{By definition of validity}
  && \soln, \binds{x_i}{\pred_i} & \models \tfalse \Rightarrow \kvapp{\kvar}{z} \notag \\
\bcoz{By the definition of $\solk{\cstr}{\kvar}$}
  && \apply{\ssoln{\cstr}{\kvar}}{\kvapp{\kvar}{z}} & = \tfalse \notag \\
\bcoz{Hence,}
  && \soln, \binds{x_i}{\pred_i} & \models \apply{ \ssoln{\cstr}{\kvar}   }{\kvapp{\kvar}{z}} \Rightarrow \kvapp{\kvar}{z} \notag \\
\intertext{\quad \emph{Case: $\cstr \equiv \csand{\cstr_1}{\cstr_2}$.}}
\intertext{\quad \quad Let $\pred_j = \solk{\cstr_j}{\kvar}$ for $j = 1,2$}
\bcoz{Thus, by the definition of $\solk{\cstr}{\kvar}$}
  &&\apply{\ssoln{\cstr}{\kvar}}{\kvapp{\kvar}{z}} & = \pred_1 \vee \pred_2 \label{eq:def-or} \\
\bcoz{Assume, for $j = 1,2$}
  && \soln, \binds{x_i}{\pred_i} & \models \wedge_j \cstr_j \notag \\
\bcoz{That is, for $j = 1,2$}
  && \soln, \binds{x_i}{\pred_i} & \models \cstr_j \notag \\
\bcoz{By IH, for $j = 1,2$}
  && \soln, \binds{x_i}{\pred_i} & \models \pred_j \Rightarrow \kvapp{\kvar}{z} \notag \\
\bcoz{Hence,}
  && \soln, \binds{x_i}{\pred_i} & \models (\pred_1 \vee \pred_2) \Rightarrow \kvapp{\kvar}{z} \notag \\
\bcoz{Thus, by~\ref{eq:def-or},}
  && \soln, \binds{x_i}{\pred_i} & \models \apply{\ssoln{\cstr}{\kvar}}{\kvapp{\kvar}{z}} \Rightarrow \kvapp{\kvar}{z} \notag \\
\intertext{\quad \emph{Case: $\cstr \equiv \csimp{x}{\base}{\pred}{\cstr'}$.}}
\intertext{\quad \quad Let $\pred' = \solk{\cstr'}{\kvar}$.}
\bcoz{Thus, by the definition of $\solk{\cstr}{\kvar}$}
  && \apply{\ssoln{\cstr'}{\kvar}}{\kvapp{\kvar}{z}} & = \pred'                                  \label{eq:def-imp-1} \\
  && \apply{\ssoln{\cstr}{\kvar}}{\kvapp{\kvar}{z}}  & = \existsp{x}{\base}{\pred \wedge \pred'} \label{eq:def-imp-2} \\
\bcoz{Assume}
  && \soln, \binds{x_i}{\pred_i} & \models \csimp{x}{\base}{\pred}{\cstr'} \notag \\
\bcoz{By \rulename{Sat-Ext}}
  && \soln, \binds{x_i}{\pred_i}, \bind{x}{\pred} & \models \cstr' \notag \\
\bcoz{By IH}
  && \soln, \binds{x_i}{\pred_i}, \bind{x}{\pred} & \models \apply{\ssoln{\cstr'}{\kvar}}{\kvapp{\kvar}{z}} \Rightarrow \kvapp{\kvar}{z} \notag \\
\bcoz{By~\ref{eq:def-imp-1}}
  && \soln, \binds{x_i}{\pred_i}, \bind{x}{\pred} & \models \pred' \Rightarrow \kvapp{\kvar}{z} \notag \\
\bcoz{By \rulename{Sat-Ext}}
  && \soln, \binds{x_i}{\pred_i} & \models \csimp{x}{\base}{\pred}{\pred' \Rightarrow \kvapp{\kvar}{z}} \notag \\
\bcoz{That is,}
  && \soln, \binds{x_i}{\pred_i} & \models \csimp{x}{\base}{(\pred \wedge \pred'}{\kvapp{\kvar}{z})} \notag \\
\bcoz{As $x \not \in \kvapp{\kvar}{z}$}
  && \soln, \binds{x_i}{\pred_i} & \models (\existsp{x}{\base}{\pred \wedge \pred'}) \Rightarrow \kvapp{\kvar}{z} \notag \\ 
\bcoz{By~\ref{eq:def-imp-2}}
  && \soln, \binds{x_i}{\pred_i} & \models  \apply{\ssoln{\cstr}{\kvar}}{\kvapp{\kvar}{z}} \Rightarrow \kvapp{\kvar}{z} \notag
\end{alignat}
\end{proof}


\begin{proof} (Theorem~\ref{lem:defns})
  The proof is an induction on the structure of $\cstr$.
  \begin{alignat}{3}
  \intertext{\quad \emph{Case: $\cstr \equiv \kvapp{\kvar}{y}$.}}
    \bcoz{By definition}
    && \defns{\cstr}{\kvar}        & \doteq \kvapp{\kvar}{y} \notag \\
    \bcoz{and}
    && \ssoln{\cstr}{\kvar}(\kvar) & \doteq \lambda \params{x}. \wedge_i x_i = y_i \quad \mbox{ where } \params{x} = \args{\kvar} \notag \\
    \bcoz{Hence,}
    && \apply{\ssoln{\cstr}{\kvar}}{\defns{\cstr}{\kvar}}
                                   & \doteq \wedge_i y_i = y_i \label{eq:tauto1}\\
    \bcoz{As~\ref{eq:tauto1} is a tautology} 
    && \ssoln{\cstr}{\kvar}        & \models \defns{\cstr}{\kvar} \notag \\
  \intertext{\quad \emph{Case: $\cstr \equiv \pred$.}}
    \bcoz{By definition}
    && \defns{\cstr}{\kvar}        & \doteq \ttrue \notag \\
    \bcoz{Hence,}
    && \apply{\ssoln{\cstr}{\kvar}}{\defns{\cstr}{\kvar}}
                                   & \doteq \ttrue \label{eq:tauto2}\\
    \bcoz{As~\ref{eq:tauto2} is a tautology} 
    && \ssoln{\cstr}{\kvar}        & \models \defns{\cstr}{\kvar} \notag \\
\intertext{\quad \emph{Case: $\cstr \equiv \csand{\cstr_1}{\cstr_2}$.}}
\intertext{\quad \quad
  For $i \in \{ 1, 2\}$, let
    $\pred_i = \solk{\kvar}{\cstr_i}$,
    $\ssoln{\cstr_i}{\kvar}(\kvar) =  \lambda \params{x}. \pred_i$.
  }
  \bcoz{By definition}
    && \defns{\cstr}{\kvar}          & \doteq \defns{\cstr_1}{\kvar} \cup \defns{\cstr_2}{\kvar} \label{eq:union} \\
  \bcoz{By IH}
    && \ssoln{\cstr_i}{\kvar}                       & \models \defns{\cstr_i}{\kvar} \label{eq:ih-1} \\
  \intertext{\quad Let $\cstr' \in \defns{\cstr_i}{\kvar}$. As $\kvar$ \emph{only} in head,
   $\cstr' \equiv \csimpss{y}{\pred}{\kvapp{\kvar}{x}}$.}
  \bcoz{By \ref{eq:ih-1}}
    && \binds{y}{\pred} & \models \apply{\ssoln{\cstr_i}{\kvar}}{\kvapp{\kvar}{x}} \notag \\
  \bcoz{By Lemma~\ref{lem:weakening}}
    && \binds{y}{\pred} & \models \apply{\ssoln{\cstr_1}{\kvar}}{\kvapp{\kvar}{x}} \vee \apply{\ssoln{\cstr_2}{\kvar}}{\kvapp{\kvar}{x}} \notag \\
  \bcoz{That is}
    && \binds{y}{\pred} & \models \apply{\ssoln{\cstr}{\kvar}}{\kvapp{\kvar}{x}} \notag \\
  \bcoz{So for any $\cstr' \in \defns{\cstr_i}{\kvar}$}
    && \ssoln{\cstr}{\kvar} & \models \cstr' \notag \\
  \bcoz{Hence}
    && \ssoln{\cstr}{\kvar} & \models \defns{\cstr_i}{\kvar} \notag \\
  \bcoz{and hence, by~\ref{eq:union}}
    && \ssoln{\cstr}{\kvar} & \models \defns{\cstr}{\kvar} \notag \\
\intertext{\quad \emph{Case: $\cstr \equiv \csimp{x}{\base}{\pred}{\cstr'}$.}}
\intertext{\quad \quad
  Let $\pred' = \solk{\kvar}{\cstr'}$,
      $\soln' = \mapsingle{\kvar}{\lambda \params{z}. \pred'}$, and
      $\ssoln{\cstr}{\kvar} = \mapsingle{\kvar}{\lambda \params{z}. \existsp{x}{\base}{\pred \wedge \pred'}}$,
  }
  \bcoz{By IH}
    && \soln' & \models \defns{\cstr'}{\kvar} \label{eq:models-inner} \\
\intertext{\quad \quad Consider any arbitrary $\cstr'' \in \defns{\cstr'}{\kvar}$.}
  \bcoz{As $\kvar\in\head{\cstr''}$}
    && \cstr'' & \doteq \csimps{x_1}{\pred_1}{ \ldots \Rightarrow \csimps{x_n}{\pred_n}{\kvapp{\kvar}{y}}} \label{eq:inner-def}\\
  \bcoz{By~\ref{eq:models-inner}}
    && \soln'  & \models \csimps{x_1}{\pred_1}{ \ldots \Rightarrow \csimps{x_n}{\pred_n}{\kvapp{\kvar}{y}}} \notag \\
  \bcoz{By \rulename{Sat-Base}}
    && []      & \models \csimps{x_1}{\pred_1}{ \ldots \Rightarrow \csimps{x_n}{\pred_n}{ \SUBST{\pred'}{\params{z}}{\params{y}} }} \notag \\
  \bcoz{By \rulename{Sat-Ext}}
    && \binds{x_i}{\pred_i} & \models \SUBST{\pred'}{\params{z}}{\params{y}}  \label{eq:imp-a}\\
  \bcoz{Trivially,} 
    && \bind{x}{\pred} & \models \pred \label{eq:imp-b}\\
  \bcoz{Hence, by~\ref{eq:imp-a},~\ref{eq:imp-b}} 
    && \bind{x}{\pred}, \binds{x_i}{\pred_i} & \models {\pred \wedge \SUBST{\pred'}{\params{z}}{\params{y}}} \notag \\
  \bcoz{and so by definition of validity}
    && \bind{x}{\pred}, \binds{x_i}{\pred_i} & \models \existsp{x}{\base}{\pred \wedge \SUBST{\pred'}{\params{z}}{\params{y}}} \notag \\
  \bcoz{by~\ref{eq:inner-def}}
    && \ssoln{\cstr}{\kvar} & \models \csimp{x}{\base}{\pred}{\cstr''} \notag 
\end{alignat}
  \quad \quad As the above holds for an arbitrary $\cstr'' \in \defns{\cstr'}{\kvar}$, we get $\satisfies{\ssoln{\cstr}{\kvar}}{\defns{\cstr}{\kvar}}$.
\end{proof}

\begin{proof} (Theorem~\ref{thm:defns})
\begin{alignat}{3}
\bcoz{Let}
  && \scoped{\kvar}{\cstr} & \doteq \csimpss{x}{\pred}{\cstr'} \mbox{s.t. $\kvar \not \in \params{p}$} \label{eq:def-scoped} \\
\bcoz{By definition}
  && \osoln{\cstr}{\kvar} & = \ssoln{\cstr'}{\kvar} \notag \\
\bcoz{By Theorem~\ref{lem:defns}}
  && \osoln{\cstr}{\kvar} & \models \defns{\cstr'}{\kvar} \notag \\
\bcoz{By definition of validity}
  && \osoln{\cstr}{\kvar} & \models \csimpss{x}{\pred}{ (\defns{\cstr'}{\kvar}) } \notag \\
\bcoz{By Lemma~\ref{lem:flat-defn}}
  && \osoln{\cstr}{\kvar} & \models \defns{(\csimpss{x}{\pred}{\cstr'})}{\kvar} \notag \\
\bcoz{By (\ref{eq:def-scoped}) and Lemma~\ref{lem:scoped-definitions}}
  && \osoln{\cstr}{\kvar} & \models \defns{\cstr}{\kvar} \notag
\end{alignat}
\end{proof}


\begin{proof} (Theorem~\ref{thm:strongest})
By Theorem~\ref{thm:defns} it suffices to prove that
if   $\satisfies{\soln}{\cstr}$
then $\satisfies{\comp{\soln}{\osoln{\cstr}{\kvar}}}{\uses{\cstr}{\kvar}}$.
\begin{alignat}{3}
\bcoz{Let}
  && \scoped{\kvar}{\cstr} & \doteq \csimpss{y}{q}{\cstr'}
     \mbox{s.t. $\kvar \not \in \params{q}$ and $\osoln{\cstr}{\kvar} = \ssoln{\cstr'}{\kvar}$}
     \label{eq:def-scoped-1}\\
\bcoz{Assume}
  && \soln & \models \cstr \label{eq:use-sat-assm} \\
\bcoz{By (\ref{eq:def-scoped-1}),  Lemma~\ref{lem:scoped-sat}}
  && \soln & \models \csimpss{y}{q}{\cstr'} \label{eq:sol-models-cstr-prime} \\
\bcoz{By Lemma~\ref{lem:scoped-uses}, (\ref{eq:def-scoped-1})}
  && \uses{\cstr}{\kvar} & = \csimpss{y}{q}{(\uses{\cstr'}{\kvar})} \cup C \quad \mbox{where $\kvar \not \in C$}  \label{eq:use-sat-cases} \\ 
\bcoz{Hence, by (\ref{eq:scoped-use-a}) and (\ref{eq:scoped-use-b})}
  && \comp{\soln}{\osoln{\cstr}{\kvar}} & \models \uses{\cstr}{\kvar} \notag \\
\intertext{\quad \emph{Case: \quad $\satisfies{\comp{\soln}{\osoln{\cstr}{\kvar}}}{\csimpss{y}{q}{(\uses{\cstr'}{\kvar})}}$} (\ref{eq:scoped-use-a})}
\bcoz{By (\ref{eq:use-sat-assm},\ref{eq:use-sat-cases}), Lemma~\ref{lem:partition}}
  && \soln & \models  \csimpss{y}{q}{(\uses{\cstr'}{\kvar})} \label{eq:asm-case-b} \\
\bcoz{Let $\cstr'' \in \uses{\cstr'}{\kvar}$ s.t.}
  && \cstr'' & \equiv \csimpss{x_i}{\pred_i}{\pred} \mbox{ and $\kvar \not \in \pred$} \label{eq:asm-cstr-def} \\
\bcoz{By (\ref{eq:asm-case-b})}
  && \soln & \models \csimpss{y}{q}{\csimpss{x_i}{\pred_i}{\pred}} \notag \\
\bcoz{By \rulename{Sat-Ext}}
  && \soln, \binds{y}{q} & \models \csimpss{x_i}{\pred_i}{\pred} \notag \\
\bcoz{By (\ref{eq:sol-models-cstr-prime})}
  && \soln,\binds{y}{q} & \models \cstr' \notag \\
\bcoz{By (\ref{eq:def-scoped-1}), Lemma~\ref{lem:strong}, for $i=1 \ldots n$}
  && \soln,\binds{y}{q} & \models \apply{\ssoln{\cstr'}{\kvar}}{\pred_i} \Rightarrow \pred_i \notag \\
\bcoz{As $\osoln{\cstr}{\kvar} = \ssoln{\cstr'}{\kvar}$  (\ref{eq:def-scoped-1})}
  && \soln,\binds{y}{q} & \models \apply{\osoln{\cstr}{\kvar}}{\pred_i} \Rightarrow \pred_i \notag \\ 
\bcoz{By repeating Lemma~\ref{lem:strengthening} and $\kvar \not \in \pred$}
  && \soln,\binds{y}{q} & \models \apply{\osoln{\cstr}{\kvar}}{\csimpss{x_i}{\pred_i}{\pred}}  \notag \\
\bcoz{That is}
  && \soln,\binds{y}{q} & \models \apply{\osoln{\cstr}{\kvar}}{\cstr''} \notag \\
\bcoz{As by (\ref{eq:def-scoped-1}) $\kvar \not \in \params{q}$}
  && \soln & \models \apply{\osoln{\cstr}{\kvar}}{  \csimpss{y}{q}{\cstr''}} \notag \\
\bcoz{Thus, by (\ref{eq:asm-cstr-def})}
  && \soln & \models \apply{\osoln{\cstr}{\kvar}}{  \csimpss{y}{q}{\uses{\cstr'}{\kvar}}} \notag \\
\bcoz{And by Lemma~\ref{lem:valid-composition}}
  && \comp{\soln}{\osoln{\cstr}{\kvar}} & \models \csimpss{y}{q}{\uses{\cstr'}{\kvar}} \label{eq:scoped-use-a} \\
\intertext{\quad \emph{Case: \quad $\satisfies{\comp{\soln}{\osoln{\cstr}{\kvar}}}{C}$} (\ref{eq:scoped-use-b})}
\bcoz{By (\ref{eq:use-sat-assm}, \ref{eq:use-sat-cases}), Lemma~\ref{lem:partition}}
  && \soln & \models C \label{eq:soln-sat-c} \\
\bcoz{As $\kvar \not \in C$, $\domain{\osoln{\cstr}{\kvar}} = \{ \kvar \}$}
  && \apply{\osoln{\cstr}{\kvar}}{C} & = C \notag \\
\bcoz{By (\ref{eq:soln-sat-c})}
  && \soln & \models \apply{\osoln{\cstr}{\kvar}}{C} \notag \\
\bcoz{By Lemma~\ref{lem:valid-composition}}
  && \comp{\soln}{\osoln{\cstr}{\kvar}} & \models C \label{eq:scoped-use-b}
\end{alignat}
\end{proof}

\begin{lemma}[\textbf{Eliminate-Definitions}]\label{lem:elim-defs} 
  $\flatten{\elimk{\kvar}{\cstr}} = \apply{\osoln{\cstr}{\kvar}}{\uses{\cstr}{\kvar}}$.
\end{lemma}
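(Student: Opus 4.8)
The plan is to observe that this lemma is essentially an immediate corollary of the \elimsolsym lemma (Lemma~\ref{lem:subst}), after unfolding the definition of $\elimksym$. By definition (Fig.~\ref{fig:elimk}), $\elimk{\kvar}{\cstr} = \elimsol{\osoln{\cstr}{\kvar}}{\cstr}$, where the scoped solution $\osoln{\cstr}{\kvar} = \mapsingle{\kvar}{\lambda \params{x}. \solk{\kvar}{\cstr'}}$ and $\csimpss{x_i}{\pred_i}{\cstr'} = \scoped{\kvar}{\cstr}$. The first step is simply to record that, being built from $\mapsingle{\kvar}{\cdot}$, the assignment $\osoln{\cstr}{\kvar}$ is a single-variable map, i.e.\ $\domain{\osoln{\cstr}{\kvar}} = \{ \kvar \}$.

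With this in hand, I would instantiate Lemma~\ref{lem:subst} by taking $\soln \doteq \osoln{\cstr}{\kvar}$ and the constraint to be the same $\cstr$. Since $\domain{\soln} = \{\kvar\}$ and $\elimsol{\osoln{\cstr}{\kvar}}{\cstr} = \elimk{\kvar}{\cstr}$, the lemma yields directly that $\kvar \not\in \elimk{\kvar}{\cstr}$ and $\flatten{\elimk{\kvar}{\cstr}} = \apply{\osoln{\cstr}{\kvar}}{\uses{\cstr}{\kvar}}$, which is exactly the claim. The only point to verify is that the $\kvar$ appearing in $\uses{\cstr}{\kvar}$ on the right is the same variable being eliminated, which it is, so there is no mismatch between the hypothesis and conclusion of Lemma~\ref{lem:subst}.

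Consequently there is no real obstacle here: all the work has been discharged in Lemma~\ref{lem:subst}. Were that lemma not available, the proof would proceed by induction on the structure of $\cstr$, and the one delicate case would be the treatment of $\kvar$-occurrences under $\elimsolsym$: head-occurrences $\kvapp{\kvar}{y}$ are rewritten to $\ttrue$ and hence vanish from $\flatten{\cdot}$ (so the definitions $\defns{\cstr}{\kvar}$ disappear), while body-occurrences are substituted by $\apply{\soln}{\kvapp{\kvar}{y}}$ and survive (accounting for the $\apply{\osoln{\cstr}{\kvar}}{\cdot}$ applied to $\uses{\cstr}{\kvar}$). That case-split is precisely the content of Lemma~\ref{lem:subst}, so I would simply cite it rather than re-prove it.
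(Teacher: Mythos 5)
Your proposal is correct and is essentially identical to the paper's own proof: both unfold $\elimk{\kvar}{\cstr}$ into $\elimsol{\osoln{\cstr}{\kvar}}{\cstr}$, observe that $\domain{\osoln{\cstr}{\kvar}} = \{ \kvar \}$, and then instantiate Lemma~\ref{lem:subst} with $\soln \doteq \osoln{\cstr}{\kvar}$ to obtain the claimed equality. There is no gap; the extra remarks about what an inductive proof would look like are accurate but unnecessary, exactly as you note.
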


\begin{proof} (Lemma~\ref{lem:elim-defs})
The above follows from the observations that:
(a)~$\domain{\osoln{\cstr}{\kvar}} = \{ \kvar \}$,
(b)~$\elimk{\kvar}{\cstr} = \elimsol{\osoln{\cstr}{\kvar}}{\cstr}$, and
(c)~Lemma~\ref{lem:subst}, as $\domain{\osoln{\cstr}{\kvar}} = \{ \kvar \}$.
\end{proof}

\begin{proof} (Theorem~\ref{thm:elim})
Let $\cstr' = \elimk{\kvar}{\cstr} = \elimsol{\osoln{\cstr}{\kvar}}{\cstr}$.
We prove the two directions separately.

\smallskip

\emph{Case: $\cstr'$ is satisfiable implies $\cstr$ is satisfiable}
\begin{alignat}{3}
\bcoz{Assume $\cstr'$ is satisfiable with}
&& {\soln'} & \models {\cstr'} \notag \\
\bcoz{(by Lemma~\ref{lem:flatten})}
&& {\soln'} &  \models {\flatten{\cstr'}} \notag \\
\bcoz{(by Lemma~\ref{lem:elim-defs})}
&& {\soln'} &  \models {\apply{\osoln{\cstr}{\kvar}}{\uses{\cstr}{\kvar}}} \notag \\
\bcoz{(by Lemma~\ref{lem:valid-composition})}
&& {\comp{\soln'}{\osoln{\cstr}{\kvar}}} &  \models {\uses{\cstr}{\kvar}} \label{eq:sat-up} \\
\bcoz{(by Theorem~\ref{thm:defns}, Lemma~\ref{lem:sat-composition})}
&& {\comp{\soln'}{\osoln{\cstr}{\kvar}}} & \models {\defns{\cstr}{\kvar}} \label{eq:sat-dn} \\
\bcoz{By~\ref{eq:sat-up},~\ref{eq:sat-dn}, Lemma~\ref{lem:partition}}
&& {\comp{\soln'}{\osoln{\cstr}{\kvar}}} & \models \cstr   \quad \mbox{\ie~$\cstr$ is satisfiable.} \notag \\
\intertext{\quad \emph{Case: $\cstr$ is satisfiable implies $\cstr'$ is satisfiable}}
\bcoz{Assume $\cstr$ is satisfiable with}
&& {\soln} & \models {\cstr} \notag \\
\bcoz{by Theorem~\ref{thm:strongest}}
&& \comp{\soln}{\osoln{\cstr}{\kvar}} & \models \cstr  \notag \\
\bcoz{by Lemma~\ref{lem:partition}}
&& \comp{\soln}{\osoln{\cstr}{\kvar}} & \models \uses{\cstr}{\kvar}  \notag \\
\bcoz{by Lemma~\ref{lem:valid-composition}}
&& \soln & \models \apply{\osoln{\cstr}{\kvar}}{\uses{\cstr}{\kvar}}  \notag \\
\bcoz{by Lemma~\ref{lem:elim-defs}}
&& \soln & \models \flatten{\cstr'} \quad \mbox{\ie by Lemma~\ref{lem:flatten} $\cstr'$ is satisfiable.} \notag
\end{alignat}
\end{proof}

\end{document}